\documentclass[twoside,11pt]{article}

\usepackage[latin1]{inputenc}
\usepackage{amsmath}
\usepackage{amssymb}
\usepackage{graphicx}
\usepackage{float}
\usepackage[top=3cm, bottom=3cm, left=3cm, right=3cm]{geometry}
\usepackage{hyperref}
\usepackage{subfig}
\usepackage{multirow}
\usepackage{color}
\usepackage{version}
\usepackage{mathabx}

\usepackage{amsthm}
\usepackage{mathrsfs}

\pagestyle{headings}

\hyphenation{mo-de-li-sa-tion}

\def\tr{{\rm Tr}}
\def\e{{\rm e}}
\def\sp{{\rm sp}}
\def\Ent{{\rm Ent}}

\newcommand{\mb}{\mathbb}
\newcommand{\mc}{\mathcal}
\newcommand{\Hi}{\mathcal{H}}
\newcommand{\En}{\mathcal{E}}
\newcommand{\Sy}{\mathcal{S}}
\newcommand{\Lj}{\mathcal{L}}
\newcommand{\Res}{\mathcal{R}}

\def\un{{\mathchoice {\rm 1\mskip-4mu l} {\rm 1\mskip-4mu l} {\rm 1\mskip-4.5mu l} {\rm 1\mskip-5mu l}}}

\newcommand{\dr}{\partial_}
\newcommand{\al}{\alpha}
\newcommand{\bs}{\boldsymbol}

\newcommand{\mean}{\overline}
\newcommand{\Bo}{\mathcal{B}}
\newcommand{\s}{\sharp}
\newcommand{\R}{\mathbb{R}}
\newcommand{\N}{\mathbb{N}}
\newcommand{\U}{\mathcal{U}}
\renewcommand{\P}{\mathbb{P}}
\newcommand{\E}{\mathbb{E}}
\newcommand{\C}{\mathbb{C}}
\newcommand{\bj}{{\boldsymbol{j}}}
\newcommand{\bfs}{{\boldsymbol{s}}}
\newcommand{\Q}{\mathcal{Q}}
\newcommand{\bfzeta}{{\boldsymbol{\zeta}}}
\newcommand{\refer}{{\rm ref}}
\newcommand{\ds}{\displaystyle}
\newcommand{\rev}{{\rm rev}}


 \def\cB{{\mathcal B}} 
\def\cD{{\mathcal D}} \def\cE{{\mathcal E}} 
 \def\cH{{\mathcal H}} 
  \def\cL{{\mathcal L}}
\def\cM{{\mathcal M}}  
\def\cP{{\mathcal P}}  \def\cR{{\mathcal R}}
\def\cS{{\mathcal S}}

\theoremstyle{definition}

\theoremstyle{plain}
\newtheorem{defi}{Definition}[section]

\newtheorem{theo}{Theorem}[section]
\newtheorem*{hypo}{Assumption}

\newtheorem{prop}[theo]{Proposition}
\newtheorem{lemm}[theo]{Lemma}
\newtheorem{coro}[theo]{Corollary}

\theoremstyle{remark}
\newtheorem{rema}{Remark}[section]

\numberwithin{equation}{section}

\title{ 
		\textbf{Linear response theory and entropic fluctuations in repeated interaction quantum systems}
}

\author{Jean-Fran\c{c}ois Bougron$^{1,2}$, Laurent Bruneau$^{1}$
\\ \\ 
$^1$ D\'epartement de Math\'ematiques and UMR 8088,
CNRS and CY Cergy Paris Universit\'e\\
95000 Cergy-Pontoise, France. Email: laurent.bruneau@u-cergy.fr
\\ \\
$^2$ Univ. Grenoble Alpes, 
CNRS, Institut Fourier\\ 
38000 Grenoble, France. Email: jean-francois.bougron@univ-grenoble-alpes.fr
\\ 
}

\begin{document}

\maketitle

\begin{abstract}
We study Linear Response Theory and Entropic Fluctuations of finite dimensional non-equilibrium repeated interaction systems (RIS). More precisely, in a situation where the temperatures of the probes can take a finite number of different values, we prove analogs of the Green-Kubo fluctuation-dissipation formula and 
Onsager reciprocity relations on energy flux observables. Then we prove a Large Deviation Principle, or Fluctuation Theorem, and a Central Limit Theorem on the full counting statistics of 
entropy fluxes. We consider two types of non-equilibrium RIS: either the temperatures of the probes are deterministic and arrive in a cyclic way, or the temperatures of the probes are described by a sequence of i.i.d. random variables with uniform distribution over a finite set.
\end{abstract}


\section{Introduction}

In this paper we are interested in the linear response theory and entropic fluctuation for a particular class of open quantum systems called Repeated Interaction Systems (RIS), see Section \ref{ris} for a precise description. Our study fits in the wider framework of non-equilibrium quantum statistical mechanics. In this context, linear response theory and entropic fluctuation have attracted lot of attention in the last decades, see e.g. \cite{LS,DS,JOP1,JOP2,JOP3,JOP4,JPP,dRM,dR,JOPP,BPR} and references therein.

Repeated interaction systems consist in a small system $\cS$ coupled to an environment made of a chain of independent probes $\cE^n$ with which $\cS$ will interact in a sequential way, i.e. $\cS$ interacts with $\cE^1$ during the time interval $[0,\tau_1[$, then with $\cE^2$ during the interval $[\tau_1,\tau_1+\tau_2[$, etc. While $\cS$ interacts with a given probe $\cE^n$ the other ones evolve freely according to their intrinsic (uncoupled) dynamics. Formally, if $H_\cS$, $H_{\cE^n}$, denote the non-interacting hamiltonians of $\cS$ and the $\cE^n$'s and $V_n$ denotes the coupling operator between $\cS$ and $\cE^n$ then the hamiltonian of the full system is the time-dependent, piecewise constant, operator
\begin{equation}\label{intro:rishamiltonian}
H(t):= H_\cS+H_{\cE^n}+V_n + \sum_{p\neq n} H_{\cE^p}, \quad t\in[\tau_1+\cdots+\tau_{n-1}, \tau_1+\cdots+\tau_{n}[.
\end{equation}

In the simplest case all the probes are identical, that is each $\cE^n$ is a copy of the same $\cE$ with the same initial state $\rho_\cE$, e.g. a thermal state, and interacts with $\cS$ by means of the same coupling operator $V$ on $\cS + \cE$ for the same duration $\tau$. The dynamics restricted to the small system is shown to be determined by a map $\cL$, see (\ref{def:rdm}), which assigns $\rho(\tau)=\cL(\rho)$ to $\rho$ as the result of the interaction of $\cS$ with one subsystem $\cE$ for the duration $\tau$. Heuristically, from the point of view of the small system, all subsystems interacting in sequence with $\cS$ are equivalent, so that the result of $n\in\N$ repeated interactions amounts to iterating  $n$ times the map $\cL$ on the initial condition $\rho_\cS$. This expresses the Markovian character of repeated interactions in discrete time, see e.g. \cite{BJM} for an introduction to these RIS. 

The typical physical situation of repeated interaction models is that of the one atom maser, see e.g. \cite{FJM, MWM, RBH, WBKM}. Here, $\cS$ is the quantized electromagnetic field of a cavity through which a beam of atoms, the $\cE^n$'s, is shot in such a way that no more than one atom is present in the cavity at any time. Such systems play a fundamental role in the experimental and theoretical investigations of basic matter-radiation processes. On the mathematical side various aspects of RIS have been studied in the literature, see e.g. \cite{KM,AP,AJ,BJMas,BJMrd,HJPR}. We mention also \cite{BP,Bru14} for the analysis of a specific model related to the one-atom maser. 

In order to consider a non-equilibrium situation we shall naturally consider here the more interesting situation where the probes are not always identical but allowed to vary over a finite set, especially via the temperatures of their initial states $\rho_{\cE^n}$ . The picture one should have in mind is that the system $\cS$ is coupled to finitely many reservoirs $\cR_1,\ldots,\cR_M$ ($M\geq 2$) which are initially in thermal equilibrium at possibly different temperatures. The various probes are then associated to one of these $M$ reservoirs, all the probes of a given reservoir being identical. After $n$ interactions the state of $\cS$ is thus given by 
\begin{equation}\label{eq:introrisevolution}
\rho_n=\cL_{j_n}\circ\cdots\circ\cL_{j_1}(\rho)
\end{equation} 
where $j_1,\ldots,j_n\in\{1,\ldots,M\}$ describes the \emph{ordered} sequence of the probes. 

We are interested in the Linear Response Theory for the heat fluxes out (Green-Kubo formula, Onsager reciprocity relations and Central Limit theorem) as well as in the fluctuation of entropy production and the so-called fluctuation relations. The latter can be understood as a generalization of the Green-Kubo formula and Onsager relations for systems far from equilibrium \cite{Ga}. In the framework of quantum dynamical semigroups $\cL_t:= \e^{tL}$, $L=L_1+\cdots+L_M$ and where $L_j$ is the Lindblad generator describing the interaction with the $j$-th reservoir, those questions have been initiated in \cite{LS} and then more recently studied in \cite{DdRM,dRM,JPW}. Eq. (\ref{eq:introrisevolution}) suggests that the situation should be very similar, if not identical, for RIS except that we have a discrete-time dynamics. As we shall see this is only partly correct. If the global strategy of the proofs largely follows those in \cite{LS,JPW}, RIS however have several specificities that have to be taken into account.

The first peculiarity can easily be seen from (\ref{eq:introrisevolution}). As mentioned above there is a precise order in which $\cS$ interacts with the various probes. In order to make sure that $\cS$ interacts as much with each reservoir the first idea is to make the order of interactions cyclic: $\cS$ interacts first with a probe associated to $\cR_1$, then to $\cR_2$ and so on up to $\cR_M$ and then $\cR_1$ again etc. In this case, if it is possible to derive a Green-Kubo type formula we will see that the usual Onsager reciprocity relation will fail, and similarly for the fluctuation relation of entropy production. The reason is simply that the cyclic order of interactions breaks time-reversal invariance even if we suppose that each interaction is time-reversal invariant. Due to the cyclic order of the interactions it is for example reasonable to expect that a change of temperature in $\cR_1$ will have a greater influence on the heat flux in $\cR_2$ than a change of temperature in $\cR_2$ will have on the flux in $\cR_1$. To remedy this lack of global time-reversal invariance we shall therefore also consider the situation where the probes associated to the various reservoirs interact with $\cS$ in a random order. To make it simple we shall consider here the case where the probes are chosen independently at each time and with a uniform distribution (so that on average $\cS$ interacts as much with each reservoir). Our results can be easily generalized to the case of an arbitrary i.i.d. distribution. The case where the distribution of the probes is given by a Markov process will be considered in \cite{BJP}.

The second specificity is related to the fact that the RIS hamiltonian (\ref{intro:rishamiltonian}) is time-dependent. As a consequence, even in the ideal case where all the probes are identical, this may lead to a non-vanishing of entropy production which is usually considered as the signature of a non-equilibrium situation. In this paper we are interested in the response of the system to the presence of thermal forces. This forces us to impose some extra assumption called Assumption (\nameref{Non-Entanglement}), see Section \ref{ssec:NE}, which guarantees that the case where all the temperatures of the various reservoirs are equal is indeed an equilibrium situation. A general linear response theory for RIS should also take into account a departure from this assumption. It is however not clear how to quantify this or, said differently, what is the natural quantity one can associate to a generic RIS and the vanishing of which would correspond to the fulfilment of Assumption (\nameref{Non-Entanglement}), and we postpone this question to future work. Regarding this point we finally mention that the question of linear response theory for time-dependent quantum hamiltonians have been considered in \cite{DS}, in a weak-coupling regime, but there only the perturbation was time-dependent contrary to what happens in RIS.

Finally, to avoid technical issues we shall stick here to the case where all the subsystem's Hilbert spaces, for $\cS$ and the $\cE^n$'s, are finite dimensional. Most results can easily be extended to infinite dimension provided the various assumptions are adapted in an \emph{ad hoc} way, in particular the ergodic Assumption (\nameref{ergodicity}) of Section \ref{sec:ergodicity} has to require the existence of a spectral gap. It is however difficult to find physically relevant models to which these assumptions apply. The model for the one-atom maser studied in \cite{BP,Bru14} does not have a spectral gap for example. It is however possible to still prove the Green-Kubo formula and Onsager relations for this model, see \cite{Bo}. We also mention \cite{BDBP} which considers a RIS type model for the motion of a tight-binding electron and where an analog of the fluctuation relation is proven for the position increments of the electron.

The paper is organized as follows. In Section \ref{Preliminaries} we briefly recall some basic concepts of open systems. In Section \ref{ris} we describe the non-equilibrium RIS model and state the various assumptions which we will use. We will in particular describe in more detail the above mentioned Assumption (\nameref{Non-Entanglement}) and discuss its origin and some of its consequences. Section \ref{lrtentropy} is devoted to the linear response theory and in particular the derivation of a Green-Kubo formula and of Onsager relations. These are stated in Theorem \ref{gko}.  In Section \ref{fluc} we consider the fluctuation of entropy production and prove a fluctuation relation for RIS. We also complement linear response theory with a Central Limit Theorem. Our main results in this section are Theorems \ref{theojpw}, \ref{momentstheo}, \ref{flucresults} and \ref{clt}. Finally the proofs are given in Section \ref{proofs}.

\medskip

\noindent {\bf Acknowledgements.} This research was supported by the \emph{Agence Nationale de la Recherche} through the grant NONSTOPS (ANR-17-CE40-0006) and by the \emph{Initiative d'excellence Paris-Seine}. The research of JFB is partially funded by the Cross Disciplinary Program ``Quantum Engineering Grenoble''. LB warmly thanks UMI-CRM of Montreal for financial support and McGill University for its hospitality during an earlier stage of this work.


\section{Preliminaries}\label{Preliminaries}


\subsection{Observables, states and their evolution}

Throughout the paper $\Hi$ will denote a finite dimensional Hilbert space. We denote by $\mc{B} (\Hi)$ and $\mc{B}_1(\Hi)$ the spaces of bounded and trace-class linear operators on $\Hi$, respectively. We denote by $\un \in \mc{B} (\Hi)$ the identity operator and $\mc{B}^+ (\Hi)$ and $\mc{B}_1^+ (\Hi)$ the sets of positive elements of $\mc{B} (\Hi)$ and $\mc{B}_1 (\Hi)$ respectively. An observable on $\Hi$ is a self-adjoint element in $\mc{B} (\Hi)$. States on $\Hi$ are positive, unital and linear functionals on $\mc{B}(\Hi)$ which are identified with positive trace one elements of $\mc{B}_1 (\Hi)$: for $\rho \in \mc{B}_1 (\Hi)$ and $A\in \cB(\cH)$ we shall thus write either $\rho(A)$ or $\tr(\rho A)$ the expectation value of the observable $A$ in the state $\rho$. A state is faithful if $\forall A \in \mc{B}^+ (\Hi), A \neq 0 \Rightarrow \rho (A) > 0$ which is equivalent to $\rho\in \cB_1^+(\cH)$ being positive definite.

In the markovian description of open quantum systems, the evolution of a state is described by a completely positive and trace-preserving (CPTP) map (Schr\"odinger picture) while the one of an observable is described by a completely positive and unital map (Heisenberg picture), see Section \ref{open quantum systems}. A bounded linear map $\Phi$ acting on $\mc{B}(\Hi)$ is called positive (or positivity preserving) if for any $A \in \mc{B}^+  (\Hi)$, $\Phi (A)\in \mc{B}^+(\Hi)$, and it is called unital if $\Phi(\un) = \un$. $\Phi$ is said to be completely positive if for any $d > 0$, $\Phi \otimes \un_{\mc{B}(\mb{C}^d)}$ is a positive map. A linear map $\Phi$ acting on $\mc{B}_1 (\Hi)$ is called trace-preserving if $\forall \rho \in \mc{B}_1 (\Hi), \tr(\Phi(\rho)) = \tr(\rho)$. If $\Phi\in \mc{B}(\mc{B}_1(\Hi))$ then $\Phi^* \in \mc{B}(\mc{B}(\Hi))$ denotes its dual map, i.e. $\tr(\Phi(\rho) A)=\tr(\rho \, \Phi^*(A))$ for any $\rho\in \cB_1(\cH)$ and $A\in \cB(\cH)$. Note that $\Phi$ is trace-preserving iff $\Phi^*$ is unital. 

If $\cL$ is a CPTP map then $\ds (\cL^n)_{n\in\N}$ is called a quantum dynamical semi-group. Obviously the large time ($n\to\infty$) limit of quantum dynamical semi-groups is closely related to the spectral properties of the map $\cL$. It is known that positive maps satisfy Perron-Frobenius type results, see e.g. \cite{EHK}. In particular their spectral radius is always an eigenvalue and there is a non-negative corresponding eigenvector. If moreover $\cL$ is CPTP then its spectral radius is $1$ so that $\cL$ admits an invariant state (in finite dimension). The notion of primitive CPTP map will play an important in the paper. Recall that any completely positive map can be written in the form
\begin{equation}\label{eq:kraus}
\cL(\cdot) = \sum_{i\in I} V_i \, \cdot \, V_i^*,
\end{equation}
where $I$ is an at most countable set ($I$ is finite in finite dimension). Such a form is called a Kraus representation of $\cL$ \cite{Kr}.
\begin{defi}\label{def:primitive} Let $\cL$ be a completely positive map on $\cB(\cH)$ given by (\ref{eq:kraus}). Then $\cL$ is primitive if there exists $n\in \N$ such that ${\rm Span}\left\{V_{i_1}\ldots V_{i_n}\, |\, i_1,\ldots,i_n\in I \right\}=\cB(\cH)$.
\end{defi}
\begin{rema} There are actually several equivalent definitions of primitivity for positive maps. The one given here is the simplest for our purpose. We refer the reader to e.g. \cite{Wo} for a more detailed discussion on the subject.
\end{rema}

The importance of the notion of primitive map for our purpose is due to the following
\begin{prop}\label{spprop} Let $\cL$ be a completely positive map  and denote by $r$ its spectral radius. $\cL$ is primitive iff its spectral radius is a simple dominant eigenvalue, i.e. all other eigenvalues satisfy $|\lambda|<r$, with positive definite left and right eigenvectors.
\end{prop}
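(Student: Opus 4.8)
The plan is to connect both properties through the notion of a \emph{positivity improving} map, i.e. a positive map $\Phi$ on $\cB(\cH)$ with $\Phi(X)>0$ (positive definite) for every $X\in\cB^+(\cH)\setminus\{0\}$. Replacing $\cL$ by $\cL/r$ rescales each length-$n$ product of Kraus operators by $r^{-n/2}$ and each eigenvalue by $r^{-1}$, so it changes neither primitivity nor the spectral statement; since a primitive, resp. spectrally nondegenerate, completely positive map is nonzero one has $r>0$, and I would thus reduce to the case $r=1$.

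Assume first that $\cL$ is primitive, so ${\rm Span}\{V_{i_1}\cdots V_{i_n}\}=\cB(\cH)$ for some $n$. The Kraus operators of $\cL^n$ are exactly the products $W=V_{i_1}\cdots V_{i_n}$, whence for $X\in\cB^+(\cH)$ and a unit vector $\chi$ one has $\langle\chi,\cL^n(X)\chi\rangle=\sum_W\|X^{1/2}W^*\chi\|^2$; this vanishes iff $W^*\chi\in\ker X$ for all $W$, and since $\{W^*\}$ spans $\cB(\cH)$ the vectors $W^*\chi$ span $\cH$, forcing $X=0$. Thus $\cL^n$ is positivity improving. I would then invoke the Perron--Frobenius theorem for positivity improving maps (\cite{EHK}): conjugating by the square root of the positive definite Perron eigenvector $X_0$ turns $\cL^n$ into a unital positivity improving map, for which the top eigenvalue $1$ is classically known to be algebraically simple and strictly dominant, with positive definite left and right eigenvectors. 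Finally, because the spectrum of $\cL^n$ consists of the $n$-th powers of that of $\cL$, simplicity and strict dominance of $1$ for $\cL^n$ force $1$ to be the unique modulus-one eigenvalue of $\cL$ and to be simple, with the same eigenvectors; this is the forward implication.

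For the converse, let $1$ be a simple, strictly dominant eigenvalue of $\cL$ with positive definite right and left eigenvectors $\rho_+,\ell_+$ normalised by $\tr(\ell_+\rho_+)=1$. Writing $\cL=P+R$ with $P(X)=\tr(\ell_+X)\,\rho_+$ the rank-one spectral projection and $\mathrm{sp}(R)\subset\{|z|<1\}$, one has $\cL^n=P+R^n\to P$. On the compact set of states, $\tr(\ell_+X)\geq\lambda_{\min}(\ell_+)>0$ and $\rho_+\geq\lambda_{\min}(\rho_+)\un$, so $P(X)\geq\delta\un$ uniformly; since $\|\cL^n(X)-P(X)\|\leq\|R^n\|\to0$ uniformly over states, $\cL^n(X)\geq(\delta-\|R^n\|)\un>0$ for all large $n$. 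Hence $\cL^n$ is positivity improving for some $n$.

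It remains to pass from positivity improving to primitivity, and this is the step I expect to be the crux. Positivity improving of $\cL^n$ rules out a common nonzero proper invariant subspace $U$ of the $V_i$: such a $U$ would be invariant for every length-$n$ product, and then $\cL^n(P_U)$ would vanish in the quadratic form on $U^\perp$, contradicting strict positivity. By Burnside's theorem the $V_i$ therefore generate $\cB(\cH)$ as an algebra, i.e. $\sum_m{\rm Span}\{V_{i_1}\cdots V_{i_m}\}=\cB(\cH)$. The genuinely hard point is to upgrade this to a \emph{single} length $N$ with ${\rm Span}\{V_{i_1}\cdots V_{i_N}\}=\cB(\cH)$: this is precisely the quantum Wielandt inequality, which I would invoke from \cite{Wo} rather than reprove. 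Its mechanism is that, once $\un$ lies in some ${\rm Span}\{V_{i_1}\cdots V_{i_k}\}$, the spans of lengths $k,2k,\dots$ become nested and, by the algebra generation just shown, eventually exhaust $\cB(\cH)$.
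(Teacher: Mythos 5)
The paper contains no proof of Proposition \ref{spprop} to compare against: it is stated as a known Perron--Frobenius-type result, and the remark preceding Definition \ref{def:primitive} defers precisely to \cite{Wo} (with the spectral input from \cite{EHK}). Your reconstruction is the standard argument and is essentially correct. The forward direction is clean: the span condition at length $n$ gives $\langle\chi,\cL^n(X)\chi\rangle=\sum_W\|X^{1/2}W^*\chi\|^2$, hence $\cL^n$ positivity improving; the Perron--Frobenius theorem of \cite{EHK} plus conjugation to a unital map handles $\cL^n$; spectral mapping descends to $\cL$. The converse reduction via the Riesz decomposition $\cL=P+R$, the uniform lower bound $P(X)\geq\delta\un$ over states and $\|R^n\|\to0$, correctly yields that $\cL^n$ is eventually positivity improving (note your use of the rank-one Riesz projection reads ``simple'' as algebraic simplicity, which is the intended meaning here). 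You also correctly isolate the real crux: passing from positivity improving back to the single-length span condition of Definition \ref{def:primitive} is exactly the quantum Wielandt inequality, and invoking it from \cite{Wo} is legitimate --- indeed it is the same citation the paper itself relies on.

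Three small repairs. (i) Your justification of $r>0$ --- ``a nonzero completely positive map has $r>0$'' --- is false: with a nilpotent Kraus operator $V$, the map $\cL(X)=VXV^*$ is nonzero yet $\cL^d=0$, so $r=0$. What is true, and what you need, is that each hypothesis separately forces $r>0$: under primitivity, the nonnegative Perron eigenvector $X_0$ from \cite{EHK} would satisfy $\cL^n(X_0)=r^nX_0=0$, contradicting positivity improvingness; under the spectral hypothesis, $\cL(\rho_+)=0$ with $\rho_+$ positive definite forces $\cL=0$ by positivity (every $X\geq 0$ satisfies $X\leq c\rho_+$), which is incompatible with simplicity unless $\dim\cH=1$, the one degenerate case the proposition tacitly excludes. (ii) In the descent from $\cL^n$ to $\cL$, spectral mapping alone leaves open that the unique peripheral eigenvalue of $\cL$ is some other $n$-th root of unity; close this by noting that $\cL$ commutes with $\cL^n$, hence preserves the line $\C X_0=\ker(\cL^n-1)$, so $\cL X_0=cX_0$ with $c^n=1$, and positivity of $\cL$ together with $X_0>0$ forces $c\geq 0$, hence $c=1$. (iii) Your sketched ``mechanism'' for Wielandt (padding once $\un$ lies in some length-$k$ span) is not by itself a proof, and the existence of such a $k$ already uses positivity improving rather than mere irreducibility --- irreducible-but-not-primitive channels, the quantum analogue of periodic irreducible matrices, show that Burnside alone cannot suffice, as you rightly sensed; since you invoke the result rather than reprove it, this costs nothing.
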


When $\cL$ is CPTP its spectral radius is $1$ and its dual map $\cL^*$ is unital, so that $\un$ is a left eigenvector. Hence the above proposition can simply be rephrased as ``a CPTP map $\cL$ is primitive if and only if $1$ is a simple dominant eigenvalue and $\cL$ admits a (unique) faithful invariant state''. The notion of primitive CPTP map is thus immediately related to strong ergodic properties of the corresponding quantum dynamical semi-group. 
\begin{prop}\label{prop:ergodicprimitive} Let $(\Lj^n)_{n\in\mb{N}}$ be a quantum dynamical semigroup. $\cL$ is primitive if and only if for any state $\rho$ one has 
$\ds \lim_{n \rightarrow +\infty} \Lj^n (\rho) = \rho_+,$  where $\rho_+$ is the unique faithful invariant state of $\cL$. In other words $\rho_+$ is mixing for the semi-group generated by $\cL$.
\end{prop}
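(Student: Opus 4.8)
The plan is to reduce everything to the spectral characterization of primitivity already available in Proposition \ref{spprop}, combined with two standard facts about CPTP maps: the spectral radius equals $1$, and the dual map is unital so that $\un$ is a left eigenvector for the eigenvalue $1$. I will also use freely that on a finite dimensional space pointwise convergence of linear maps is equivalent to operator-norm convergence, and that the states linearly span $\cB_1(\cH)$.

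For the direct implication, suppose $\cL$ is primitive. By Proposition \ref{spprop}, $1$ is a simple dominant eigenvalue with positive definite right eigenvector, which normalized to trace one is the faithful invariant state $\rho_+$, and positive definite left eigenvector; since $\cL$ is CPTP this left eigenvector is $\un$. Writing the Jordan (spectral) decomposition $\cL = P_1 + N$, where $P_1$ is the spectral projection at the eigenvalue $1$ and $N$ carries the rest of the spectrum, simplicity identifies $P_1$ as the rank one map $P_1(\cdot)=\tr(\cdot)\,\rho_+$ (it is determined by the left and right eigenvectors through the trace pairing, normalized by $\tr(\rho_+)=1$), while dominance gives that the spectral radius of $N$ is $<1$. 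Hence $\cL^n=P_1+N^n\to P_1$ in norm, and evaluating at a state $\rho$ yields $\cL^n(\rho)\to\tr(\rho)\,\rho_+=\rho_+$.

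For the converse, assume $\cL^n(\rho)\to\rho_+$ for every state $\rho$, with $\rho_+$ faithful and invariant. By linearity, and because every trace-class operator is a linear combination of states, this extends to $\cL^n(A)\to\tr(A)\,\rho_+=:P(A)$ for all $A\in\cB_1(\cH)$, so $\cL^n\to P$ in norm with $P$ the rank one projection onto $\C\rho_+$. It then suffices to verify the hypotheses of Proposition \ref{spprop}. First, $\rho_+$ is a positive definite right eigenvector for the eigenvalue $1$ and, $\cL$ being CPTP, $\un$ is a positive definite left eigenvector for $1$. To obtain dominance and simplicity I exploit the convergence $\cL^n\to P$: an eigenvector $v$ with $|\lambda|=1$ and $\lambda\neq1$ would give $\cL^n v=\lambda^n v$, which does not converge; a nontrivial Jordan block at $\lambda=1$ would force $\|\cL^n\|$ to grow at least linearly in $n$, again contradicting convergence; and since the limit $P$ has rank one, the eigenspace at $1$ is one dimensional. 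Thus $1$ is a simple dominant eigenvalue with positive definite eigenvectors, and Proposition \ref{spprop} gives that $\cL$ is primitive. The final mixing assertion is then only a rephrasing of the established convergence and requires nothing more.

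I expect the converse to be the main obstacle: the delicate point is to upgrade the purely dynamical hypothesis (pointwise convergence on states) into the full spectral picture needed by Proposition \ref{spprop}, namely ruling out peripheral eigenvalues different from $1$ and excluding Jordan blocks at $1$, which is exactly where the norm convergence $\cL^n\to P$ and the rank of $P$ do the work.
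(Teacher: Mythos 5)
Your proof is correct and takes essentially the approach the paper intends: the paper states this proposition without a separate proof, as an immediate consequence of the spectral characterization in Proposition \ref{spprop}, and your argument (the spectral splitting $\cL=P_1+N$ with $P_1(\cdot)=\tr(\cdot)\,\rho_+$ and ${\rm sr}(N)<1$ for the forward direction; ruling out peripheral eigenvalues and Jordan blocks at $1$ via the norm convergence $\cL^n\to P$ for the converse) is exactly that standard spectral argument. You also correctly treat the faithfulness of $\rho_+$ as part of the hypothesis in the converse, which is essential since a map such as $\rho\mapsto\tr(\rho)\,\rho_0$ with $\rho_0$ pure is mixing without being primitive.
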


A slightly stronger notion is sometimes useful. A map $\cL$ is called positivity improving if for any $X\in \cB^+(\cH)\setminus\{0\}$ its image $\Lj(X)$ is positive definite. It is easy to see that if a CP map is positivity improving then it is primitive. 

Finally, if $\Hi_1$, $\Hi_2$ are two finite dimensional Hilbert spaces and $\rho\in \cB_1(\cH_1\otimes \cH_2)$, we denote by $\rho_1 := \tr_{\Hi_2} (\rho)$ the partial trace of $\rho$ with respect to $\cH_2$. $\rho_1$ is the unique element in $\Bo_1 (\Hi_1)$ such that for all $A \in \Bo (\Hi_1)$ one has $\tr (\rho \times A \otimes \un_{\Hi_2}) = \tr (\rho_1 A)$.

Similarly fix $\rho_{2} \in \Bo_1 (\Hi_2)$ and let $A \in \Bo (\Hi_1 \otimes \Hi_2)$. Then  $A_1:=\tr_{\Hi_2} (\un_{\Hi_1} \otimes \rho_2 \times A)$ is the unique element in $\Bo (\Hi_1)$ such that for any $\rho_1 \in \Bo_1 (\Hi_1)$, $\tr (\rho_1 \otimes \rho_2 \times A) = \tr (\rho_1 A_1)$. $A_1$ is called the partial trace of $A$ w.r.t. the state $\rho_2$, and will be denoted $A_1 = \tr_{\rho_2} (A)$.


\subsection{Open quantum systems}\label{open quantum systems}

A quantum system $\cS$ is said to be open when it interacts with another quantum system $\En$. $\Sy$ is sometimes called the small system, $\En$ the environment (which can further be made of several components), and $\Sy + \En$ is the global system. If $\cH_\cS$ and $\cH_\cE$ are the Hilbert spaces of $\cS$ and $\cE$ respectively then the Hilbert space of the global system is $\Hi_{\Sy + \En}=\cH_\cS\otimes\cH_\cE$, and if the joint system $\Sy + \En$ is in the state $\rho_{\Sy + \En} \in \mc{B}_1 (\Hi_{\Sy} \otimes \Hi_{\En})$ then $\Sy$ is in the reduced state $\rho_\cS = \tr_{\Hi_{\En}} (\rho_{\Sy + \En})$. Similarly, if the system is initially decoupled, i.e. $\Sy + \En$  is in a state $\rho_{\Sy + \En} = \rho_\cS \otimes \rho_{\En}$ where $\rho_\cS$ and $\rho_{\En}$ are the states of $\cS$ and $\En$ respectively, then for an observable $A \in \mc{B} (\Hi_{\Sy} \otimes \Hi_{\En})$ of the global system, the observable ``seen by $\Sy$'' is $\tr_{\rho_{\En}} (A) = \tr_{\Hi_{\En}} (\un \otimes \rho_{\En} \times A)$.

The non-interacting dynamics of $\Sy$ and $\En$ are described by hamiltonians $H_\cS$ and $H_\cE$ respectively and the interaction between $\cS$ and $\En$ by some interaction operator $V$ acting on $\Hi_{\Sy} \otimes \Hi_{\En}$. Then $H:= H_{\Sy} \otimes \un + \un \otimes H_{\En} + V$ is the hamiltonian for the joint evolution. In the sequel we will often omit the inessential factors $\un$ in the tensor products.

Fix an initial (or reference) state $\rho_\cE$ of the environment and suppose the system is in some initially decoupled state $\rho \otimes \rho_{\En}$. Then the state of the global system $\Sy + \En$ after some time $\tau>0$ is $U \times \rho \otimes \rho_{\En} \times U^*$ where $U:=e^{-i\tau H}$. Hence, the small system $\cS$ is in the state
\begin{equation}\label{def:rdm}
\Lj (\rho) = \tr_{\Hi_{\En}} (U \times \rho \otimes \rho_{\En} \times U^*).
\end{equation}
One easily checks that $\cL$ defines a CPTP map on $\mc{B}_1^+ (\Hi_{\Sy})$. $\Lj$ is called the reduced dynamics map of $\Sy$ associated to the open quantum system $\cS+\cE$ and for the duration $\tau$. Its dual map describes the evolution of observables $A \in \mc{B} (\Hi_{\Sy})$ and is given by 
\begin{equation*}
\Lj^* (A) = \tr_{\rho_{\En}} (U^* \times A \otimes \un \times U).
\end{equation*}


\subsection{Entropy production in an open system}\label{entropy}

The entropy observable of a quantum system in a faithful state $\rho$ is defined as $S(\rho) := - \log \rho$. The von Neumann entropy of $\rho$ is then the expectation value of $S(\rho)$ in the state $\rho$, i.e.  $\Ent(\rho) := -\tr (\rho \log \rho) = \rho (S(\rho))$. The relative entropy of $\rho$ relatively to the faithful state $\nu$ is then
\begin{equation*}
\Ent(\rho \, |\,  \nu) := \tr (\rho(\log \rho - \log \nu)) = \rho(S(\rho) - S(\nu)).
\end{equation*}
It is well-known that $\Ent(\rho, \nu)\geq 0$ and $\Ent(\rho, \nu)=0$ if and only if $\rho = \nu$.

Consider an open system $\cS+\cE$ as in Section \ref{open quantum systems}, interacting for a duration $\tau>0$. Denote by $U$ the unitary evolution of the joint system and $\cL$ the reduced dynamics on $\cS$ as given by (\ref{def:rdm}). Assume moreover that the system $\En$ is initially at equilibrium, i.e. that $\ds \rho_{\En}:= \frac{e^{-\beta H_{\En}}}{\tr(e^{-\beta H_{\En}})}$ is a Gibbs state at some inverse temperature $\beta$. Then a simple calculation shows that the following entropy balance equation holds, see also \cite{HJPR},
\begin{equation}\label{ebe}
\Ent(\Lj(\rho)) - \Ent(\rho) = \Ent(U \times \rho \otimes \rho_{\En} \times U^*\, |\, \Lj (\rho) \otimes \rho_{\En}) - \beta\tr\big((\rho \otimes \rho_{\En}) (U^* H_{\En} U - H_{\En})\big).
\end{equation}
The second term of the right-hand side can be interpreted as the entropy flux coming from $\En$ (it is $\beta$ times the energy variation in $\cE$). This motivates the following
\begin{defi} We define the entropy production of $\Sy$ during its interaction with $\En$ by
\begin{equation}\label{def:entropyprod}
\sigma := \Ent(U \times \rho \otimes \rho_{\En} \times U^*\, |\, \Lj (\rho) \otimes \rho_{\En}).
\end{equation}
\end{defi}

The entropy production is meant to be always nonnegative and to be zero iff there is no energy (or ``heat'') flux between $\Sy$ and $\En$. Indeed one easily gets that, see e.g. \cite{JPlan}, 
\begin{equation*}\label{eq:entropyprod}
 \sigma = 0 \ \Longleftrightarrow \ \tr \big(\rho \otimes \rho_{\En} (U^* H_{\En} U - H_{\En})\big) =  \Ent(\Lj(\rho)) - \Ent(\rho) = 0.
\end{equation*}
Note also that $\sigma=0$ iff $U \times \rho \otimes \rho_{\En} \times U^* = \Lj (\rho) \otimes \rho_{\En}$ that is if the system is initially in the non-entangled state $\rho\otimes \rho_\cE$  the interaction leaves them non-entangled with $\cE$ unchanged.

\begin{defi} Let $U$ be a unitary operator on $\cH_\cS\otimes\cH_\cE$, $\rho\in\cB_1(\cH_\cS)$ and $\rho_\cE\in\cB_1(\cH_\cE)$. We say that the triple $(U, \rho, \rho_{\En})$ satisfies the Non-Entanglement condition if
\begin{equation}\label{def:NEtriple}
U \times \rho \otimes \rho_{\En} \times U^* = \cL(\rho) \otimes \rho_{\En},
\end{equation}
in other words if the entropy production of the associated open system vanishes.
\end{defi}

The importance of this Non-Entanglement condition will be made more transparent in Section \ref{ssec:NE}. We mention that this condition also appears in \cite{HJPR} in the context of adiabatic RIS. At this point we simply note that if $(U, \rho, \rho_{\En})$ satisfies the Non-Entanglement condition and $\rho$ is an invariant state for $\cL$ then $\rho \otimes \rho_{\En}$ is an invariant state for the interacting dynamics. As a consequence the entropy observable
\begin{equation}\label{eq:entropyconservation}
 S(\rho\otimes\rho_{\En})=-\log(\rho)\otimes\un-\un\otimes\log(\rho_{\En})
\end{equation}
is a constant of motion.


\section{Non-equilibrium Repeated Interaction Systems}\label{ris}


\subsection{Repeated Interaction Systems (RIS)}\label{ris1}

Repeated Interaction Systems form a specific class of open quantum systems in which the environment $\cE$ has the following structure $\cE=\En^1+\En^2+\cdots\En^n+\cdots$ where $(\cE^n)_n$ is a sequence of quantum subsystems with the associated Hilbert spaces $\Hi_{\En^n}$ and free hamiltonians $H_{\En^n}$. $\Sy$ will be called the \textit{small system} and the $\En^n$'s will be called the \textit{probes}. The Repeated Interaction dynamics consists in the joint evolution of $\Sy$ and $\En^1$ for a duration $\tau_1$, immediately followed by the joint evolution of $\Sy$ and $\En^2$ for a duration $\tau_2$, etc. For any $n,p\in\N^*$, $\En^n$ and $\En^p$ are disjoint and never interact directly. The interaction between $\cS$ and $\cE^n$ is described by the interaction operator $V_n$. Hence during the $n$-th interaction the coupled hamiltonian is
\begin{equation*}
 H_n:=H_{\Sy}\otimes\un+\un\otimes H_{\En^n}+V_n,
\end{equation*}
and the unitary propagator for the coupled dynamics describing the $N$ first interactions is thus given by $U_NU_{N-1}\cdots U_1$ where $\ds U_n:=\e^{-i\tau_nH_n}$, $n=1,\ldots,N$.
Note that by construction the various $H_{\cE^n}$ commute and that $[H_{\cE^n},H_p]=0$ whenever $n\neq p$.

If $\cS$ is initially in the state $\rho$ and the $n$-th probe is in the state $\rho_{\cE^n}$ which we assume to be invariant for the free dynamics of $\cE^n$, then the state of $\cS$ after $N$ interactions is given by
\begin{equation}\label{rishamiltonianform}
\rho_N = \tr_{\cH_{\cE^1}\otimes \cdots\otimes \cH_{\cE^N}} \left( U_N\cdots U_1 \ \rho \otimes \rho_{\cE^1} \otimes \cdots\otimes \rho_{\cE^N} \ U_1^* \cdots U_N^* \right).
\end{equation}
It is easy to see that the repeated interaction structure induces the following markovian behaviour, see e.g. \cite{BJM} for more details, 
\[
\rho_N = \tr_{\cH_{\cE^N}} \left( U_N\, \rho_{N-1} \otimes \rho_{\cE^N}\, U_N^*\right) =: \cL_N(\rho_{N-1}),
\]
so that if, for any $n$, the map
\begin{equation*}
\cL_n (\rho) := \tr_{\Hi_{\En^n}} (U_n \times \rho \otimes \rho_{\En^n} \times U_n^*),
\end{equation*}
denotes the reduced dynamics map associated to the interaction with the probe $\cE^n$, we have
\begin{equation}\label{rismarkovianform}
\rho_N = \cL_N\circ\cdots\circ\cL_1(\rho).
\end{equation}

In the simplest situation of repeated interaction systems the probes are copies of an identical system $\cE$, i.e. $\cH_{\cE^n}\equiv\cH_\cE$, $H_{\cE^n}\equiv H_\cE$, $V_n\equiv V$, 
$\tau_n\equiv\tau$, $\rho_{\cE^n}\equiv\rho_\cE$. Then the reduced dynamics map coincide, $\cL_n\equiv \cL$, and the evolution of the system $\cS$ is given by $\rho_N=\cL^N(\rho)$. We shall refer to this situation as an equilibrium situation, the various probes being considered as elements of a single reservoir. One may think of the initial states of the probes as Gibbs states at some common inverse temperature $\beta$.


\subsection{Non-equilibrium RIS}\label{sec:noneqris}

In this paper we are interested in understanding RIS in a non-equilibrium situation. The picture one should have in mind is that the system $\cS$ is coupled to several reservoirs $\cR_1,\ldots,\cR_M$ ($M\geq 2$) which are initially in thermal equilibrium but with possibly different temperatures. The various probes are then associated to one of these $M$ reservoirs.

More precisely, we fix a finite set of quantum systems $\En_j$, $j=1,\ldots,M$, with the associated Hilbert spaces $\Hi_{\En_j}$, free hamiltonians $H_{\En_j}$, interaction operators $V_j$, interaction times $\tau_j$ and initial states $\rho_{\En_j}$. Then $H_j$, $U_j$ and $\cL_j$ denote respectively the interacting hamiltonian, acting on $\cH_\cS\otimes \cH_{\cE_j}$, unitary propagator and reduced dynamics map as given by (\ref{def:rdm}). For each interaction, the probe will be a copy of one of these $M$ systems. 
\begin{rema}\label{rem:simplepicture} To have a simple picture in mind, the reader should think of all data identical, e.g. $\cH_{\cE_j}\equiv \cH_\cE$, $H_{\cE_j}\equiv H_\cE$,... except for the inital states $\rho_{\En_j}$ which are Gibbs states at possibly different temperatures $\beta_j^{-1}$.
\end{rema}
The sequence of probes is then described by a sequence $\bj:=(j_n)_{n\in\N^*}\in \{1,\ldots, M\}^{\N^*}$ where $j_n$ will describe of which of the $M$ systems $\cE_j$'s the probe $\cE^n$ is a copy, i.e. $\forall n\in\N^*$ one has $\En^n\equiv\En_{j_n}$. Thus $\Sy$ interacts with $M$ reservoirs $\cR_1,\ldots,\cR_M$ where $\cR_j$ denotes the ``union'' of all the probes corresponding to the index $j$, i.e.
\[
\cR_j := \bigplus_{n,\, j_n=j} \ \cE^n, 
\]
and if $\rho$ is the initial state of $\Sy$, after $n$ interactions $\Sy$ is in the state
\begin{equation*}
 \rho_n(\bj):=\Lj_{j_n}\circ\Lj_{j_{n-1}}\circ\cdots\circ\Lj_{j_1}(\rho).
\end{equation*}
Accordingly, for an observable $A\in\Bo(\Hi_{\Sy})$, the corresponding Heisenberg evolution is
\begin{equation*}
 A_n(\bj):=\Lj_{j_1}^*\circ\Lj_{j_2}^*\circ\cdots\circ\Lj_{j_n}^*(A).
\end{equation*}

In this paper, we shall consider two specific and rather natural situation of non-equilibrium RIS whose reduced dynamics can be linked to a discrete quantum dynamical semigroup.


\subsubsection{Cyclic case}

A non-equilibrium RIS is called cyclic when $\Sy$ interacts first with $\Res_1$, then $\Res_2$, ..., $\Res_M$, then $\Res_1$ again, etc. In other words the sequence $\bj$ describing the interactions is the $M$-periodic sequence $\bj^{cy}=(j_n^{cy})_n$ where for any $k\in\{1,\ldots,M\}$ and $n\in\N$ one has $j_{k+nM}^{cy}=k$. Consequently, the reduced dynamics of $\Sy$ over the cycles can be described by the discrete semigroup $\ds \big(\Lj_{cy}^n\big)_n$ where
\begin{equation}\label{def:cycle}
\Lj_{cy}:=\Lj_M\circ\Lj_{M-1}\circ\cdots\circ\Lj_1.
\end{equation}
Note that $\ds \Lj_{cy}(\rho) = \tr_{\Hi_{\En_{cy}}}(U_{cy}\times\rho\otimes\rho_{cy}\times U_{cy}^*),$ where $\En_{cy}:=\En_1+\cdots+\En_M$, $\Hi_{\En_{cy}}:=\Hi_{\En_1}\otimes\cdots\otimes\Hi_{\En_M}$, $U_{cy}:=U_M\cdots U_1$ and $\ds \rho_{cy}:=\rho_{\En_1}\otimes\cdots\otimes\rho_{\En_M}.$


\subsubsection{Random case}\label{rdcase}

A non-equilibrium RIS is called random when the order in which the small system $\Sy$ interacts with the $(\Res_j)_{1\leq j\leq M}$ is described by a random process, i.e. for each interaction, the subsystem $\cE_j$ of which $\cE^n$ is a copy will be chosen randomly from $\cE_1,\ldots,\cE_M$. The motivation to consider random RIS is related to the question of Time-Reversal Invariance, see Remark \ref{pwtr} in Section \ref{ssec:NE and tri}. In order to define such a model, we denote by $p$ the uniform probability measure on $\{1,\ldots,M\}$ and we denote by $\mc{P}$ the standard convoluted probability measure on $\{1,\ldots,M\}^{\mb{N}^*}$ associated to $p$. For any $\mc{T} \subset \{1,\ldots,M\}^{\mb{N}^*}$, $\mc{P}(\mc{T})$ is the probability for the Random RIS to interact successively with the reservoirs $\Res_{j_1}, \Res_{j_2}, \cdots, \Res_{j_n}, \cdots$ with $\bj=(j_n)_n\in\mc{T}$. We denote by $\E_{\mb{P}}$ the expectation value w.r.t. a measure $\mb{P}$. The choice of the measure $\mc{P}$ traduces the fact that the order in which the reservoirs $\cR_1,\ldots,\cR_M$ will interact with $\cS$ is chosen in an i.i.d. manner but on average $\cS$ will interact as much with each of them. This is the simplest and only case we shall consider here. A generalization of our results to a Markovian situation will be considered in the forthcoming paper \cite{BJP}.

Then one easily gets that $\ds \E_{\cP}(\Lj_{j_n}\circ\Lj_{j_{n-1}}\circ\cdots\circ\Lj_{j_1})=\Lj_{ra}^n$ where
\begin{equation*}
\Lj_{ra}:=\mb{E}_p(\Lj_j) = \frac{1}{M} \left( \cL_1+\cdots+\cL_M\right).
\end{equation*}
Thus, at least in expectation, the reduced dynamics of $\Sy$ can be described by the semigroup $(\Lj_{ra}^n)_{n\in\N}$. We say at least because of the following theorem which was proven in \cite{BJMrd} (it is a particular case of Theorem 1.3 in that paper)
\begin{theo}\label{rd} Suppose that there exists $1\leq j\leq M$ such that $\Lj_j$ is primitive. Then $\Lj_{ra}$ is primitive. Moreover, the Random RIS dynamics converges almost surely and in the ergodic mean to the unique invariant state of $\Lj_{ra}$. More precisely, there is a subset $\mc{T} \subset \{1,\ldots, M\}^{\mb{N}^*}$ such that $\cP (\mc{T}) = 1$ and for any $\bj\in\mc{T}$, any density matrix $\rho$, and any family of observables $\{1,\ldots, M\}\ni j\mapsto A(j)\in \cB(\cH_\cS)$ one has
\begin{equation}\label{eq:randomergodic}
 \lim_{N \rightarrow +\infty} \frac{1}{N} \sum_{n=1}^{N} \tr\left(\Lj_{j_{n-1}} \circ \cdots \circ \Lj_{j_1} (\rho) \times A(j_n)\right) = \tr\left(\rho_+^{ra} \times \E_p(A)\right),
\end{equation}
where $\rho_+^{ra}$ denotes the unique invariant state of $\Lj_{ra}$ and $\E_p(A)=\frac{1}{M} (A(1)+\cdots+A(M))$.
\end{theo}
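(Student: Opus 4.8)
The plan is to prove the two assertions—primitivity of $\Lj_{ra}$ and the ergodic convergence (\ref{eq:randomergodic})—separately, the second resting on a synchronization estimate for random products of the $\Lj_j$.

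First I would check primitivity of $\Lj_{ra}$ directly from Definition \ref{def:primitive}. Fixing a Kraus decomposition $\Lj_j(\cdot)=\sum_i V_{j,i}\,\cdot\,V_{j,i}^*$ for each $j$, the map $\Lj_{ra}$ acquires the Kraus form $\Lj_{ra}(\cdot)=\sum_{j,i}W_{j,i}\,\cdot\,W_{j,i}^*$ with $W_{j,i}=M^{-1/2}V_{j,i}$. If $\Lj_{j_0}$ is primitive, there is $n_0$ with ${\rm Span}\{V_{j_0,i_1}\cdots V_{j_0,i_{n_0}}\}=\cB(\cH_\cS)$. The length-$n_0$ products of Kraus operators of $\Lj_{ra}$ contain, up to the harmless scalar $M^{-n_0/2}$, all products $V_{j_0,i_1}\cdots V_{j_0,i_{n_0}}$ (take every reservoir label equal to $j_0$), so their span is all of $\cB(\cH_\cS)$, i.e. $\Lj_{ra}$ is primitive. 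By Propositions \ref{spprop} and \ref{prop:ergodicprimitive}, $\Lj_{ra}$ then has a unique faithful invariant state $\rho_+^{ra}$ and $\Lj_{ra}^n(\rho)\to\rho_+^{ra}$ for every state $\rho$.

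The heart of the matter is a synchronization lemma: writing $\eta(\Phi)$ for the Dobrushin coefficient of a CPTP map (the norm of $\Phi$ on the subspace of traceless self-adjoint operators for the trace norm, equivalently $\tfrac12\sup_{\rho,\rho'}\|\Phi(\rho)-\Phi(\rho')\|_1$), one wants $\eta_N(\bj):=\eta(\Lj_{j_N}\circ\cdots\circ\Lj_{j_1})\to0$ for $\cP$-a.e. $\bj$. The relevant facts are that $\eta$ is submultiplicative, bounded by one (contractivity of the trace distance), and satisfies $\eta(\Phi)<1$ whenever $\Phi$ is positivity improving, since compactness of the state space then yields a minorization $\Phi(\rho)\ge\delta\,\un/\dim\cH_\cS$ uniformly in $\rho$, giving $\eta(\Phi)\le1-\delta$. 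Because $\Lj_{j_0}$ is primitive, its iterates converge uniformly to the faithful $\rho_+^{j_0}$, so $\Lj_{j_0}^{n_0}$ is positivity improving for some $n_0$, whence the deterministic block word $w=(j_0,\dots,j_0)$ of length $n_0$ has $\eta(\Lj_{j_0}^{n_0})=:c<1$. Cutting $\bj$ into consecutive blocks of length $n_0$, the event ``this block equals $w$'' has probability $M^{-n_0}>0$ and these events are independent, so by the second Borel--Cantelli lemma a positive density of blocks equals $w$; submultiplicativity then gives $\eta_N(\bj)\le c^{\#\{\text{good blocks}\le N\}}\to0$ almost surely. This is the step I expect to be the main obstacle: it is exactly where the hypothesis ``some $\Lj_j$ primitive'' is used and where the probabilistic input enters, and it is essentially Theorem~1.3 of \cite{BJMrd}.

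Finally I would deduce (\ref{eq:randomergodic}) by comparison with a stationary process. On the two-sided shift $(\{1,\dots,M\}^{\mathbb{Z}},\cP,T)$, which is ergodic, synchronization (applied to products indexed to the left) guarantees that the backward products converge, defining stationary states $\hat\rho_n(\bj):=\lim_{m\to\infty}\Lj_{j_n}\circ\cdots\circ\Lj_{j_{n-m}}(\,\cdot\,)$, independent of the argument, with $\hat\rho_n=\Lj_{j_n}(\hat\rho_{n-1})$ and $\hat\rho_n(\bj)=\hat\rho_0(T^n\bj)$. Birkhoff's theorem applied to $F(\bj):=\tr(\hat\rho_0(\bj)\,A(j_1))$ gives $\frac1N\sum_{n=1}^N\tr(\hat\rho_{n-1}(\bj)\,A(j_n))\to\E_\cP[F]$ a.s.; since $j_1$ is uniform and independent of $\hat\rho_0$ (a function of the past $(j_0,j_{-1},\dots)$), one gets $\E_\cP[F]=\tr(\E_\cP[\hat\rho_0]\,\E_p(A))$, and the invariance identity $\E_\cP[\hat\rho_0]=\E_\cP[\Lj_{j_0}(\hat\rho_{-1})]=\Lj_{ra}(\E_\cP[\hat\rho_0])$ forces $\E_\cP[\hat\rho_0]=\rho_+^{ra}$ by uniqueness. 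Since $\hat\rho_{n-1}=\Lj_{j_{n-1}}\circ\cdots\circ\Lj_{j_1}(\hat\rho_0)$ while $\rho_{n-1}(\bj)=\Lj_{j_{n-1}}\circ\cdots\circ\Lj_{j_1}(\rho)$, one has $\|\rho_{n-1}(\bj)-\hat\rho_{n-1}(\bj)\|_1\le\eta_{n-1}(\bj)\,\|\rho-\hat\rho_0\|_1\to0$ a.s., so the two Cesàro averages share the same limit, which is precisely (\ref{eq:randomergodic}); taking $\E_\cP$ and using the uniform bound $\max_j\|A(j)\|$ then yields the convergence in mean as well.
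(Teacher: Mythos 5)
Your proof is correct, but there is no internal proof in the paper to match it against: Theorem \ref{rd} is simply quoted as a particular case of Theorem 1.3 of \cite{BJMrd}, so what you have written is a self-contained substitute for that citation, and it follows a genuinely different route from the cited one. In \cite{BJMrd} the result is obtained by analyzing the infinite random products $\Lj_{j_N}\circ\cdots\circ\Lj_{j_1}$ directly, exploiting the fact that all the dual maps share the deterministic invariant vector $\un$: each map is decomposed along this vector, almost sure convergence of the products is established via a theorem on products of random reducible contractions, and the ergodic-mean limit then follows from a law-of-large-numbers argument. You instead combine three standard ingredients: (i) primitivity of $\Lj_{ra}$ read off from Definition \ref{def:primitive}, which is exactly right --- the Kraus operators of $\Lj_{ra}$ contain, up to scalars, those of the primitive $\Lj_{j_0}$, so the all-$j_0$ words of length $n_0$ already span $\cB(\cH_\cS)$ (this is the same mechanism the paper itself uses in proving Propositions \ref{erprop} and \ref{deformedsp}); (ii) a Dobrushin-coefficient synchronization estimate, where positivity improvingness of $\Lj_{j_0}^{n_0}$ (a consequence of primitivity via Proposition \ref{prop:ergodicprimitive} and faithfulness of the limit state) yields a uniform minorization, hence $\eta(\Lj_{j_0}^{n_0})\le 1-\delta<1$, and Borel--Cantelli makes the all-$j_0$ block recur almost surely; (iii) the pullback construction of the stationary states $\hat\rho_n$ on the two-sided shift together with Birkhoff, with the identification $\E_\cP[\hat\rho_0]=\Lj_{ra}(\E_\cP[\hat\rho_0])=\rho_+^{ra}$ by uniqueness. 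All of these steps are sound, and your approach is arguably more elementary and visibly robust: it extends verbatim to any i.i.d. law giving positive mass to a primitive $\Lj_{j_0}$, and Borel--Cantelli could be replaced by ergodicity of a stationary driving sequence. Two small points you leave implicit and should spell out: the Birkhoff null set depends a priori on the family $A(\cdot)$, whereas the theorem demands a single $\mc{T}$ valid for all $\rho$ and all $A$; since the limit relation (\ref{eq:randomergodic}) is linear and uniformly bounded in $A$ and the space of families is finite dimensional, it suffices to apply Birkhoff to a finite basis of families (say, matrix units in each slot) and intersect the finitely many full-measure sets, while your synchronization set already handles all initial states $\rho$ simultaneously. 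Likewise, the passage from the two-sided measure back to the one-sided $\cP$ is legitimate, but only because the convergence event is measurable with respect to the forward coordinates $(j_1,j_2,\ldots)$ alone; this deserves a sentence.
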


\begin{rema} The observables $A(j)$ should be understood in the following sense. They represent the same physical quantity but their expression might vary depending on which probe $\cS$ is interacting with. They are called instantaneous observables in \cite{BJMrd}. A typical example is that of energy flux observables, see Section \ref{sec:fluxobs}. 

If in particular $A(j)\equiv A$, i.e. one considers a fixed observable, then (\ref{eq:randomergodic}) indeed traduces the ergodicity of the random RIS.
\end{rema}

\noindent {\bf Notation.} Throughout the paper when assumptions, results or identities are formulated for both the cyclic and random situation we will often use the symbol $\s$ which will stand for either $cy$ in the cyclic case or $ra$ in the random case.


\subsection{Assumptions}

In this section we formulate the various assumptions that will be used in the paper.


\subsubsection{Temperatures and thermal forces}\label{temperatures}

The first assumption concerns the initial states of the probes. We will assume that they are initially in thermal equilibrium. Namely,
\begin{hypo}[{\bf KMS}]\label{KMS} For any $1\leq j\leq M$, $\ds \rho_{\En_j}=\frac{\e^{-\beta_j H_{\En_j}}}{\tr\left(\e^{-\beta_j H_{\En_j}}\right)}$ for some $\beta_j$.
\end{hypo}
\noindent From now on, we will always suppose Assumption (\nameref{KMS}) holds. 

Fix some $\beta_\refer > 0$. If $\zeta_j := \beta_{ref} - \beta_j$ then $\bfzeta = (\zeta_1, \zeta_2, \cdots, \zeta_M)$ denotes the vector of thermal forces. 
In the sequel, for any quantity $A$ depending on $\bfzeta$, we will write $A_\bfzeta$ when we need to stress its dependence on $\bfzeta$, and only $A$ otherwise. When all the temperatures are equal (but not necessarily to $\beta_\refer^{-1}$), i.e. $\bfzeta=(\zeta,\ldots,\zeta)$, we will write $A_\zeta$ instead of $A_\bfzeta$. 

Note that $\cE_j$ depends only on $\zeta_j$. Hence we may write $\cL_{j,\bfzeta}= \cL_{j,\zeta_j}$, $\rho_{\cE_j,\bfzeta}=\rho_{\cE_j,\zeta_j}$,...


\subsubsection{Ergodicity}\label{sec:ergodicity}

The next assumption concerns spectral properties of the reduced dynamics maps $\Lj_{\s,\bfzeta}$, $\s=cy$ or $ra$, and the asymptotic behaviour of $\Sy$ (recall Proposition \ref{prop:ergodicprimitive}).
\begin{hypo}[{\bf ER$\s$}]\label{ergodicity}
 There exists $\bfzeta\in\mb{R}^M$ such that $\Lj_{\s,\bfzeta}$ is primitive.
\end{hypo}

\begin{prop}\label{erprop} Suppose assumption (\nameref{ergodicity}) holds. Then
\begin{enumerate}
\item For any $\bfzeta\in\R^M$, $\Lj_{\s,\bfzeta}$ is primitive. As a consequence, for any $\bfzeta\in\R^M$ and $\rho\in\Bo_1(\Hi_{\Sy})$, $\ds \lim_{n\to\infty} \Lj_{\s,\bfzeta}^n(\rho)= \tr(\rho)\rho_{+,\bfzeta}^\s$  where $\rho_{+,\bfzeta}^\s$ denotes the unique invariant faithful state of $\Lj_{\s,\bfzeta}$.
\item $\rho_{+,\bfzeta}^\s$ is infinitely differentiable w.r.t. $\bfzeta \in \mb{R}^M$.
\end{enumerate}
\end{prop}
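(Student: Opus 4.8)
The plan is to treat the two assertions in turn: item (1) follows by combining a $\bfzeta$-independence argument for primitivity with Proposition \ref{prop:ergodicprimitive}, while item (2) is obtained by analytic perturbation theory once item (1) guarantees that the eigenvalue $1$ stays simple and isolated for all $\bfzeta$. For the primitivity claim I would exploit the fact that, in Definition \ref{def:primitive}, primitivity only depends on the \emph{span} of products of Kraus operators. The crucial point is that the unitaries $U_j=\e^{-i\tau_jH_j}$ do not depend on $\bfzeta$; only the reference states $\rho_{\En_j,\bfzeta}$ do, and by Assumption (\nameref{KMS}) each is a \emph{faithful} Gibbs state for every finite $\beta_j$. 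Diagonalising $\rho_{\En_j,\bfzeta}=\sum_k p_k(\bfzeta)\,|e_k\rangle\langle e_k|$ in the ($\bfzeta$-independent) eigenbasis of $H_{\En_j}$, with strictly positive weights $p_k(\bfzeta)>0$, yields for $\Lj_{j,\bfzeta}$ the Kraus operators $\sqrt{p_k(\bfzeta)}\,\langle e_l|U_j|e_k\rangle$, whose $\bfzeta$-dependence is only multiplication by positive scalars. Since rescaling by nonzero scalars does not change a span, the span is the same for all $\bfzeta$. Passing to $\Lj_{\s,\bfzeta}$ — a composition in the cyclic case, a convex combination in the random case — its Kraus operators are products of the individual ones, respectively (up to the fixed factor $M^{-1/2}$) the union of the individual families, still carrying only strictly positive $\bfzeta$-dependent prefactors; hence the span condition, and therefore primitivity, holds for all $\bfzeta$ as soon as it holds for one, which is exactly Assumption (\nameref{ergodicity}).

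The ergodic consequence of item (1) then follows: Proposition \ref{prop:ergodicprimitive} gives $\lim_n\Lj_{\s,\bfzeta}^n(\rho)=\rho_{+,\bfzeta}^\s$ for every state $\rho$, and the extension to an arbitrary $\rho\in\Bo_1(\Hi_\Sy)$ is obtained by linearity after writing $\rho$ as a complex combination of states and using $\tr(\Lj_{\s,\bfzeta}^n(\rho))=\tr(\rho)$, which produces the factor $\tr(\rho)$ in the limit.

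For item (2) I would first record that $\bfzeta\mapsto\Lj_{\s,\bfzeta}\in\Bo(\Bo_1(\Hi_\Sy))$ is $C^\infty$ (in fact real-analytic): $\beta_j\mapsto\e^{-\beta_jH_{\En_j}}$ is entire, the partition function $\tr(\e^{-\beta_jH_{\En_j}})$ is everywhere positive, so $\bfzeta\mapsto\rho_{\En_j,\bfzeta}$ is analytic; the map $\Lj_{j,\bfzeta}$ depends on $\rho_{\En_j,\bfzeta}$ only through the linear partial-trace operations with the $\bfzeta$-independent $U_j$, and compositions and convex combinations preserve smoothness. I would then invoke Riesz spectral theory. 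Fix $\bfzeta_0$; by item (1) and Proposition \ref{spprop}, $1$ is a simple, isolated, dominant eigenvalue of $\Lj_{\s,\bfzeta_0}$, so there is a small positively oriented circle $\Gamma$ around $1$ enclosing no other point of $\sp(\Lj_{\s,\bfzeta_0})$. By continuity of the finitely many eigenvalues, $\Gamma$ still separates $1$ from the rest of the spectrum for $\bfzeta$ near $\bfzeta_0$, so the Riesz projection
\begin{equation*}
P_\bfzeta=\frac{1}{2i\pi}\oint_\Gamma (z-\Lj_{\s,\bfzeta})^{-1}\,dz
\end{equation*}
is well defined and, the integrand being jointly smooth in $(z,\bfzeta)$, is $C^\infty$ near $\bfzeta_0$. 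Since $1$ is simple, $P_\bfzeta$ is the rank-one projection onto $\C\,\rho_{+,\bfzeta}^\s$; because the associated left eigenvector of $\Lj_{\s,\bfzeta}$ is $\un$ (the trace functional), one has $P_\bfzeta(X)=\tr(X)\,\rho_{+,\bfzeta}^\s$. Evaluating at any fixed density matrix $\sigma_0$ gives $\rho_{+,\bfzeta}^\s=P_\bfzeta(\sigma_0)$, which is $C^\infty$ near $\bfzeta_0$; as $\bfzeta_0$ is arbitrary, $\rho_{+,\bfzeta}^\s$ is $C^\infty$ on $\R^M$.

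The step I expect to require the most care is the $\bfzeta$-independence of primitivity in item (1): one must check that the argument genuinely uses only faithfulness of the Gibbs states (positivity of the $p_k(\bfzeta)$) and not their precise values, and that it survives the composition/averaging defining $\Lj_{\s,\bfzeta}$. Once simplicity and isolation of the eigenvalue $1$ are secured for all $\bfzeta$ by item (1), the perturbation-theoretic part of item (2) is entirely standard.
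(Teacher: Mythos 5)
Your proposal is correct and takes essentially the same route as the paper: your item (1) is precisely the paper's argument that the Kraus operators of $\Lj_{\s,\bfzeta}$ depend on $\bfzeta$ only through strictly positive scalar prefactors (obtained by diagonalizing in the $\bfzeta$-independent eigenbasis of $H_{\En_j}$), so the span condition of Definition \ref{def:primitive} is the same for all $\bfzeta$, both under composition (cyclic) and averaging (random). For item (2) the paper simply invokes regular perturbation theory \`a la Kato after noting that $1$ is a simple isolated eigenvalue; your Riesz-projection computation, including the identification $P_\bfzeta(X)=\tr(X)\,\rho_{+,\bfzeta}^\s$ via the left eigenvector $\un$, is just that standard step written out explicitly and is correct.
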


\begin{proof} 1. Let $\lambda_{jk}$, $\varphi_{jk}$ denote the eigenvalues and eigenvectors of $H_{\cE_j}$, i.e. $\ds H_{\cE_j}=\sum_k \lambda_{jk} |\varphi_{jk}\rangle\langle \varphi_{jk}|$. Then it follows from (\ref{def:rdm}) that 
\[
\cL_{j,\zeta_j}(\rho) = \sum_{k,k'} \frac{\e^{-\lambda_{jk}(\beta_\refer-\zeta_j)}}{Z_{\zeta_j}} V_{jkk'} \, \rho\, V_{jkk'}^*, 
\]
where $\ds V_{jkk'} = \left\langle \varphi_{jk'}, \e^{-i\tau_jH_j} \varphi_{jk}\right\rangle$ does not depend on $\bfzeta$. It follows immediately that the maps $\Lj_{\s,\bfzeta}$ have Kraus decomposition of the form
$\ds
\Lj_{\s,\bfzeta} = \sum_i f_i(\bfzeta) V_i \rho V_i^*
$ 
where the $V_i$ do not depend on $\bfzeta$ and with $f_i(\bfzeta)>0$.
For any $\bfzeta,\bfzeta'$ the maps $\Lj_{\s,\bfzeta}$ and $\Lj_{\s,\bfzeta'}$ therefore have the same Kraus decomposition up to positive scalar factors which proves 1 (recall Definition \ref{def:primitive}). 

2. It follows from 1. and Proposition \ref{spprop} that $1$ is an isolated simple eigenvalue for any $\bfzeta$ so the result follows by regular perturbation theory \cite{Kato}.
\end{proof}

Finally, we have the following result which shows that it suffices to have information on one of the probes to get information on the entire non-equilibrium RIS.
\begin{prop} Let $\bfzeta\in\R^M$. If $\cL_{j,\zeta_j}$ is primitive for some $1\leq j\leq M$  then (\nameref{ergodicity}) holds for $\s=ra$, and if moreover $\cL_{j,\zeta_j}$ is positivity improving then (\nameref{ergodicity}) holds for $\s=cy$.
\end{prop}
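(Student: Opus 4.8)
The plan is to treat the two cases separately: the random case is essentially a restatement of an already available result, while the cyclic case requires a short direct argument.

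For the random case, recall that $\cL_{ra,\bfzeta}=\frac1M(\cL_{1,\zeta_1}+\cdots+\cL_{M,\zeta_M})$, so the hypothesis that $\cL_{j,\zeta_j}$ be primitive for some $j$ is exactly the hypothesis of Theorem \ref{rd} at the given value of the thermal forces $\bfzeta$. That theorem yields that $\cL_{ra,\bfzeta}$ is primitive, which is precisely (\nameref{ergodicity}) for $\s=ra$, since it suffices to exhibit a single $\bfzeta$, namely the one at hand. Nothing further is needed here.

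For the cyclic case I would aim to prove the stronger statement that $\cL_{cy,\bfzeta}=\cL_{M,\zeta_M}\circ\cdots\circ\cL_{1,\zeta_1}$ is itself \emph{positivity improving}, and then conclude via the already recorded observation that a positivity improving completely positive map is primitive (hence (\nameref{ergodicity}) holds for $\s=cy$). The argument rests on two facts about the individual factors $\cL_{i,\zeta_i}(\cdot)=\tr_{\cH_{\cE_i}}\big(U_i\,(\cdot)\otimes\rho_{\cE_i}\,U_i^*\big)$, recall (\ref{def:rdm}). First, since Assumption (\nameref{KMS}) forces each $\rho_{\cE_i}$ to be a Gibbs state, hence positive definite, each $\cL_{i,\zeta_i}$ \emph{preserves positive definiteness}: if $\rho>0$ then $\rho\otimes\rho_{\cE_i}>0$, conjugation by the unitary $U_i$ keeps it positive definite, and the partial trace of a positive definite operator is again positive definite, because for $0\neq\psi\in\cH_\cS$ one has $\langle\psi,\tr_{\cH_{\cE_i}}(X)\psi\rangle=\sum_k\langle\psi\otimes e_k,X(\psi\otimes e_k)\rangle>0$ whenever $X>0$. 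Second, each $\cL_{i,\zeta_i}$ is trace preserving, so it maps a nonzero positive operator to a nonzero positive operator.

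With these two facts the composition is handled by following a nonzero $X\in\cB^+(\cH_\cS)$ through the cycle. Applying $\cL_{1,\zeta_1},\dots,\cL_{j-1,\zeta_{j-1}}$ produces a nonzero positive operator $W$ (the trace is preserved at each step); applying $\cL_{j,\zeta_j}$, which is assumed positivity improving, turns $W$ into a positive definite operator; and applying the remaining factors $\cL_{j+1,\zeta_{j+1}},\dots,\cL_{M,\zeta_M}$ keeps it positive definite by the first fact. Hence $\cL_{cy,\bfzeta}(X)>0$ for every nonzero $X\in\cB^+(\cH_\cS)$, i.e. $\cL_{cy,\bfzeta}$ is positivity improving. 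The degenerate cases $j=1$ and $j=M$ fit this scheme, with the corresponding initial or final block of factors being empty.

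I expect the only genuinely delicate point to be the role of Assumption (\nameref{KMS}): positivity improving of a \emph{single} factor $\cL_{j,\zeta_j}$ does not by itself propagate through the composition, since a general completely positive trace preserving map may lower the rank and destroy positive definiteness. What rescues the argument is precisely that, the reference states being faithful, every factor preserves positive definiteness, so the positive definiteness \emph{created} by $\cL_{j,\zeta_j}$ survives all subsequent interactions. This is also why positivity improving (rather than mere primitivity) of one factor is demanded in the cyclic case, whereas primitivity of one factor suffices in the random case through Theorem \ref{rd}.
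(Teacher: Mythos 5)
Your proof is correct and follows essentially the same route as the paper: the random case is the direct application of Theorem \ref{rd}, and the cyclic case reproduces the content of the paper's Lemma \ref{positive definite}, namely that each $\cL_{i,\zeta_i}$ preserves positive definiteness (since $\rho_{\cE_i}$ is faithful), so the positive definiteness created by the one positivity improving factor survives the rest of the cycle, making $\cL_{cy,\bfzeta}$ positivity improving and hence primitive. Your explicit remark about why mere primitivity of a single factor would not propagate through the composition is exactly the point implicit in the paper's statement of the lemma.
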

The proof follows directly from Lemma \ref{positive definite} below in the cyclic case and from Theorem \ref{rd} in the random case. 

\begin{lemm}\label{positive definite} If $\rho$ is positive definite then so is $\Lj_j(\rho)$. As a consequence, if there exists $1\leq j\leq M$ such that $\cL_{j,\zeta_j}$ is positivity improving, then so is $\cL_{cy,\bfzeta}$.
\end{lemm}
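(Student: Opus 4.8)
The plan is to establish the two assertions in turn, the first feeding directly into the second. For the first, I would simply track positive-definiteness through each operation in the definition $\cL_j(\rho)=\tr_{\Hi_{\En_j}}(U_j\times\rho\otimes\rho_{\En_j}\times U_j^*)$. Assumption (\nameref{KMS}) makes $\rho_{\En_j}$ a Gibbs state on a finite dimensional space, hence faithful (positive definite) for every value of $\beta_j$; consequently $\rho>0$ forces $\rho\otimes\rho_{\En_j}>0$ on $\Hi_\Sy\otimes\Hi_{\En_j}$, since a tensor product of positive definite operators is positive definite. Conjugation by the unitary $U_j$ leaves the spectrum unchanged, so $\sigma:=U_j\times\rho\otimes\rho_{\En_j}\times U_j^*$ remains positive definite.

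The one step deserving an explicit computation is that the partial trace preserves positive-definiteness. Fixing an orthonormal basis $(e_k)_k$ of $\Hi_{\En_j}$, for any nonzero $\psi\in\Hi_\Sy$ one has
\begin{equation*}
\langle\psi,\cL_j(\rho)\,\psi\rangle=\sum_k\langle\psi\otimes e_k,\sigma\,(\psi\otimes e_k)\rangle .
\end{equation*}
Each vector $\psi\otimes e_k$ is nonzero, so $\sigma>0$ makes every summand strictly positive; thus $\langle\psi,\cL_j(\rho)\psi\rangle>0$ and $\cL_j(\rho)$ is positive definite, proving the first assertion.

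For the second assertion I would use the composition structure $\cL_{cy,\bfzeta}=\cL_M\circ\cdots\circ\cL_1$ and the fact that each $\cL_i$ is CPTP. Let $X\in\cB^+(\Hi_\Sy)\setminus\{0\}$ and suppose $\cL_{j_0,\zeta_{j_0}}$ is positivity improving. Applying the maps of index $<j_0$, which are positivity preserving, gives $Y:=\cL_{j_0-1}\circ\cdots\circ\cL_1(X)\geq0$, and trace preservation yields $\tr(Y)=\tr(X)>0$, so $Y\neq0$. Positivity improvement of $\cL_{j_0}$ then gives $\cL_{j_0}(Y)>0$, after which the remaining maps $\cL_{j_0+1},\ldots,\cL_M$ each preserve positive-definiteness by the first assertion. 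Hence $\cL_{cy,\bfzeta}(X)>0$, i.e. $\cL_{cy,\bfzeta}$ is positivity improving.

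I do not anticipate a genuine obstacle, since the statement is structural rather than quantitative. The only point requiring care is the bookkeeping in the composition: one must verify that the input $Y$ to the positivity-improving factor $\cL_{j_0}$ is nonzero (secured by trace preservation, not by positivity alone), and one must invoke the first assertion, i.e. preservation of the strict positive-definiteness, for the factors applied after $\cL_{j_0}$, ordinary positivity preservation being insufficient there.
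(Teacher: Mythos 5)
Your proof is correct and takes essentially the same route as the paper: positive definiteness of $\rho\otimes\rho_{\cE_j}$ (via faithfulness of the Gibbs state) is propagated through the unitary conjugation and the partial trace, the only cosmetic difference being that the paper verifies $\tr\left(\cL_j(\rho)\,A\right)>0$ against nonzero $A\in\cB^+(\cH_\cS)$ by duality, whereas you verify $\langle\psi,\cL_j(\rho)\psi\rangle>0$ directly in a product basis. Your composition argument for the second assertion, including the trace-preservation check that the input to the positivity-improving factor $\cL_{j_0}$ is nonzero and the use of the first assertion (not mere positivity preservation) for the factors applied afterwards, is precisely the argument the paper leaves implicit.
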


\begin{proof} If $\rho$ is positive definite so is $\rho_{\cE_j}$ and hence $\rho\otimes \rho_{\cE_j}$. Thus for any non-zero $A\in \cB^+(\cH_\cS)$ one has
\[
\tr\left( \cL_j(\rho)\, A\right) = \tr\left( \rho\otimes \rho_{\cE_j} \times U_j^* A\otimes \un \,U_j\right) >0,
\]
so that $\cL_j(\rho)$ is indeed positive definite.
\end{proof}


\subsubsection{Time-Reversal}\label{ssec:tri}

The next assumption concerns Time-Reversal Invariance. A time reversal of a quantum system $(\cH,H)$ is an antiunitary involution $\theta$ on $\cH$ such that $\theta H=H\theta$. Given such a $\theta$ we denote by $\Theta$ the antilinear $*$-automorphism acting on $\cB(\cH)$ as $\Theta(X)=\theta X\theta$. Note in particular that one has $\Theta (\e^{-itH}) =\e^{itH}$ for all $t\in \R$, or in an equivalent way if $\tau_t(X):=\e^{itH}X\e^{-itH}$ is the Heisenberg evolution of an observable $X$ then 
 \begin{equation}\label{eq:tri}
 \Theta \circ \tau_t \circ \Theta =\tau_{-t}, \quad \forall t\in\R.
 \end{equation}
In open systems one usually further specifies the structure of $\theta$. Namely we assume that we are given time reversal $\theta_\cS$ and $\theta_\cE$ associated to $(\cH_\cS,H_\cS)$ and $(\cH_\cE,H_\cE)$ respectively and such that $\theta_\cS \otimes \theta_\cE V = V \theta_\cS \otimes \theta_{\En}$ where $V$ is the interacting operator. Then $\theta=\theta_\cS\otimes\theta_\cE$ is a time reversal for the coupled system $(\cH_\cS\otimes \cH_\cE,H_\cS+H_\cE+V)$.

A state $\rho$ is then called time-reversal invariant for $\theta$ if $\Theta(\rho)=\rho$ which is equivalent to $\rho(\Theta(X))=\rho(X^*)$ for all $X\in\cB(\cH)$, and in particular $\rho(\Theta(X))=\rho(X)$ when $X$ is an observable hence self-adjoint. Finally, a quantum system $(\cH,H,\rho)$ is called time-reversal invariant iff there exists a time reversal $\theta$ on $(\cH,H)$ such that $\rho$ is time-reversal invariant for $\theta$. We consider the following assumption.

\begin{hypo}[{\bf TRI}]\label{Time-Reversal}
There exist antiunitary involutions $\theta$ and $(\theta_{\En_j})_{1\leq j\leq M}$ acting on $\Hi_{\Sy}$ and $(\Hi_{\En_j})_{1\leq j\leq M}$ such that $\theta H_{\Sy} = H_{\Sy} \theta$, $\theta_{\En_j} H_{\En_j} = H_{\En_j} \theta_{\En_j}$ and $\theta \otimes \theta_{\En_j} V_j = V_j\theta\otimes\theta_{\En_j}$. 
\end{hypo}
\noindent Since the $\rho_{\cE_j}$ are KMS states (\nameref{Time-Reversal}) guarantees that all the probes $(\cH_{\cE_j},H_{\cE_j},\rho_{\cE_j})$ are time-reversal invariant systems. We shall come back to time-reversal invariance in Section \ref{ssec:NE and tri}.


\subsection{The Non-Entanglement condition}\label{ssec:NE}

There is a last assumption, see Assumption (\nameref{Non-Entanglement}) p.\pageref{Non-Entanglement}, which will play an important role in our paper and which is very specific to RIS. The goal if this section is first to explain its origin and then derive some of its consequences.


\subsubsection{The Non-Entanglement condition: a signature of equilibrium}

One of our goal is to understand linear response theory for RIS, which means how does the system respond to a small perturbation \emph{from equilibrium}. It is therefore important to specify what do we mean by equilibrium. In this perspective the most ideal situation is certainly that of a single type of probes, i.e. $M=1$, all initially at the same inverse temperature $\beta$. One of the usual feature of equilibrium is that it is characterized by the vanishing of entropy production. During its interaction with a single probe $\cE$ the entropy production of the system is given by (\ref{def:entropyprod}), and if the system is initially in an invariant state $\rho$ we simply have 
\[
\sigma = \Ent(U \times \rho \otimes \rho_{\En} \times U^*\, |\, \rho \otimes \rho_{\En})
\]
which vanishes iff $U \times \rho \otimes \rho_{\En} \times U^* = \rho \otimes \rho_{\En}$. It is thus natural to require that for each species of probes, and whatever is their initial temperature, there exists an invariant state $\rho_j$ such that the triple $(U_j,\rho_j,\rho_{\cE_j})$ satisfies the non-entanglement condition (\ref{def:NEtriple}). 

Consider now the general framework of Section \ref{sec:noneqris} and more precisely the random situation of Section \ref{rdcase}. When Assumption (\nameref{ergodicity}) holds the entropy production in the system is given by, see Proposition \ref{prop:2ndlaw} and also \cite{BJMrd,BJM}, 
\begin{equation}\label{eq:randomentropyprod}
\sigma_{ra}^+= -\sum_{j=1}^M\beta_j\rho_+^{ra}(\Phi_j),
\end{equation}
where $\ds \Phi_j = -\frac{1}{T} \tr_{\rho_{\En_j}} (U_j^* H_{\En_j} U_j - H_{\En_j})$, $T=\tau_1+\cdots+\tau_M$, denotes the energy flux observable associated to the $j$-th type of probe (see Section \ref{sec:fluxobs} for more details about these $\Phi_j$). A natural notion of equilibrium is that it leads to a vanishing of entropy production. The following proposition gives a simple characterization in terms of the individual probes.
\begin{prop}\label{prop:NEandequilibrium} If Assumption (\nameref{ergodicity}) holds, then $\sigma_{ra}^+=0$ if and only if  
\begin{equation}\label{eq:nonentanglement}
U_j\times\rho_+^{ra} \otimes\rho_{\En_j}\times U_j^*  =\rho_+^{ra}\otimes\rho_{\En_j}, \quad \forall j\in\{1,\ldots,M\},
\end{equation}
i.e. the states $\rho_{+}^{ra} \otimes\rho_{\En_j}$ are invariant states of the joint systems $\cS+\cE_j$ so in particular $\cL_j(\rho_+^{ra})=\rho_+^{ra}$ for all $j$. In other words $\rho_+^{ra}$ is a common invariant state for all the probes and the triples $(U_j,\rho_+^{ra},\rho_{\cE_j})$ all satisfy the Non-Entanglement condition (\ref{def:NEtriple}).
\end{prop}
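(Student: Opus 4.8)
The plan is to reduce the global entropy production $\sigma_{ra}^+$ to a sum of two manifestly nonnegative contributions: one measuring the entanglement created during each individual interaction, and one measuring the failure of $\rho_+^{ra}$ to be invariant under each individual $\cL_j$. The vanishing condition can then be read off term by term. First I would apply the single-interaction entropy balance (\ref{ebe}) to each probe $j$ with the system prepared in the invariant state $\rho_+^{ra}$. Writing
\[
\sigma_j := \Ent\big(U_j\times\rho_+^{ra}\otimes\rho_{\En_j}\times U_j^*\ \big|\ \cL_j(\rho_+^{ra})\otimes\rho_{\En_j}\big)\ge 0
\]
for the entropy production of the $j$-th interaction in the sense of (\ref{def:entropyprod}), equation (\ref{ebe}) reads $\Ent(\cL_j(\rho_+^{ra}))-\Ent(\rho_+^{ra})=\sigma_j-\beta_j\tr\big((\rho_+^{ra}\otimes\rho_{\En_j})(U_j^*H_{\En_j}U_j-H_{\En_j})\big)$. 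Using the definition of the partial trace with respect to a state, the flux term is exactly $\tr\big((\rho_+^{ra}\otimes\rho_{\En_j})(U_j^*H_{\En_j}U_j-H_{\En_j})\big)=\rho_+^{ra}\big(\tr_{\rho_{\En_j}}(U_j^*H_{\En_j}U_j-H_{\En_j})\big)=-T\,\rho_+^{ra}(\Phi_j)$, so the balance becomes $\Ent(\cL_j(\rho_+^{ra}))-\Ent(\rho_+^{ra})=\sigma_j+T\beta_j\,\rho_+^{ra}(\Phi_j)$.

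Summing over $j$ and invoking (\ref{eq:randomentropyprod}) to replace $-\sum_j\beta_j\rho_+^{ra}(\Phi_j)$ by $\sigma_{ra}^+$ gives
\[
T\sigma_{ra}^+=\sum_{j=1}^M\sigma_j+\Big(M\,\Ent(\rho_+^{ra})-\sum_{j=1}^M\Ent(\cL_j(\rho_+^{ra}))\Big).
\]
The crucial step is to recognize the parenthesis as a sum of relative entropies. Since $\rho_+^{ra}$ is the faithful invariant state of $\cL_{ra}=\frac1M\sum_j\cL_j$ (Proposition \ref{erprop}), one has $\sum_j\cL_j(\rho_+^{ra})=M\rho_+^{ra}$; expanding $\Ent(\cL_j(\rho_+^{ra})\,|\,\rho_+^{ra})=-\Ent(\cL_j(\rho_+^{ra}))-\tr(\cL_j(\rho_+^{ra})\log\rho_+^{ra})$ and summing, the second terms collapse to $-\tr(M\rho_+^{ra}\log\rho_+^{ra})=M\Ent(\rho_+^{ra})$, yielding $M\Ent(\rho_+^{ra})-\sum_j\Ent(\cL_j(\rho_+^{ra}))=\sum_j\Ent(\cL_j(\rho_+^{ra})\,|\,\rho_+^{ra})$. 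Hence
\[
T\sigma_{ra}^+=\sum_{j=1}^M\sigma_j+\sum_{j=1}^M\Ent\big(\cL_j(\rho_+^{ra})\,\big|\,\rho_+^{ra}\big),
\]
a sum of nonnegative terms (which incidentally reproves $\sigma_{ra}^+\ge 0$). These relative entropies are well defined because $\rho_+^{ra}$ is faithful and $\cL_j(\rho_+^{ra})$ is positive definite by Lemma \ref{positive definite}.

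The conclusion is then immediate. As $T>0$ and every summand is $\ge 0$, $\sigma_{ra}^+=0$ holds if and only if $\sigma_j=0$ and $\Ent(\cL_j(\rho_+^{ra})\,|\,\rho_+^{ra})=0$ for all $j$. Vanishing of the relative entropy, via $\Ent(\mu\,|\,\nu)=0\iff\mu=\nu$, gives $\cL_j(\rho_+^{ra})=\rho_+^{ra}$; vanishing of $\sigma_j$ gives, by the discussion following (\ref{def:entropyprod}), $U_j\times\rho_+^{ra}\otimes\rho_{\En_j}\times U_j^*=\cL_j(\rho_+^{ra})\otimes\rho_{\En_j}$. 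Substituting the former into the latter yields precisely (\ref{eq:nonentanglement}). Conversely, if (\ref{eq:nonentanglement}) holds, taking the partial trace over $\cH_{\En_j}$ gives $\cL_j(\rho_+^{ra})=\rho_+^{ra}$, so both $\sigma_j$ and $\Ent(\cL_j(\rho_+^{ra})\,|\,\rho_+^{ra})$ vanish and the displayed identity forces $\sigma_{ra}^+=0$. The main obstacle is obtaining the per-probe invariance $\cL_j(\rho_+^{ra})=\rho_+^{ra}$ rather than merely the non-entanglement of the pair $(U_j,\rho_+^{ra})$ with the (a priori different) output $\cL_j(\rho_+^{ra})$: this is exactly what the relative-entropy term—equivalently the strict concavity of the von Neumann entropy—supplies, and it is what allows $\rho_+^{ra}$ to appear on both sides of (\ref{eq:nonentanglement}).
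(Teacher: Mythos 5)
Your proof is correct, and it coincides with the paper's argument up to the summed entropy balance: like the paper, you apply (\ref{ebe}) to each probe with $\cS$ in the state $\rho_+^{ra}$ and sum over $j$, which is exactly the paper's identity (\ref{ebeNE}) (up to the factor $M$ and before setting $\sigma_{ra}^+=0$). The finishing move, however, is genuinely different. The paper argues by a sign sandwich: the right-hand side of (\ref{ebeNE}) is nonnegative (relative entropies) while the left-hand side is nonpositive by \emph{strict concavity} of the von Neumann entropy together with $\cL_{ra}(\rho_+^{ra})=\rho_+^{ra}$, and the equality case of strict concavity is what yields $\cL_j(\rho_+^{ra})=\rho_+^{ra}$ for all $j$. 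You instead rewrite the concavity deficit \emph{exactly}, using $\sum_j\cL_j(\rho_+^{ra})=M\rho_+^{ra}$, as $\sum_j\Ent\big(\cL_j(\rho_+^{ra})\,\big|\,\rho_+^{ra}\big)$ --- the standard identity expressing the entropy gap of a mixture as the mean relative entropy to the barycenter --- arriving at the decomposition $T\sigma_{ra}^+=\sum_j\sigma_j+\sum_j\Ent\big(\cL_j(\rho_+^{ra})\,\big|\,\rho_+^{ra}\big)$, valid for any value of the thermal forces. From this both directions of the equivalence read off term by term using only the single fact $\Ent(\mu\,|\,\nu)\geq 0$ with equality iff $\mu=\nu$ (as you note, the per-probe invariance, which is the delicate point, comes from the second family of terms while the non-entanglement comes from the first, exactly as in the discussion following (\ref{def:entropyprod})). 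What your route buys: it bypasses strict concavity and its equality case (which is anyway equivalent to your identity plus positivity of relative entropy), it handles the converse uniformly through the same identity (the paper instead verifies directly that all steady fluxes $\rho_+^{ra}(\Phi_j)$ vanish under (\ref{eq:nonentanglement})), and it yields as a byproduct a quantitative formula for the steady entropy production as a sum of relative entropies, which also reproves $\sigma_{ra}^+\geq 0$. One minor remark: faithfulness of $\rho_+^{ra}$ (Proposition \ref{erprop}) already makes each $\Ent\big(\cL_j(\rho_+^{ra})\,\big|\,\rho_+^{ra}\big)$ well defined, so the appeal to Lemma \ref{positive definite} is harmless but not needed.
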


\begin{proof} One direction is obvious. Namely, if (\ref{eq:nonentanglement}) holds for all $j$ one easily computes 
\begin{equation*}\label{eq:individualfluxes}
\rho_+^{ra}(\Phi_j) = -\frac{1}{T} \tr\left[ \rho_+^{ra} \otimes \rho_{\En_j} (U_j^* H_{\En_j} U_j - H_{\En_j})\right] = 0, 
\end{equation*}
i.e. all the steady fluxes vanish, hence $\ds \sigma_{ra}^+= -\sum_{j=1}^M\beta_j\rho_+^{ra}(\Phi_j) = 0$.

\noindent Suppose now that $\sigma_{ra}^+=0$. Using (\ref{ebe}) we have for all $j\in\{1,\ldots,M\}$
\begin{eqnarray*}
\lefteqn{ \Ent(\cL_j(\rho_+^{ra})) - \Ent(\rho_+^{ra}) } \\
 & = & \Ent\Big(U_j \times \rho_+^{ra} \otimes \rho_{\En_j} \times U_j^*\, |\, \cL_j (\rho_+^{ra}) \otimes \rho_{\En_j}\Big) - \beta_j\tr\Big((\rho_+^{ra} \otimes \rho_{\En_j}) (U_j^* H_{\En_j} U_j - H_{\En_j})\Big)\\
 & = & \Ent\Big(U_j \times \rho_+^{ra} \otimes \rho_{\En_j} \times U_j^*\, |\, \cL_j (\rho_+^{ra}) \otimes \rho_{\En_j}\Big) + \beta_j T \rho_+^{ra}(\Phi_j).
\end{eqnarray*}
Summing these identities over $j$ and using (\ref{eq:randomentropyprod}) we get
\begin{equation}\label{ebeNE}
\frac{1}{M}\sum_{j=1}^M \Ent\big(\cL_j(\rho_+^{ra})\big) - \Ent(\rho_+^{ra}) = \frac{1}{M}\sum_{j=1}^M \Ent\Big(U_j \times \rho_+^{ra} \otimes \rho_{\En_j} \times U_j^*\, |\, \cL_j (\rho_+^{ra}) \otimes \rho_{\En_j}\Big).
\end{equation}
Relative entropies are non-negative quantities so that the right-hand side is non-negative. But the left-hand side is non-positive. Indeed von Neumann entropy is strictly concave so that, using $\cL_{ra} (\rho_+^{ra})=\rho_+^{ra}$, we have
\begin{eqnarray*}
\lefteqn{ \frac{1}{M}\sum_{j=1}^M \Ent\big(\cL_j(\rho_+^{ra})\big) - \Ent(\rho_+^{ra}) } \\
 & \leq & \Ent\left( \frac{1}{M}\sum_{j=1}^M \cL_j(\rho_+^{ra}) \right) - \Ent(\rho_+^{ra}) = \Ent\left(\cL_{ra} (\rho_+^{ra})\right) -\Ent(\rho_+^{ra}) =0,
\end{eqnarray*}
with equality iff all the $\cL_j(\rho_+^{ra})$ are equal. This proves that both sides of (\ref{ebeNE}) vanish. 

As a consequence all the $\cL_j(\rho_+^{ra})$ are equal and, because $\cL_{ra} (\rho_+^{ra})=\rho_+^{ra}$, they have to be equal to $\rho_+^{ra}$, i.e. $\rho_+^{ra}$ is a common invariant state for all the probes. Moreover since all the terms on the right-hand side of (\ref{ebeNE}) are non-negative they are all zero, i.e.  
\[
\Ent\big(U_j \times \rho_+^{ra} \otimes \rho_{\En_j} \times U_j^*\, |\, \cL_j (\rho_+^{ra}) \otimes \rho_{\En_j}\big) =0 \quad \Rightarrow \quad U_j \times \rho_+^{ra} \otimes \rho_{\En_j} \times U_j^*= \cL_j (\rho_+^{ra}) \otimes \rho_{\En_j},
\]
that is the triples $(U_j,\rho_+^{ra},\rho_{\cE_j})$ satisfy the Non-entanglement condition (\ref{def:NEtriple}). 
\end{proof}

The above proposition refers only to the vanishing of entropy production. If we want to consider the case where \emph{all the temperatures are equal} as an \emph{equilibrium situation}, whatever is this temperature, this immediately leads to the following 
\begin{hypo}[{\bf NE}]\label{Non-Entanglement} There exists a function $\R\ni \zeta\mapsto \rho_{+,\zeta}$ such that, for any $j\in\{1,\ldots,M\}$ and $\zeta\in\R$, the triple $(U_j,\rho_{+,\zeta},\rho_{\En_j,\zeta})$ satisfies the following Non-Entanglement condition
\begin{equation*}\label{def:nonentanglement}
U_j\times\rho_{+,\zeta} \otimes\rho_{\En_j,\zeta}\times U_j^*  =\rho_{+,\zeta}\otimes\rho_{\En_j,\zeta}.
\end{equation*}
\end{hypo}
Assumption (\nameref{Non-Entanglement}) may look quite restrictive at first sight. Indeed it requires that any probe at any temperature has a non-entangled invariant state, and that this invariant state depends only on the temperature and not on the probes themselves. Proposition \ref{prop:NEandequilibrium} however shows that this is the natural condition if one wants to consider equal temperatures as an equilibrium situation, in the sense of vanishing of entropy production.

\begin{rema}\label{rem:NEequilibriumstate} If moreover Assumption (\nameref{ergodicity}) holds this obviously implies that, \emph{when all the temperatures are equal}, the unique invariant states in the cyclic and random case coincide with $\rho_{+,\zeta}$, i.e. $\rho^{cy}_{+,\zeta}=\rho^{ra}_{+,\zeta}=: \rho_{+,\zeta}$.
\end{rema}

In the rest of this section we derive two consequences of this non-entanglement assumption which will play an important role in our analysis.


\subsubsection{Effective hamiltonian}\label{ssec:effectiveham}

The first consequence is the existence of a conserved quantity for the $\cL_j$'s which plays the role of an effective hamiltonian. This is directly related to the conservation of the entropy observable (\ref{eq:entropyconservation}), see the end of Section \ref{entropy}.

\begin{prop}\label{prop:effectham} Suppose (\nameref{ergodicity}) and (\nameref{Non-Entanglement}) hold and let $H_\cS'=-\log \left( \rho_{+,0}^{1/\beta_\refer} \right)$ so that $\rho_{+,0}= \e^{-\beta_\refer H_\cS'}$. Then for any $\zeta \in \R$ the state $\rho_{+,\zeta}$ is a Gibbs state at inverse temperature $\beta=\beta_\refer-\zeta$ for the effective hamiltonian $H_{\Sy}'$, i.e. $\ds \rho_{+,\zeta} =\frac{\e^{-\beta H_{\Sy}'}}{\tr\left(\e^{-\beta H_{\Sy}'}\right)}$, $\beta=\beta_\refer-\zeta.$

Moreover, for any $j$, the observable $H_{\Sy}' + H_{\En_j}$ is a conserved quantity of the interacting dynamics, namely
\begin{equation}\label{conservation}
 U_j^*\left(H_{\Sy}'\otimes\un+\un\otimes H_{\En_j}\right)U_j=H_{\Sy}'\otimes\un+\un\otimes H_{\En_j}.
\end{equation}
\end{prop}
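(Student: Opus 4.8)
The plan is to read both conclusions off Assumption (\nameref{Non-Entanglement}) by passing to logarithms, turning the conjugation-invariance of the product states into commutation relations with the $U_j$. I would prove the ``moreover'' part (the conservation law) first, since it uses only the condition at $\zeta=0$, and then bootstrap it to obtain the Gibbs form of $\rho_{+,\zeta}$ for all $\zeta$. Throughout I use that the relevant states are positive definite: $\rho_{+,\zeta}$ is the unique faithful invariant state of the primitive map $\cL_{\s,\zeta}$ (Proposition \ref{erprop} and Remark \ref{rem:NEequilibriumstate}), and the $\rho_{\En_j,\zeta}$ are Gibbs states, so all logarithms below are well defined.

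For the conservation law, I start from the Non-Entanglement condition at $\zeta=0$, namely $U_j\,\rho_{+,0}\otimes\rho_{\En_j,0}\,U_j^*=\rho_{+,0}\otimes\rho_{\En_j,0}$. Since the product $\rho_{+,0}\otimes\rho_{\En_j,0}$ is positive definite, this identity says exactly that $U_j$ commutes with it, hence with its logarithm. Using $\log(A\otimes B)=\log A\otimes\un+\un\otimes\log B$ together with $\log\rho_{+,0}=-\beta_\refer H_\cS'$ (the definition of $H_\cS'$) and $\log\rho_{\En_j,0}=-\beta_\refer H_{\En_j}+(\text{scalar})\,\un$, the scalar term drops out of the commutator and I obtain $[U_j,\,H_\cS'\otimes\un+\un\otimes H_{\En_j}]=0$, which is precisely (\ref{conservation}).

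For the Gibbs form I repeat the computation at general $\zeta$. Writing $\beta=\beta_\refer-\zeta$ and $\log\rho_{\En_j,\zeta}=-\beta H_{\En_j}+(\text{scalar})\,\un$, the condition gives $[U_j,\,\log\rho_{+,\zeta}\otimes\un-\beta\,\un\otimes H_{\En_j}]=0$. Adding $\beta$ times the conservation law just established cancels the environment term and yields $[U_j,\,C\otimes\un]=0$ for every $j$, where $C:=\log\rho_{+,\zeta}+\beta H_\cS'$ is self-adjoint on $\cH_\cS$. Applying the partial trace $\tr_{\rho_{\En_j}}$ to the resulting relation $U_j^*(C\otimes\un)U_j=C\otimes\un$ gives $\cL_j^*(C)=\tr_{\rho_{\En_j}}(C\otimes\un)=C$ for all $j$, hence $\cL_{\s,\zeta}^*(C)=C$ (the random dual being the average of the $\cL_j^*$ and the cyclic dual their composition). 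Under Assumption (\nameref{ergodicity}), $\cL_{\s,\zeta}$ is primitive by Proposition \ref{erprop}, so $1$ is a simple eigenvalue and the only fixed points of the unital dual $\cL_{\s,\zeta}^*$ are the multiples of $\un$; thus $C=c\,\un$. Therefore $\log\rho_{+,\zeta}=-\beta H_\cS'+c\,\un$, and normalizing to unit trace gives $\rho_{+,\zeta}=\e^{-\beta H_\cS'}/\tr(\e^{-\beta H_\cS'})$.

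The routine inputs are that conjugation-invariance of a positive definite operator passes to its logarithm and that $\log$ splits over tensor products; the genuine content, and the only place ergodicity enters, is the final step: recognizing $C$ as a fixed observable of $\cL_{\s,\zeta}^*$ and invoking primitivity to force it to be scalar. I expect establishing $\cL_j^*(C)=C$ (i.e. that $C\otimes\un$ being $U_j$-conserved makes $C$ a fixed point of each reduced dual) and then pinning down the fixed-point space via Proposition \ref{erprop} to be the crux, everything else reducing to bookkeeping with the scalar normalizations.
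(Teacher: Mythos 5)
Your proof is correct, but it runs along a genuinely different track than the paper's. The paper works \emph{multiplicatively} and uses the Non-Entanglement condition only at $\zeta=0$: from $\bigl[\rho_{+,0}\otimes\e^{-\beta_\refer H_{\En_k}},U_k\bigr]=0$ it passes by functional calculus to arbitrary real powers, $\bigl[\e^{-\beta H_\cS'}\otimes\e^{-\beta H_{\En_k}},U_k\bigr]=0$ for all $\beta$, so that the normalized $\e^{-\beta H_\cS'}$ is exhibited directly as an invariant state of every $\cL_{k,\zeta}$; uniqueness of the invariant state of the primitive map $\cL_{\s,\zeta}$ then identifies it with $\rho_{+,\zeta}$, and the conservation law (\ref{conservation}) is read off afterwards from this one-parameter family of commutation relations. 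You instead work \emph{additively}: you take logarithms, prove the conservation law first from the $\zeta=0$ condition alone, and then invoke the Non-Entanglement condition at every $\zeta$ to show that $C=\log\rho_{+,\zeta}+\beta H_\cS'$ is a fixed observable of each unital dual $\cL_{j,\zeta}^*$, concluding via the dual formulation of primitivity (simplicity of the eigenvalue $1$ forces the fixed-point space of $\cL_{\s,\zeta}^*$ to be $\C\un$). The two uses of primitivity are dual to one another and equivalent here, so both arguments are sound; what differs is the economy and the emphasis. The paper's route buys a slightly stronger statement for free: Assumption (NE) at the single reference temperature, combined with ergodicity, already forces the Gibbs form (and hence the non-entangled invariance) at all temperatures, so part of the assumption is redundant. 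Your route uses the full strength of (NE) but buys a cleaner logical structure: the conservation law and the Gibbs identification are decoupled, and the latter is reduced to the standard principle that a fixed observable of a primitive unital CP map is scalar, which is arguably the more transparent mechanism. All the supporting steps you flag (faithfulness of $\rho_{+,\zeta}$ via Proposition \ref{erprop} and Remark \ref{rem:NEequilibriumstate}, passage of conjugation-invariance to logarithms, $\cL_{j,\zeta}^*(C)=C$ from $U_j^*(C\otimes\un)U_j=C\otimes\un$ by partial trace against $\rho_{\En_j,\zeta}$) are handled correctly.
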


\begin{proof} Let $\zeta\in\R$ and $\beta=\beta_\refer-\zeta$. Assumption (\nameref{ergodicity}) together with Proposition \ref{erprop} imply that $\rho_{+,0}$ is positive definite hence $H_\cS'$ is well defined and Assumption (\nameref{Non-Entanglement}) guarantees that, for any $1\leq k\leq M$, $\left[\rho_{+,0} \otimes \e^{-\beta_\refer H_{\En_k}}, U_k \right] = 0$. So
\begin{equation}\label{eq:effecthamproof}
\left[\rho_{+,0}^{\frac{\beta}{\beta_\refer}} \otimes \e^{-\beta H_{\En_k}}, U_k \right] =
\left[\e^{-\beta H_{\Sy}'} \otimes \e^{-\beta H_{\En_k}}, U_k \right] = 0, \quad \forall \beta.
\end{equation}
Hence $\ds \frac{\e^{-\beta H_{\Sy}'}}{\tr\left( \e^{-\beta H_{\Sy}'}\right)}$ is an invariant state of $\cL_{k,\zeta}$ for all $k$ and hence of $\Lj_{\s,\zeta}$. Since the latter admits only one invariant state, it coincides with $\ds\rho_{+,\zeta}$.

Finally, (\ref{conservation}) is a direct consequence of (\ref{eq:effecthamproof}).
\end{proof}

\begin{rema} Note that the effective hamiltonian $H_\cS'$ is intrinsic to the system and that changing the reference temperature $\beta_\refer$ only amounts to an irrelevant shift by a constant.
\end{rema}

We also note the following lemma which can be seen as a sort of gauge invariance and whose proof is a straightforward computation left to the reader.
\begin{lemm}\label{lem:gauge} If Assumptions (\nameref{ergodicity}) and (\nameref{Non-Entanglement}) hold and $H_\cS'$ is as in Proposition \ref{prop:effectham}, then
\[
\cL_j\left( \e^{-itH_\cS'}\, \rho\, \e^{itH_\cS'}\right) = \e^{-itH_\cS'}\cL_j(\rho)\, \e^{itH_\cS'}, \quad \forall j\in\{1,\ldots,M\}, t\in\R, \rho\in\cB_1(\cH_\cS).
\]
\end{lemm}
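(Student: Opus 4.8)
The plan is to reduce the whole statement to the conservation law (\ref{conservation}) established in Proposition \ref{prop:effectham}, everything else being a linear manipulation of partial traces. The first observation I would make is that, since $H_\cS'\otimes\un$ and $\un\otimes H_{\En_j}$ commute, (\ref{conservation}) is equivalent to the statement that $U_j$ commutes with the one-parameter unitary group
\[
W_t:=\e^{it\left(H_\cS'\otimes\un+\un\otimes H_{\En_j}\right)}=\e^{itH_\cS'}\otimes \e^{itH_{\En_j}}, \qquad t\in\R.
\]
This commutation $U_jW_t=W_tU_j$ (and hence $U_jW_t^*=W_t^*U_j$) is the single structural input of the proof.

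Next I would rewrite the conjugated argument of $\cL_j$. Because $\rho_{\En_j}$ is a Gibbs state for $H_{\En_j}$ under Assumption (\nameref{KMS}), it commutes with $\e^{itH_{\En_j}}$, so $\rho_{\En_j}=\e^{-itH_{\En_j}}\rho_{\En_j}\e^{itH_{\En_j}}$. Using the tensor-product structure this gives
\[
\left(\e^{-itH_\cS'}\,\rho\,\e^{itH_\cS'}\right)\otimes\rho_{\En_j}=W_t^*\,(\rho\otimes\rho_{\En_j})\,W_t.
\]
Inserting this into the definition $\cL_j(\cdot)=\tr_{\Hi_{\En_j}}\!\left(U_j\,\cdot\otimes\rho_{\En_j}\,U_j^*\right)$ and commuting $U_j$ past $W_t^*$ and $W_t$ via the first step yields
\[
\cL_j\!\left(\e^{-itH_\cS'}\rho\,\e^{itH_\cS'}\right)=\tr_{\Hi_{\En_j}}\!\left(W_t^*\,U_j(\rho\otimes\rho_{\En_j})U_j^*\,W_t\right).
\]

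The last step is to evaluate this partial trace. Splitting $W_t$ into its $\cS$- and $\En_j$-factors, the operators $\e^{\mp itH_\cS'}$ on the $\cS$-factor pull straight out of $\tr_{\Hi_{\En_j}}$, while the environment factors $\e^{\mp itH_{\En_j}}$ cancel by cyclicity of the partial trace over $\Hi_{\En_j}$, namely $\tr_{\Hi_{\En_j}}\!\big((\un\otimes B)\,Z\,(\un\otimes B')\big)=\tr_{\Hi_{\En_j}}\!\big(Z\,(\un\otimes B'B)\big)$ applied with $B'B=\un$. Since $\tr_{\Hi_{\En_j}}\!\big(U_j(\rho\otimes\rho_{\En_j})U_j^*\big)=\cL_j(\rho)$, what remains is precisely $\e^{-itH_\cS'}\cL_j(\rho)\,\e^{itH_\cS'}$, which is the claim.

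There is no genuine obstacle here; the argument is entirely linear. The only two facts one must be careful to invoke at the right place are the commutation of $\rho_{\En_j}$ with the free probe dynamics (from Assumption (\nameref{KMS})) and the commutation of $U_j$ with $W_t$ (from (\nameref{Non-Entanglement}) via Proposition \ref{prop:effectham}). The one bookkeeping subtlety worth stating explicitly is the cyclicity of the partial trace with respect to operators acting only on $\Hi_{\En_j}$, since this is exactly what makes the $\e^{\pm itH_{\En_j}}$ factors disappear rather than survive as an unwanted conjugation on the environment.
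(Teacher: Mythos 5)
Your proof is correct and is precisely the ``straightforward computation'' that the paper leaves to the reader: the paper offers no explicit argument for Lemma \ref{lem:gauge}, and the intended route is exactly yours, namely deducing from (\ref{conservation}) that $U_j$ commutes with $W_t=\e^{itH_\cS'}\otimes\e^{itH_{\En_j}}$, using Assumption (\nameref{KMS}) to absorb the conjugation of $\rho_{\En_j}$ by $\e^{itH_{\En_j}}$, and then letting the $\cS$-factors pull out of $\tr_{\Hi_{\En_j}}$ while the environment factors cancel by cyclicity of the partial trace. Each step checks out, including the cyclicity identity $\tr_{\Hi_{\En_j}}\bigl((\un\otimes B)\,Z\,(\un\otimes B')\bigr)=\tr_{\Hi_{\En_j}}\bigl(Z\,(\un\otimes B'B)\bigr)$ with $B'B=\un$, which you rightly single out as the one bookkeeping point worth making explicit.
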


\begin{rema}\label{rem:effecthaminvariance} Of course one has the same property in the Heisenberg picture,
\begin{equation}\label{eq:gauge}
\cL_j^*\left( \e^{itH_\cS'}\, X \, \e^{-itH_\cS'}\right) = \e^{itH_\cS'}\cL_j^*(X)\e^{itH_\cS'}.
\end{equation}
In particular, taking $X=H_\cS'$, we get that $[\cL_j^*(H_\cS'),H_\cS']=0$ for all $j$. 
\end{rema}


\subsubsection{Time-reversal invariance}\label{ssec:NE and tri}

The second consequence is related to time-reversal invariance. Assumption (\nameref{Time-Reversal}) is written in terms of the full interacting dynamics $U_j$. 
However the central objects in RIS are the reduced dynamics maps $\cL_j$. It is therefore natural, and important, to understand what are the consequences of Assumption (\nameref{Time-Reversal}) on them. 

Let us first recall the notion of time-reversal invariance for open systems whose evolution is given by a semi-group $(\cL^n)_n$ where $\cL$ is a given CPTP map. It is not a priori clear what should be the analog of (\ref{eq:tri}). The following definition is due to \cite{JPW}. It can be traced back up to \cite{Ag}, see also e.g. \cite{FU,Ma}. 
\begin{defi}\label{def:trimarkovsystem} A pair $(\cL,\rho)$, where $\cL$ is a CPTP map acting on $\cB_1(\cH)$ and $\rho$ a state, is called time-reversal invariant if there exists an antilinear $*$-automorphism $\Theta$ such that $\Theta(\rho)=\rho$ and $\Theta\circ \cL^*\circ \Theta = \cL^{*\rho}$ where $\cL^{*\rho}$ denotes the $\rho$ adjoint of $\cL^*$, i.e. its adjoint with respect to the inner product $\langle A,B\rangle_{\rho} := \tr\left( \rho A^*B\right)$.
\end{defi}

\begin{rema}\label{rem:origintrimarkov} One way to understand the above definition is as follows, see also \cite{JPW} for the case where the open system is derived via a weak coupling limit procedure \`a la van Hove. Typically the state $\rho$ is chosen to be an invariant state of the dynamics. If $(\cH,H,\rho)$ is a quantum system with $\rho$ an invariant state, it is easy to see that for all $t\in\R$ the $\rho$ adjoint $\tau_t^\rho$ of the map $\tau_t$ is $\tau_{-t}$ so that (\ref{eq:tri}) amounts to $\Theta \circ \tau_t \circ \Theta = \tau_t^\rho$. In the markovian description of open systems one then simply replaces the unitary evolution $\tau_t$ by the markovian one $\cL^*$.
\end{rema}

The following proposition shows that the non-entanglement condition allows one to make the connection between our Assumption (\nameref{Time-Reversal}) and  Definition \ref{def:trimarkovsystem}.
\begin{prop}\label{lem:tri} Let $\theta$ and $\theta_\cE$ be time reversal for the quantum systems $(\cH_\cS,H_\cS)$ and $(\cH_\cE,H_\cE)$ respectively and $V\in\cB(\cH_\cS\otimes\cH_\cE)$ such that $\theta_\cS \otimes \theta_\cE V = V \theta_\cS\otimes\theta_\cE$. Let $\tau>0$ and $\rho_\cE$ be such that $(\cH_\cE,H_\cE,\rho_\cE)$ is time-reversal invariant, and let $\cL$ be given by (\ref{def:rdm}). If $\cL$ is primitive and its (unique) invariant state $\rho$ is such that the triple $(U,\rho,\rho_\cE)$, $\ds U=\e^{-i\tau H}$ with $H=H_\cS+H_\cE+V$, satisfies the non-entanglement condition (\ref{def:NEtriple}), then the pair $(\cL,\rho)$ is time-reversal invariant in the sense of Definition \ref{def:trimarkovsystem}.
\end{prop}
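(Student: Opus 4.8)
The plan is to take for $\Theta$ the obvious candidate built from the small-system time reversal, namely $\Theta(X):=\theta_\cS X\theta_\cS$ for $X\in\cB(\cH_\cS)$ (here $\theta_\cS$ denotes the given time reversal on $\cH_\cS$); this is an antilinear $*$-automorphism because $\theta_\cS$ is an antiunitary involution. Two things must be checked: that $\Theta(\rho)=\rho$, and that $\Theta\circ\cL^*\circ\Theta=\cL^{*\rho}$. Before doing either I would record the three structural facts that drive everything. First, $\theta:=\theta_\cS\otimes\theta_\cE$ commutes with $H=H_\cS+H_\cE+V$ (the hypotheses give $\theta_\cS H_\cS=H_\cS\theta_\cS$, $\theta_\cE H_\cE=H_\cE\theta_\cE$ and $\theta V=V\theta$), so $\theta$ is a time reversal for the joint system and, as in (\ref{eq:tri}), $\theta U\theta=U^*$ and $\theta U^*\theta=U$. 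Second, since $(\cH_\cE,H_\cE,\rho_\cE)$ is time-reversal invariant, $\theta_\cE\rho_\cE\theta_\cE=\rho_\cE$. Third, and this is where the non-entanglement condition enters, because $\rho$ is the invariant state of $\cL$, the condition (\ref{def:NEtriple}) reads $U\,\rho\otimes\rho_\cE\,U^*=\rho\otimes\rho_\cE$, i.e.
\begin{equation*}
[U,\rho\otimes\rho_\cE]=0.
\end{equation*}

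For $\Theta(\rho)=\rho$ I would conjugate this commutation by $\theta$: using $\theta U\theta=U^*$ and $\theta(\rho\otimes\rho_\cE)\theta=\Theta(\rho)\otimes\rho_\cE$ (the latter by the second fact) one gets $[U,\Theta(\rho)\otimes\rho_\cE]=0$, whence $\cL(\Theta(\rho))=\tr_{\cH_\cE}\big(U\,\Theta(\rho)\otimes\rho_\cE\,U^*\big)=\tr_{\cH_\cE}\big(\Theta(\rho)\otimes\rho_\cE\big)=\Theta(\rho)$. Since $\Theta(\rho)$ is again a faithful state (conjugation by an antiunitary preserves positivity and trace) and $\cL$ is primitive, uniqueness of the invariant state forces $\Theta(\rho)=\rho$.

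For the adjoint relation I would avoid computing $\cL^{*\rho}$ head-on, and instead note that, because $\Theta(\rho)=\rho$, $\Theta$ is multiplicative and $*$-preserving, and $\tr(\Theta(Z))=\overline{\tr(Z)}$, one has the anti-isometry identity $\langle\Theta a,\Theta b\rangle_\rho=\langle b,a\rangle_\rho$ for the inner product of Definition \ref{def:trimarkovsystem}. A short manipulation with the defining relation of $\cL^{*\rho}$ then shows that $\Theta\circ\cL^*\circ\Theta=\cL^{*\rho}$ is equivalent to the scalar identity
\begin{equation*}
\langle\Theta B,\cL^*(\Theta A)\rangle_\rho=\langle A,\cL^*(B)\rangle_\rho\qquad\text{for all }A,B\in\cB(\cH_\cS).
\end{equation*}
To prove it I would lift both sides to full traces on $\cH_\cS\otimes\cH_\cE$ via the identity $\tr\big(\rho X\,\cL^*(Y)\big)=\tr\big((\rho X\otimes\rho_\cE)\,U^*(Y\otimes\un)U\big)$, apply it with $X=\Theta(B^*)$, $Y=\Theta A$, then push the four factors of $\theta$ through the product using $\theta U^*\theta=U$, the two invariances $\Theta(\rho)=\rho$ and $\theta_\cE\rho_\cE\theta_\cE=\rho_\cE$, and the rule $\tr(\theta Z\theta)=\overline{\tr Z}$; the resulting complex conjugate is turned into an adjoint. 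The final step is where $[U,\rho\otimes\rho_\cE]=0$ is used: it lets one slide $U^*(\cdots)U$ past $\rho\otimes\rho_\cE$, after which the transformed expression collapses exactly onto the right-hand side.

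I expect the main obstacle to be the bookkeeping in this last computation: tracking the antilinearity of $\theta$ as it crosses operator products and, above all, the partial trace defining $\cL^*$, while each conjugation $Z\mapsto\theta Z\theta$ produces complex conjugations that must be matched by passing to adjoints and by cyclicity. The non-entanglement commutation is precisely the ingredient that makes the two sides line up; without it the conjugated map $\Theta\circ\cL^*\circ\Theta$ would be the $\rho_\cE$-weighted reversal of a genuinely different, entangling dynamics, and the identification with $\cL^{*\rho}$ would fail.
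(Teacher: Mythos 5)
Your proposal is correct and follows essentially the same route as the paper's proof: you establish $\Theta(\rho)=\rho$ by conjugating the non-entanglement relation by $\theta$ and invoking primitivity, and you verify $\Theta\circ\cL^*\circ\Theta=\cL^{*\rho}$ by lifting $\rho$-inner products to traces over $\cH_\cS\otimes\cH_\cE$, using $\theta U\theta=U^*$, $\Theta_\cE(\rho_\cE)=\rho_\cE$ and $[U,\rho\otimes\rho_\cE]=0$, exactly as in the paper's six-line chain of trace identities. Your only variation is cosmetic: you factor out the anti-isometry identity $\langle\Theta a,\Theta b\rangle_\rho=\langle b,a\rangle_\rho$ as an explicit intermediate step, where the paper performs the equivalent bookkeeping inline.
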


\noindent In our framework of RI systems we thus immediately get

\begin{coro}\label{coro:tri} If Assumptions (\nameref{ergodicity}), (\nameref{Time-Reversal}) and (\nameref{Non-Entanglement}) hold then for any $\zeta\in \R$ and any $j$ the pair $\left( \cL_{j,\zeta},\rho_{+,\zeta}\right)$ is time-reversal invariant in the sense that $\Theta(\rho_{+,\zeta})=\rho_{+,\zeta}$ and
\begin{equation}\label{eq:revLj}
\Theta\circ \cL_{j,\zeta}^* \circ \Theta = \cL_{j,\zeta}^{*\rho_{+,\zeta}}.
\end{equation}
\end{coro}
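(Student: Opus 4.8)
\textbf{Proof strategy for Corollary \ref{coro:tri}.}

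The plan is to verify that, for each fixed $\zeta\in\R$ and each $j\in\{1,\ldots,M\}$, the hypotheses of Proposition \ref{lem:tri} are satisfied when one takes $\cH_\cE=\cH_{\cE_j}$, $H_\cE=H_{\cE_j}$, the coupling $V=V_{j,\zeta}$, the interaction time $\tau=\tau_j$, the reference state $\rho_\cE=\rho_{\cE_j,\zeta}$, and the candidate invariant state $\rho=\rho_{+,\zeta}$. Once this is done, Proposition \ref{lem:tri} delivers exactly the two conclusions of the corollary, namely $\Theta(\rho_{+,\zeta})=\rho_{+,\zeta}$ and (\ref{eq:revLj}). So the entire task reduces to checking, one reservoir and one temperature at a time, the four structural inputs of that proposition.

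First I would produce the time-reversal data. Assumption (\nameref{Time-Reversal}) gives antiunitary involutions $\theta$ on $\cH_\cS$ with $\theta H_\cS=H_\cS\theta$ and $\theta_{\cE_j}$ on $\cH_{\cE_j}$ with $\theta_{\cE_j}H_{\cE_j}=H_{\cE_j}\theta_{\cE_j}$, together with the compatibility $\theta\otimes\theta_{\cE_j}\,V_j=V_j\,\theta\otimes\theta_{\cE_j}$; these are precisely the intertwining relations required by Proposition \ref{lem:tri}. Since $\rho_{\cE_j,\zeta}$ is by Assumption (\nameref{KMS}) the Gibbs state $\e^{-\beta H_{\cE_j}}/\tr(\e^{-\beta H_{\cE_j}})$ with $\beta=\beta_\refer-\zeta$, and $\theta_{\cE_j}$ commutes with $H_{\cE_j}$, the antilinear $*$-automorphism $\Theta_{\cE_j}(X)=\theta_{\cE_j}X\theta_{\cE_j}$ fixes $\e^{-\beta H_{\cE_j}}$ and hence fixes $\rho_{\cE_j,\zeta}$; thus $(\cH_{\cE_j},H_{\cE_j},\rho_{\cE_j,\zeta})$ is a time-reversal invariant system, supplying the second hypothesis of the proposition.

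Next I would supply the dynamical inputs. Primitivity of $\cL_{j,\zeta}$ and the requirement that $\rho_{+,\zeta}$ be its unique invariant state come from Assumption (\nameref{ergodicity}) via Proposition \ref{erprop}, combined with Proposition \ref{prop:effectham}: the latter shows that under (\nameref{ergodicity}) and (\nameref{Non-Entanglement}) the state $\rho_{+,\zeta}=\e^{-\beta H_\cS'}/\tr(\e^{-\beta H_\cS'})$ is the common invariant state of every $\cL_{j,\zeta}$, so in particular it is the unique invariant state of the primitive map $\cL_{j,\zeta}$. The final structural input is the non-entanglement condition for the triple $(U_{j,\zeta},\rho_{+,\zeta},\rho_{\cE_j,\zeta})$, which is exactly the content of Assumption (\nameref{Non-Entanglement}): it asserts $U_j\,\rho_{+,\zeta}\otimes\rho_{\cE_j,\zeta}\,U_j^*=\rho_{+,\zeta}\otimes\rho_{\cE_j,\zeta}$, matching (\ref{def:NEtriple}) with $\rho=\rho_{+,\zeta}$.

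With all four hypotheses of Proposition \ref{lem:tri} in force for the deformed data, I invoke that proposition directly to conclude that the pair $(\cL_{j,\zeta},\rho_{+,\zeta})$ is time-reversal invariant in the sense of Definition \ref{def:trimarkovsystem}, which is the assertion of the corollary. The only point demanding care is bookkeeping rather than genuine difficulty: one must confirm that the \emph{same} antilinear automorphism $\Theta$ (built from $\theta$ on the small system) works simultaneously for all $j$ and all $\zeta$. This is immediate because $\theta$ is fixed once and for all by (\nameref{Time-Reversal}) and is independent of $j$ and $\zeta$, while the $j$- and $\zeta$-dependence enters only through $\theta_{\cE_j}$ and $\rho_{\cE_j,\zeta}$, which are traced out in forming $\cL_{j,\zeta}^*$; thus the conclusion holds uniformly, as stated.
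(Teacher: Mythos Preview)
Your approach is the same as the paper's: the corollary is meant to follow from Proposition~\ref{lem:tri} once one identifies the data for each fixed $j$ and $\zeta$. Your verification of the time-reversal data and of the non-entanglement condition is correct.

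There is, however, one genuine gap. You write that ``primitivity of $\cL_{j,\zeta}$ \ldots\ come[s] from Assumption (\nameref{ergodicity}) via Proposition~\ref{erprop}''. But Proposition~\ref{erprop} concerns only $\cL_{\s,\bfzeta}$ for $\s\in\{cy,ra\}$; it does \emph{not} assert that each individual $\cL_{j,\zeta}$ is primitive, and in general primitivity of $\cL_{cy}$ or $\cL_{ra}$ does not force primitivity of every factor or summand. So Proposition~\ref{lem:tri} cannot be invoked verbatim with $\cL=\cL_{j,\zeta}$.

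The fix is short and only requires looking inside the proof of Proposition~\ref{lem:tri}. Primitivity is used there solely to conclude $\Theta(\rho)=\rho$ from the fact that $\Theta(\rho)$ is $\cL$-invariant; the identity $\Theta\circ\cL^*\circ\Theta=\cL^{*\rho}$ is a direct computation using only the non-entanglement condition and (\nameref{Time-Reversal}). In the setting of the corollary, the same computation (run for each $j$) shows that $\Theta(\rho_{+,\zeta})$ is $\cL_{j,\zeta}$-invariant for \emph{every} $j$, hence $\cL_{\s,\zeta}$-invariant at the equal-temperature point $\bfzeta=(\zeta,\ldots,\zeta)$. Now primitivity of $\cL_{\s,\zeta}$, which \emph{is} what Proposition~\ref{erprop} provides, forces $\Theta(\rho_{+,\zeta})=\rho_{+,\zeta}$. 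The relation (\ref{eq:revLj}) then follows from the second half of the proof of Proposition~\ref{lem:tri}, unchanged. With this adjustment your argument is complete.
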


\begin{proof} For $X\in\cB(\cH_{\cE})$ let $\Theta_\cE(X):= \theta_\cE X \theta_\cE$. One immediately gets $\Theta\otimes\Theta_\cE (U)=U^*$ so that
\[
U\, \Theta(\rho) \otimes \rho_\cE \, U^*  =  U \, (\Theta \otimes \Theta_\cE)\big(\rho \otimes \rho_\cE \big) \,U^* 
  =   (\Theta \otimes \Theta_\cE) \Big(U^*\, \rho \otimes \rho_\cE \, U \Big) 
  =  (\Theta \otimes \Theta_\cE)\big(\rho \otimes \rho_\cE \big)
  =  \Theta (\rho\big) \otimes \rho_\cE,
\]
where we have used successively $\Theta_\cE(\rho_{\cE})=\rho_{\cE}$, $\Theta\otimes\Theta_\cE (U)=U^*$, (\ref{def:NEtriple})  and $\Theta_\cE(\rho_{\cE})=\rho_{\cE}$ again.
Hence $\Theta (\rho)$ is an invariant state for $\cL$. Since $\cL$ is primitive this proves that  $\Theta (\rho)=\rho.$

On the other hand, for any $A,B\in\cB(\cH_\cS)$ we have
\begin{eqnarray*}
\tr\left( \rho\, \cL^{*\rho}(A)^* B \right) & = & \tr\left( \rho\, A^* \cL^*( B) \right) \\
 & = & \tr_{\cS+\cE} \left( \rho A^*\otimes \rho_{\cE}  \times U^*\, B\otimes \un \, U\right)\\
 & = & \tr_{\cS+\cE} \left( B\rho\otimes \rho_{\cE}  \times U\, A^*\otimes \un \, U^*\right)\\
 & = & \overline{\tr_{\cS+\cE} \left( \Theta(B\rho)\otimes \Theta_\cE(\rho_{\cE})  \times U^*\, \Theta(A^*)\otimes \un \, U\right)}\\
 & = & \overline{\tr\left( \Theta(B\rho) \cL^*\left(\Theta(A)^*\right)\right)} \\
 & = & \tr\left( B\rho \left(\Theta\circ\cL^*\circ\Theta(A)\right)^*\right),
\end{eqnarray*}
which proves that $\Theta\circ\cL^*\circ\Theta=\cL^{*\rho}$. Here  we have used the definition of $\cL^{*\rho}$ in line $1$, of $\cL^*$ in line $2$, cyclicity of the trace and (\ref{def:NEtriple})  in line 3,  antilinearity of $\Theta/\Theta_\cE$ in line $4$, $\Theta_\cE(\rho_{\cE})=\rho_{\cE}$ and definition of $\cL^*$ in line 5, and finally that $\cL^*$ is completely positive (so that $\cL^*(X^*)=(\cL^*(X))^*$) in line $6$.
\end{proof}

\begin{rema}\label{pwtr} Note that for cyclic RIS Assumption (\nameref{Time-Reversal}) makes the RIS time-reversal invariant only for the duration of the joint evolution with each individual probe, that is only locally in time. Indeed, the time-reversal operator $\Theta$ does not change the order of the interactions. This lack of global time-reversal invariance will however be resolved in the random model.
\end{rema}

\begin{rema}\label{rem:effectivehamreversal} If (\nameref{ergodicity}), (\nameref{Time-Reversal}) and (\nameref{Non-Entanglement}) hold it also follows from (\ref{conservation}) that the effective hamiltonian $H_\cS'$ is invariant under time reversal, i.e. $\Theta(H_\cS')=H_\cS'$.
\end{rema}


\subsection{A toy example}\label{toy model}

Suppose that $\cH_\cS$ and $\cH_\cE$ are copies of $\C^2$, $H_\cS=E a^* a$, $H_\cE = E_0 b^* b$, and the interaction $V$ has the form $V=\frac{\lambda}{2} (a^* \otimes b + a \otimes b^*)$. Here $a/a^*$, $b/b^*$ are the usual annihilation/creation operators on $\Hi_{\Sy}$ and $\Hi_{\En}$ respectively, i.e. $a=b=\begin{pmatrix} 0 & 1 \\ 0 & 0 \end{pmatrix}$. $E$ and $E_0$ denote the excited energy levels of $\cS$ and $\cE$ respectively, and the interaction consists in an exchange of excitation between $\Sy$ and $\En$. This model can be seen as a toy version of the Jaynes-Cummings hamiltonian describing the interaction between one mode of a quantized electro-magnetic field in a cavity and a two-level atom, see e.g. \cite{CDG}, and the corresponding RIS as a toy version of the one-atom maser model studied in \cite{BP,Bru14}.

One easily sees that the total number operator $N_{\Sy} \otimes \un + \un \otimes N_{\En}\equiv a^*a\otimes\un +\un \otimes b^*b$ is a conserved quantity. As a consequence, if $\rho_{\En} = \frac{\e^{-\beta H_{\En}}}{\tr\left(\e^{-\beta H_{\En}}\right)}$, $H_\cE=E_0 b^*b$, then $\rho = \frac{\e^{-\beta H_\Sy'}}{\tr\left(\e^{-\beta H_\Sy}\right)}$, $H_\cS'= E_0 N_\cS$, is an invariant state and the triple $\left(\e^{-i\tau H},\rho,\rho_{\En}\right)$ satisfies condition (\ref{def:NEtriple}).

Consider now the RIS where the probes $\cE_j$ are copies of $\cE$ with possibly different temperatures. We have just seen that Assumption (\nameref{Non-Entanglement}) indeed holds and leads to the effective hamiltonian $H_\cS'= E_0 N_\cS$. Assumption (\nameref{Time-Reversal}) is then obviously satisfied with $\theta$ and $\theta_{\En_j}$ the complex conjugation operations in the canonical bases of $\C^2$.
 
Finally let $\nu_j := \sqrt{(E - E_0)^2 + \lambda_j^2}$. A simple calculation shows that if $\nu_j\tau_j$ is not a multiple of $2\pi$ then $\cL_j$ is primitive and that its unique invariant state is $\rho_+(\zeta_j):=\frac{e^{-\beta_j H_\cS'}}{\tr\left(e^{-\beta_j H_\cS'}\right)}$ (see e.g. \cite{BJM} for more details). One thus infers that, if at least one of the $\nu_j\tau_j$'s is not a multiple of $2\pi$, at equilibrium the maps $\cL_{cy}$ and $\cL_{ra}$ are primitive as well so that (\nameref{ergodicity}) holds.


\section{Linear response of energy fluxes and entropy production}\label{lrtentropy}

\subsection{Energy flux observables}\label{sec:fluxobs}

The energy flux observables describe the energy fluxes that get out of the reservoirs $\Res_j$, as they are seen by the small system $\Sy$. Moreover we have to take into account the discrete-time nature of the RIS dynamics to define these fluxes. In other words we choose to study averaged fluxes, averaged over the duration of one interaction, instead of instantaneous ones. 

Clearly the reservoir $\cR_j$ can exchange energy only when it interacts with $\cS$. Moreover the typical time scale of the non-equilibrium RIS is $T=\tau_1+\cdots+\tau_M$ where we recall that $\tau_j$ is the duration of the interaction with probes $\cE_j$. This leads to the following 
\begin{defi}\label{def:flux} The energy flux observable associated to $\cR_j$ is
\begin{equation*}
\Phi_j = -\frac{1}{T} \tr_{\rho_{\En_j}} (U_j^* H_{\En_j} U_j - H_{\En_j}).
\end{equation*}
\end{defi}

In the cyclic framework it will be convenient to also use a slightly different, but closely related, flux observable. If $\cS$ is in the state $\rho$, e.g. the steady state $\rho_{+,\bfzeta}$ (see Remark \ref{rem:steadyflux}), at the beginning of a cycle, i.e. before it interacts with $\cR_1$, then at the beginning of its interaction with $\cR_j$ it is in the state $\cL_{j-1}\circ\cdots\circ\cL_1(\rho)$. Hence the corresponding expectation value of the flux observable  is
\[
\langle \Phi_j\rangle = \tr \left( \cL_{j-1}\circ\cdots\circ\cL_1(\rho) \, \Phi_j\right).
\]
It is thus natural to introduce
\begin{equation}\label{def:cyclicflux}
\Phi_j^{cy} := \Lj_1^* \circ \Lj_2^* \circ \cdots \circ \Lj_{j-1}^* (\Phi_j),
\end{equation}
so that $\ds \langle \Phi_j\rangle = \tr \left( \rho\, \Phi_j^{cy}\right).$ We shall call $\Phi_j^{cy}$ the cyclic flux observable associated to $\cR_j$. It is easy to see that one also has
\begin{equation*}\label{eq:cyclicflux}
\Phi_j^{cy} = -\frac{1}{T} \tr_{\rho_{cy}} (U_{cy}^* H_{\En_j} U_{cy} - H_{\En_j}),
\end{equation*}
which is the mean energy variation observable in the reservoir $\cR_j$ during the entire cycle. 

\begin{rema} If we compare the definitions of $\Phi_j$ and $\Phi_j^{cy}$ we can see that the latter contains a supplementary information which is the order of the interactions in a cyclic RIS. Taking two distinct definitions allows us to obtain analogs of the Green-Kubo formula which are similar in both the cyclic and random situation, see (\ref{gkc}), (\ref{gkr}) and (\ref{gkc2}). 
\end{rema}

\begin{rema} Note that the flux observables depend on the thermodynamic parameters but that $\Phi_j^{cy}$ depend on $\zeta_1$, $\zeta_2$, ..., $\zeta_j$ whereas $\Phi_j$ depends only on $\zeta_j$.
\end{rema}

\noindent {\bf Notation.} In the sequel the various results will often have a similar form when written in terms of $\Phi_j^{cy}$ in the cyclic case and of $\Phi_j$ in the random case. When we will write $\Phi_j^\s$, $\s=cy$ or $ra$, the notation $\Phi_j^{ra}$ will therefore stand for $\Phi_j$.

\medskip

\noindent Time reversal plays an important role in linear response theory and the derivation of Green-Kubo formula and Onsager relations. 
Since we consider here averaged flux observables averaged we need to consider also what we call the reversed-time flux observables.
\begin{defi}\label{def:fluxrev} The reversed-time flux observable associated to $\cR_j$ is
\begin{equation*}
\Phi_{j,\rev} := \frac{1}{T} \tr_{\rho_{\En_j}} (U_j H_{\En_j} U_j^* - H_{\En_j}), \quad U_j=\e^{-i\tau_j H_j}.
\end{equation*}
Accordingly the cyclic reversed-time flux observable associated to $\cR_j$ is
\begin{eqnarray}\label{def:cyclicfluxrev}
\Phi_{j,\rev}^{cy} & := & \Lj_{M,\rev}^*\circ\cdots\circ\Lj_{j+1,\rev}^*(\Phi_{j,\rev}) \ = \ \frac{1}{T} \tr_{\rho_{cy}} (U_{cy} H_{\En_k} U_{cy}^* - H_{\En_k}),
\end{eqnarray}
where $\cL_{j,\rev}$ is the reduced dynamics map associated to a time-reversed interaction, i.e. 
\[
\cL_{j,\rev}(\rho)= \tr_{\cH_{\cE_j}} \left( \e^{i\tau_j H_j} \rho\otimes \rho_{\cE_j} \, \e^{- i\tau_j H_j} \right).
\]
\end{defi}

\begin{rema}\label{rem:revRDM} If (\nameref{ergodicity}) and (\nameref{Non-Entanglement}) hold then for any $j=1,\ldots,M$ we actually have 
\begin{equation}\label{eq:revinteraction}
\cL_{j,\rev}^*=\cL_j^{*\rho_{+,\zeta_j}},
\end{equation} 
the $\rho_{+,\zeta_j}$-adjoint of $\cL_j^*$, where we recall that $\rho_{+,\zeta}$ is the global invariant state as given in Assumption (\nameref{Non-Entanglement}). The proof is very similar to the one of (\ref{eq:revLj}) and is left to the reader. This is of course related to the fact that in the Heisenberg picture the $\rho$ adjoint of the dynamics $\tau_t(\cdot) =\e^{itH} \cdot \e^{-itH}$ is $\tau_{-t}$, see Remark \ref{rem:origintrimarkov}. Note also that if Assumption (\nameref{Time-Reversal}) holds it follows directly from the definition of $\cL_{j,\rev}$ that 
\begin{equation}\label{eq:rdmtri}
\cL_{j,\rev} := \Theta \circ \cL_j \circ \Theta,
\end{equation}
i.e. $\cL_{j,\rev}$ is the time-reversal of $\cL_j$ which is in agreement with (\ref{eq:revLj}).
\end{rema}

\begin{rema} 1) The sign discrepancy in the definitions of $\Phi_j$ and $\Phi_{j,\rev}$ takes into account the fact that we have reversed the time. Assume for simplicity that all the interaction times $\tau_j$ are equal to $\tau$. In the limit $\tau\to 0$ of instantaneous interactions it is easy to see that the flux observables $\Phi_j$ and $\Phi_{j,\rev}$ both coincide with 
\[
\Phi_j^{\rm inst}= \frac{1}{M}\tr_{\rho_{\cE_j}} \left( -i[H_j,H_{\cE_j}]\right).
\]

\noindent 2) When (\nameref{Time-Reversal}) holds the fluxes $\Phi_j$ and $\Phi_{j,\rev}$ satisfy the relation 
\begin{equation}\label{eq:phireversal}
\Phi_{j,\rev} = -\Theta \left( \Phi_j\right).
\end{equation} 
In the limit of instantaneous interactions one retrieves the usual relation  $\ds \Phi_j^{\rm inst} = -\Theta \left( \Phi_j^{\rm inst}\right)$, i.e. flux observables are odd with respect to time reversal.

\noindent 3) In the cyclic case, one can notice that $\Phi_{j,\rev}^{cy}$ corresponds to a total time-reversal, where the order of the interactions is reversed as well (compare (\ref{def:cyclicflux}) and (\ref{def:cyclicfluxrev})).
\end{rema}

When Assumption (\nameref{Non-Entanglement}) holds then $H_\cS'+H_{\cE_j}$ is a conserved quantity of the interacting system $\cS+\cE_j$  with $H_\cS'$ is the effective hamiltonian, see Section \ref{ssec:effectiveham}. Hence $\Phi_j$ is naturally the flux observable associated to $H_\cS'$. Namely we have
\begin{prop} Suppose (\nameref{ergodicity}) and (\nameref{Non-Entanglement}) hold. Then 
\begin{equation}\label{translation2}
\Phi_j = \frac{1}{T} (\Lj_j^* (H_{\Sy}') - H_{\Sy}'),
\end{equation}
i.e. $\Phi_j$ is the flux corresponding to the effective energy $H_{\Sy}'$. Similarly one has
\begin{equation}\label{translationrev}
\Phi_{j,\rev} = -\frac{1}{T} (\Lj_{j,\rev}^* (H_{\Sy}') - H_{\Sy}'),
\end{equation}
\end{prop}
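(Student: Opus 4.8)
The plan is to reduce both identities to the conservation law (\ref{conservation}) of Proposition \ref{prop:effectham}, which is the only place where Assumptions (\nameref{ergodicity}) and (\nameref{Non-Entanglement}) are needed; once it is available everything is a direct computation unwinding the definitions of $\Phi_j$, $\Lj_j^*$ and of the partial trace $\tr_{\rho_{\En_j}}$.

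First I would rewrite (\ref{conservation}), keeping the implicit identity factors explicit, in the form
\begin{equation*}
U_j^*(\un\otimes H_{\En_j})U_j - \un\otimes H_{\En_j} = H_\cS'\otimes\un - U_j^*(H_\cS'\otimes\un)U_j,
\end{equation*}
which expresses that the energy lost by the probe equals the effective energy gained by $\Sy$. Substituting this into $\Phi_j = -\frac1T\tr_{\rho_{\En_j}}(U_j^* H_{\En_j} U_j - H_{\En_j})$ turns the probe-energy variation into an $\cS$-energy variation, giving $\Phi_j = \frac1T\tr_{\rho_{\En_j}}(U_j^*(H_\cS'\otimes\un)U_j - H_\cS'\otimes\un)$. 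The first term is exactly $\Lj_j^*(H_\cS')$ by the definition of the dual reduced dynamics, while $\tr_{\rho_{\En_j}}(H_\cS'\otimes\un) = H_\cS'\,\tr(\rho_{\En_j}) = H_\cS'$ because $\rho_{\En_j}$ is a state; this yields (\ref{translation2}).

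For (\ref{translationrev}) I would first conjugate (\ref{conservation}) by $U_j$, i.e. multiply on the left by $U_j$ and on the right by $U_j^*$, to obtain the equivalent identity $U_j(H_\cS'\otimes\un + \un\otimes H_{\En_j})U_j^* = H_\cS'\otimes\un + \un\otimes H_{\En_j}$, and then repeat the previous manipulation with $U_j$ and $U_j^*$ interchanged. Recalling from Definition \ref{def:fluxrev} that $\cL_{j,\rev}(\rho)=\tr_{\cH_{\cE_j}}(U_j^*\,\rho\otimes\rho_{\cE_j}\,U_j)$, so that its dual is $\Lj_{j,\rev}^*(A)=\tr_{\rho_{\En_j}}(U_j(A\otimes\un)U_j^*)$, one recognises the resulting expression as $-\frac1T(\Lj_{j,\rev}^*(H_\cS') - H_\cS')$; the overall sign matches the opposite sign convention in the definition of $\Phi_{j,\rev}$.

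There is no genuine obstacle here. The only points deserving care are the bookkeeping of the implicit identity factors hidden in the notation $H_{\En_j}$ (standing for $\un\otimes H_{\En_j}$) and the elementary fact that $\tr_{\rho_{\En_j}}$ collapses $H_\cS'\otimes\un$ to $H_\cS'$. All the conceptual content sits in the conservation law (\ref{conservation}), which is precisely what makes $H_\cS'+H_{\En_j}$ a bona fide total energy whose flux through $\Sy$ coincides with the flux out of the reservoir $\cR_j$.
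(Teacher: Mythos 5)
Your proof is correct, and the first identity is established exactly as in the paper: substitute the conservation law (\ref{conservation}) into the definition of $\Phi_j$ and recognise $\tr_{\rho_{\En_j}}(U_j^*\,H_\cS'\otimes\un\,U_j)=\Lj_j^*(H_\cS')$. For the second identity your route genuinely differs from the paper's, and in a way worth noting. The paper deduces (\ref{translationrev}) from (\ref{translation2}) by combining (\ref{eq:revinteraction}) with the time-reversal relations $\Phi_{j,\rev}=-\Theta(\Phi_j)$ of (\ref{eq:phireversal}) and $\Theta(H_\cS')=H_\cS'$ of Remark \ref{rem:effectivehamreversal} --- ingredients which, as stated in the paper, are formulated under Assumption ({\bf TRI}), an assumption that does not appear among the hypotheses of the proposition. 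Your argument instead conjugates (\ref{conservation}) by $U_j$ to get $U_j(H_\cS'\otimes\un+\un\otimes H_{\En_j})U_j^*=H_\cS'\otimes\un+\un\otimes H_{\En_j}$, identifies $\Lj_{j,\rev}^*(A)=\tr_{\rho_{\En_j}}(U_j\,A\otimes\un\,U_j^*)$ directly from the definition of $\cL_{j,\rev}$, and repeats the first computation with $U_j$ and $U_j^*$ interchanged; the sign in (\ref{translationrev}) then comes out of the sign convention in Definition \ref{def:fluxrev}, as you observe. This is self-contained, symmetric with the proof of (\ref{translation2}), and uses only ({\bf ER$\s$}) and ({\bf NE}) through (\ref{conservation}), so it is actually tighter on hypotheses than the paper's stated derivation. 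All the intermediate steps check out: the dual-map identification is the correct one (since $U_j=\e^{-i\tau_jH_j}$, the map $\cL_{j,\rev}$ is obtained from $\cL_j$ by swapping $U_j\leftrightarrow U_j^*$), and $\tr_{\rho_{\En_j}}(H_\cS'\otimes\un)=H_\cS'$ as you say.
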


\begin{proof} Using (\ref{conservation}) we have $U_j^* H_{\En_j} U_j - H_{\En_j} = H_\cS'-U_j^*H_\cS'U_j$ hence
\[ 
\Phi_j  =\frac{1}{T}\tr_{\rho_{\En_j}}(U_j^*H_{\Sy}'U_j-H_{\Sy}') =\frac{1}{T}(\Lj_j^*(H_{\Sy}')-H_{\Sy}'),
\]
where we have used that $H_\cS'\in\cB(\cH_\cS)$ and the definition of $\cL_j$. This proves (\ref{translation2}).

Then (\ref{translationrev}) follows from (\ref{translation2}) using (\ref{eq:revinteraction}) , (\ref{eq:phireversal}) and Remark \ref{rem:effectivehamreversal}.
\end{proof}


\subsection{Energy conservation and entropy production}

The first central concepts in non-equilibrium systems are energy conservation (1st law) and the relation between entropy production and steady fluxes (2nd law).

\begin{prop}\label{prop:energyconservation}[1st law] If (\nameref{ergodicity}) and (\nameref{Non-Entanglement}) hold  we have
\begin{equation}\label{energy conservation}
\sum_{j = 1}^M \rho_+^\s (\Phi_j^\s) = 0, \quad \s=cy \mbox{ or } ra. 
\end{equation}
\end{prop}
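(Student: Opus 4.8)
The plan is to reduce both statements to the representation (\ref{translation2}) of the flux observables in terms of the effective hamiltonian $H_\cS'$, which turns the steady fluxes into differences of effective energy that either collapse by invariance (random case) or telescope over a full cycle (cyclic case). Recall from (\ref{translation2}) that under (\nameref{ergodicity}) and (\nameref{Non-Entanglement}) one has $\Phi_j = \frac{1}{T}\big(\cL_j^*(H_\cS') - H_\cS'\big)$ for every $j$, and that everywhere below I use the duality relation $\tr(\rho\,\cL^*(A)) = \tr(\cL(\rho)\,A)$.

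For the random case, where $\Phi_j^{ra} = \Phi_j$, I would sum over $j$ and move each dual map onto the state, writing $\rho_+^{ra}(\cL_j^*(H_\cS')) = \tr\big(\cL_j(\rho_+^{ra})\, H_\cS'\big)$, so that
\[
\sum_{j=1}^M \rho_+^{ra}(\Phi_j) = \frac{1}{T}\,\tr\Big(\big(\textstyle\sum_{j=1}^M \cL_j(\rho_+^{ra})\big) H_\cS'\Big) - \frac{M}{T}\,\tr\big(\rho_+^{ra} H_\cS'\big).
\]
Since $\sum_{j=1}^M \cL_j = M\cL_{ra}$ and $\cL_{ra}(\rho_+^{ra}) = \rho_+^{ra}$, the first term equals $\frac{M}{T}\tr(\rho_+^{ra} H_\cS')$ and the two contributions cancel.

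For the cyclic case I would introduce the intermediate states along one cycle, $\rho^{(0)} := \rho_+^{cy}$ and $\rho^{(k)} := \cL_k \circ \cdots \circ \cL_1(\rho_+^{cy})$ for $1\le k\le M$. By the definition (\ref{def:cyclicflux}) of $\Phi_j^{cy}$ and duality one has $\rho_+^{cy}(\Phi_j^{cy}) = \tr\big(\rho^{(j-1)} \Phi_j\big)$. Inserting (\ref{translation2}) and moving $\cL_j^*$ onto $\rho^{(j-1)}$ gives
\[
\rho_+^{cy}(\Phi_j^{cy}) = \frac{1}{T}\Big( \tr\big(\rho^{(j)} H_\cS'\big) - \tr\big(\rho^{(j-1)} H_\cS'\big)\Big).
\]
Summing over $j=1,\ldots,M$, the right-hand side telescopes to $\frac{1}{T}\big(\tr(\rho^{(M)} H_\cS') - \tr(\rho^{(0)} H_\cS')\big)$, which vanishes because $\rho^{(M)} = \cL_{cy}(\rho_+^{cy}) = \rho_+^{cy} = \rho^{(0)}$.

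The computation is short once (\ref{translation2}) is available; the only points requiring care are the bookkeeping of the dual maps via cyclicity of the trace and, in the cyclic case, the observation that the periodicity $\rho^{(M)}=\rho^{(0)}$ is exactly what closes the telescoping sum. Conceptually the statement is nothing but the conservation of the effective energy $H_\cS'$: over a full cycle, or on average in the random model, its expectation in the steady state is unchanged, and the total flux is precisely this variation.
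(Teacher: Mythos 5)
Your proof is correct and takes essentially the same route as the paper: both cases reduce to (\ref{translation2}) together with the invariance $\cL_\s(\rho_+^\s)=\rho_+^\s$, and your cyclic argument is exactly the paper's telescoping sum, merely written in the Schr\"odinger picture via the intermediate states $\rho^{(k)}$ instead of composing the dual maps on $H_\cS'$ in the Heisenberg picture.
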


\begin{rema}\label{rem:steadyflux} In the cyclic framework, $\rho_+^{cy}$ is the asymptotic state at the beginning of a cycle and not at the beginning of the interaction between $\cS$ and the $j$-th reservoir (which is thus $\cL_{j-1}\circ\cdots\circ\cL_1(\rho_+^{cy})$). So the steady expectation value of the energy flux out of $\cR_j$ is indeed $\ds \rho_+^{cy} \left( \Lj_1^* \circ \Lj_2^* \circ \cdots \circ \Lj_{j-1}^* (\Phi_j) \right) = \rho_+^{cy}(\Phi_j^{cy})$.
\end{rema}

\begin{proof} Using (\ref{def:cyclicflux}),  (\ref{translation2}) and $\cL_{cy}(\rho_+^{cy})=\rho_+^{cy}$ we have
\[
\sum_{j=1}^M \rho_+^{cy}(\Phi_j^{cy}) \ = \ T\sum_{j=1}^M \rho_+^{cy}\left( \Lj_1^*\circ\cdots\circ\Lj_{j-1}^*(\Lj_j^*(H_{\Sy}')-H_{\Sy}')\right)
 \ = \ T \rho_+^{cy}\left(\Lj_{cy}^*(H_{\Sy}')-H_{\Sy}'\right) 
 \ = 0.
\]
Similarly, recall $\ds\cL_{ra}=\frac{1}{M} (\cL_1+\cdots+\cL_M)$, so that using (\ref{translation2}) and $\cL_{ra}(\rho_+^{ra})=\rho_+^{ra}$ we get
\[
\sum_{j=1}^M\rho_+^{ra}(\Phi_j)  = T \sum_{j=1}^M \rho_+^{ra}\left(\Lj_j^*(H_{\Sy}')-H_{\Sy}'\right) = TM \rho_+^{ra} \left( \cL_{ra}^*(H_\cS') -H_{\Sy}'\right) =0.
\]
\end{proof}

\begin{rema}\label{rem:zerofluxeq} Of course, as expected and mentioned at the beginning of the proof of Proposition \ref{prop:NEandequilibrium}, when all the temperatures are equal each steady flux vanishes, i.e. $\rho_+^\s(\Phi_j^\s)=0$ for all $j$. 
\end{rema}

We now turn to entropy production. Recall that the entropy production during an interaction between $\cS$ and a system $\cE$ is given by (\ref{def:entropyprod}). For a non-equilibrium RIS, and if  $\bj:=(j_n)_{n\in\N^*}$ describes the sequence of interactions (see Section \ref{sec:noneqris}), the entropy production during the $n^{th}$ interaction is thus
\begin{equation*}
 \sigma_n(\bj):=\Ent(U_{j_n}\times\rho_{n-1}(\bj)\otimes\rho_{\En_{j_n}}\times U_{j_n}^* \, | \, \rho_n(\bj)\otimes\rho_{\En_{j_n}}),
\end{equation*}
where $\rho_n(\bj)=\Lj_{j_n}\circ\cdots\circ\Lj_{j_1}(\rho)$ denotes the state of $\cS$ after $n$ interactions. Using (\ref{ebe}) we have the following entropy balance equation for the $N$ first interactions
\begin{equation}\label{ebe2}
 \sum_{n=1}^N\sigma_n(\bj)=\Ent(\rho_N(\bj))-\Ent(\rho)-T\sum_{n=1}^N\beta_{j_n} \tr\left(\Lj_{j_{n-1}}\circ\cdots\circ\Lj_{j_1} (\rho)\, \times \Phi_{j_n}\right).
\end{equation}
Consequently, in the random, resp. cyclic, cases we can define the entropy production 
\begin{equation}\label{def:entropyprod2}
\sigma_{ra}(\bj,N):= \sum_{n=1}^N\sigma_n(\bj), \quad \mbox{ resp. } \sigma_{cy}(N):=\sum_{n=1}^{NM}\sigma_n(\bj),
\end{equation}
associated to the $N$ first interactions, resp. cycles. For cyclic interactions (\ref{ebe2}) becomes
\begin{eqnarray}\label{ebecy}
 \sigma_{cy}(N) & = & \Ent\left(\Lj_{cy}^N(\rho)\right)-\Ent(\rho)-T\sum_{j=1}^M\sum_{n=1}^N\beta_j\tr\left( \cL_{cy}^{n-1}(\rho) \times \Phi_j^{cy} \right).
\end{eqnarray}

\begin{defi} We define the asymptotic entropy production rate of a cyclic or random Repeated Interaction System by
 \begin{equation*}
  \sigma_{cy}^+:=\lim_{N\rightarrow+\infty}\frac{\sigma_{cy}(N)}{NT}, \quad  \sigma_{ra}^+(\bj):=\lim_{N\rightarrow+\infty}\frac{\sigma_{ra}(\bj,N)}{N\frac{T}{M}},
 \end{equation*}
provided the limits exist.
\end{defi}

\begin{prop}\label{prop:2ndlaw}[2nd law] Assume (\nameref{ergodicity}) holds. Then the asymptotic entropy productions exist and the following second laws of thermodynamics hold
\begin{equation*}
 \sigma_{cy}^+=-\sum_{j=1}^M\beta_j\rho_+^{cy} (\Phi_j^{cy}) \quad \mbox{ and }\quad   \sigma_{ra}^+(\bj)=-\sum_{j=1}^M\beta_j\rho_+^{ra}(\Phi_j) \quad \cP -a.s.
 \end{equation*}
\end{prop}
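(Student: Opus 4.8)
The plan is to start from the two entropy balance equations already derived, namely (\ref{ebecy}) in the cyclic case and (\ref{ebe2}) in the random case, divide each by the appropriate number of interactions, and pass to the limit $N\to\infty$. In both settings the crucial simplification is that on a finite dimensional Hilbert space the von Neumann entropy of any state lies in $[0,\log\dim\cH_\cS]$, so the boundary contributions $\Ent(\cL_{cy}^N(\rho))-\Ent(\rho)$ and $\Ent(\rho_N(\bj))-\Ent(\rho)$ are bounded uniformly in $N$ and hence vanish after dividing by $N$ and letting $N\to\infty$. What remains is to control the Ces\`aro-type averages of the flux terms, and for this I would invoke the ergodic inputs supplied earlier.

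For the cyclic case, dividing (\ref{ebecy}) by $NT$ and exchanging the finite sum over $j$ with the average over $n$ gives
\[
\frac{\sigma_{cy}(N)}{NT}=\frac{\Ent(\cL_{cy}^N(\rho))-\Ent(\rho)}{NT}-\sum_{j=1}^M\beta_j\,\frac{1}{N}\sum_{n=1}^N\tr\left(\cL_{cy}^{n-1}(\rho)\times\Phi_j^{cy}\right).
\]
Assumption (\nameref{ergodicity}) together with Proposition \ref{erprop} ensures that $\cL_{cy}$ is primitive, so $\cL_{cy}^{n-1}(\rho)\to\rho_+^{cy}$ and therefore $\tr(\cL_{cy}^{n-1}(\rho)\times\Phi_j^{cy})\to\rho_+^{cy}(\Phi_j^{cy})$. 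Since a convergent sequence has the same Ces\`aro limit, combining this with the vanishing boundary term yields $\sigma_{cy}^+=-\sum_{j=1}^M\beta_j\rho_+^{cy}(\Phi_j^{cy})$.

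For the random case, dividing (\ref{ebe2}) by $N\frac{T}{M}$ gives
\[
\frac{\sigma_{ra}(\bj,N)}{N\frac{T}{M}}=\frac{M\big(\Ent(\rho_N(\bj))-\Ent(\rho)\big)}{NT}-M\,\frac{1}{N}\sum_{n=1}^N\beta_{j_n}\,\tr\left(\cL_{j_{n-1}}\circ\cdots\circ\cL_{j_1}(\rho)\times\Phi_{j_n}\right).
\]
Here I would apply Theorem \ref{rd} directly, taking as instantaneous observable the family $A(j):=\beta_j\Phi_j$. The ergodic average (\ref{eq:randomergodic}) then gives, for $\bj$ in a set $\mc{T}$ of full $\cP$-measure,
\[
\frac{1}{N}\sum_{n=1}^N\beta_{j_n}\,\tr\left(\cL_{j_{n-1}}\circ\cdots\circ\cL_{j_1}(\rho)\times\Phi_{j_n}\right)\longrightarrow\tr\left(\rho_+^{ra}\times\E_p(A)\right)=\frac{1}{M}\sum_{j=1}^M\beta_j\,\rho_+^{ra}(\Phi_j).
\]
Multiplying by $M$ and using again that the entropy boundary term is negligible yields $\sigma_{ra}^+(\bj)=-\sum_{j=1}^M\beta_j\rho_+^{ra}(\Phi_j)$, $\cP$-almost surely. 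In particular the existence of both limits defining $\sigma_{cy}^+$ and $\sigma_{ra}^+(\bj)$ is obtained as a byproduct of the convergence of these averages.

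As for difficulties, the argument is essentially a direct application of the two ergodic facts already at hand (Ces\`aro convergence via primitivity in the cyclic case, and Theorem \ref{rd} in the random case), so there is no genuine obstacle. The only points requiring a little care are the bookkeeping of the normalizing constants $T$ and $T/M$, and the observation that the temperature factors $\beta_{j_n}$ can be absorbed into the instantaneous observable $A(j_n)=\beta_{j_n}\Phi_{j_n}$ so that Theorem \ref{rd} applies verbatim; the boundedness of the entropy difference, which makes it negligible after normalization, is immediate in finite dimension.
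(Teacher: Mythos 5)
Your proof is correct and follows essentially the same route as the paper: the authors likewise discard the uniformly bounded entropy boundary terms (finite dimension), obtain the cyclic case from (\ref{ebecy}) together with the Ces\`aro convergence provided by Proposition \ref{erprop}, and obtain the random case from (\ref{ebe2}) and Theorem \ref{rd} applied to the instantaneous observables $A(j):=\beta_j\Phi_j$. Your version merely spells out the normalizations $T$ and $T/M$ more explicitly, which the paper leaves implicit.
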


\noindent {\it Proof.} Since $\cH_\cS$ has finite dimension the von Neumann entropies $\Ent\left(\Lj_{cy}^N(\rho)\right)$ and $\Ent\left(\rho_N(\bj)\right)$ are uniformly bounded hence give no contribution to the asymptotic entropy production. In the cyclic case the result thus follows directly from (\ref{ebecy}) and Proposition \ref{erprop}.

In the random case, using (\ref{ebe2}), Proposition \ref{erprop} and Theorem \ref{rd} with $A(j):=\beta_j\Phi_j$ we get that $\sigma_{ra}^+(\bj)$ indeed exists $\cP$-almost surely and is given by
\[
\sigma_{ra}^+(\bj) = - M \rho_+^{ra} \left( \E_p(\beta\Phi)\right) = - \sum_{j=1}^M \beta_j  \rho_+^{ra} \left(\Phi_j\right). 
\]

\begin{rema}\label{def:timescale} In the cyclic case we have divided by $T$ which is the total duration of a cycle while in the random case we have divided by $\frac{T}{M}$ which is the mean duration of an interaction, in agreement with (\ref{def:entropyprod2}). These different scalings will appear regularly in the sequel. 

In the random case we could have equivalently divided by the total duration $\sum_{n=1}^N \tau_{j_n}$ instead of $N\frac{T}{M}$ because $\frac{1}{N}\sum_{n=1}^N \tau_{j_n}\to \frac{T}{M}$ by the strong law of large numbers.
\end{rema}


\subsection{Green-Kubo and Onsager}\label{lrt}

The next step is linear response theory which is concerned with the response of the system to a small perturbation from equilibrium. In this section we derive the analog of the Green-Kubo  fluctuation-dissipation formula which relates the transport coefficients of the system out of equilibrium to flux-flux correlation at equilibrium. Linear response theory will be completed in Section \ref{sec:fluctuationthm} by the Central Limit, \emph{aka} Fluctuation-Dissipation, Theorem \ref{clt}.

In order to state the Green-Kubo formula we first recall the notion of dissipation function. Recall that the characteristic time of the system is $T$.
\begin{defi} The dissipation function associated to $\cR_j$ is 
\begin{equation}\label{def:dissipation}
\cD_j(X,Y) := \frac{1}{T}\left(\cL_j^*(X^*Y)-\cL_j^*(X^*)Y-X^*\cL_j^*(Y)+X^*Y\right).
\end{equation}
\end{defi}
\noindent If the Kraus decomposition of $\cL_j$ is given by (\ref{eq:kraus}), it is easy to see that  $\ds \cD_j(X,Y) = \frac{1}{T}\sum_{i\in I} \, [V_i,X]^*[V_i,Y]$. In particular $\ds \cD_j(X,X) = \frac{1}{T}\sum_{i\in I} \, [V_i,X]^*[V_i,X]$ is non-negative and is zero only if $X$ commutes with all the $V_i$'s.

\begin{rema} Dissipation function has been introduced in \cite{Li} for continuous-time quantum dynamical semigroups $\left(\e^{tL}\right)_{t\geq 0}$ as the sesquilinear map $D$ acting on $\cB(\cH_\cS)$ and defined by
\[
D(X,Y) = L(X^*Y)-L(X^*)Y-X^*L(Y).
\]
Here $L$ is the Lindblad generator of a semigroup of unital completely positive maps, i.e. corresponding to the Heisenberg picture. If we fix some characteristic time $T$, according to (\ref{def:dissipation}) the dissipation function associated to the unital completely positive map $\e^{T L}$ is
\[
\cD_T(X,Y) := \frac{1}{T}\left(\e^{T L}(X^*Y)-\e^{T L}(X^*)Y-X^*\e^{T L}(Y)+X^*Y\right)
\]
and it is easy to see that $\ds \lim_{T\to 0} \cD_T(X,Y)=D(X,Y)$.
\end{rema}

\begin{theo}[GK formula and Onsager relations]\label{gko} Suppose (\nameref{ergodicity}) and (\nameref{Non-Entanglement}) hold.  

\noindent 1) The maps $\bfzeta \mapsto \rho^\s_{+,\bfzeta}$, $\bfzeta \mapsto \Phi_{j,\bfzeta}$ and $\bfzeta \mapsto \Phi^{cy}_{j,\bfzeta}$ are infinitely differentiable. The quantities 
\begin{equation*}\label{def:kinetic coef}
L_{jk}^\s := \partial_{\zeta_k} \rho^\s_{+,\bfzeta} (\Phi^\s_{j,\bfzeta}) \big\lceil_{\bfzeta = 0},   \quad \s = cy \mbox{ or } ra, 
\end{equation*} 
are called the kinetic coefficients.

\noindent 2) We have the following analogs of the Green-Kubo formula
\begin{eqnarray}
L_{jk}^{cy} & = & T \sum_{n=0}^{+\infty} \rho_+\Big( \cL_{k+1}^*\circ\cdots\circ \cL_M^*\circ \Lj_{cy}^{*n} \circ \cL_1^*\circ\cdots\circ\cL_{j-1}^* (\Phi_j) \times \Phi_{k,\rev} \Big) \label{gkc} \\ & & \ + \delta_{j>k}\, T\rho_+\Big( \Lj_{k+1}^*\circ\cdots\circ\Lj_{j-1}^*(\Phi_j) \times \Phi_{k,\rev}\Big) +\frac{1}{2}\delta_{jk} \rho_+(\cD_j(H_\Sy ',H_\cS')),\nonumber \\
L_{jk}^{ra} & = & \frac{T}{M} \sum_{n=0}^{+\infty}\rho_+(\Lj_{ra}^{*n}(\Phi_j)\Phi_{k,\rev}) + \frac{1}{2}\delta_{jk} \rho_+(\cD_j(H_\Sy ',H_\cS')),\label{gkr}
\end{eqnarray}
where all the quantities on the right-hand side are calculated at equilibrium $\bfzeta=0$, e.g. $\rho_+$ stands for $\rho_+=\rho_{+,0}^{cy}=\rho_{+,0}^{ra}$ (see Remark \ref{rem:NEequilibriumstate}).

\noindent 3) If moreover Assumption (\nameref{Time-Reversal}) holds, then the following analogs of the Onsager reciprocity relations are satisfied:
\begin{equation}\label{ons}
L_{jk}^{cy} = L_{kj}^{rcy} \ \mbox{ and } \ L_{jk}^{ra} = L_{kj}^{ra},
\end{equation}
where $L_{kj}^{rcy}$ denotes the kinetic coefficient associated to the cyclic RIS in which we reverse the order of the interactions, i.e. $\cL^{rcy}=\cL_1\circ\cdots\circ \cL_M$. In the specific case $M=2$, we retrieve the usual Onsager reciprocity relations $L_{jk}^{cy}=L_{kj}^{cy}$ for the cyclic case too.
\end{theo}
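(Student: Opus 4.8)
The plan is to establish part 1 directly, then prove the Green--Kubo formulas (part 2) by differentiating the stationary expectation $\rho_{+,\bfzeta}^\s(\Phi_{j,\bfzeta}^\s)$ at equilibrium and splitting the result, by the Leibniz rule, into a \emph{state-derivative} term and an \emph{observable-derivative} term, and finally to derive the Onsager relations (part 3) from time-reversal invariance. For part 1, infinite differentiability of $\bfzeta\mapsto\rho_{+,\bfzeta}^\s$ is Proposition \ref{erprop}(2); the fluxes $\Phi_{j,\bfzeta}$ and $\Phi_{j,\bfzeta}^{cy}$ depend on $\bfzeta$ only through the strictly positive Kraus weights $f_i(\bfzeta)$ exhibited in the proof of Proposition \ref{erprop} (for $\Phi_j^{cy}$ through a finite composition of such maps), hence are smooth. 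Throughout part 2 I work at $\bfzeta=0$, where all temperatures are equal so that $\rho_{+,0}^{cy}=\rho_{+,0}^{ra}=\rho_+$ and every steady flux vanishes (Remark \ref{rem:zerofluxeq}); I write $L_{jk}^\s = \tr\big(\partial_{\zeta_k}\rho_{+,\bfzeta}^\s\big|_{0}\,\Phi_j^\s\big) + \tr\big(\rho_+\,\partial_{\zeta_k}\Phi_{j,\bfzeta}^\s\big|_{0}\big)=:\mathrm{(I)}+\mathrm{(II)}$.

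For term $\mathrm{(I)}$ I would differentiate the fixed-point relation $\cL_{\s,\bfzeta}(\rho_{+,\bfzeta}^\s)=\rho_{+,\bfzeta}^\s$, giving $(1-\cL_\s)\,\partial_{\zeta_k}\rho_{+,\bfzeta}^\s\big|_{0}=(\partial_{\zeta_k}\cL_\s)(\rho_+)$, whose right-hand side is traceless since $\cL_{\s,\bfzeta}$ is trace-preserving for all $\bfzeta$. Primitivity makes $1$ a simple dominant eigenvalue (Proposition \ref{spprop}), so $1-\cL_\s$ is boundedly invertible on traceless operators with $(1-\cL_\s)^{-1}=\sum_{n\ge0}\cL_\s^n$ (geometric convergence from the spectral gap); hence $\mathrm{(I)}=\sum_{n\ge0}\tr\big((\partial_{\zeta_k}\cL_\s)(\rho_+)\,\cL_\s^{*n}(\Phi_j^\s)\big)$. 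The crux is the identity
\[ (\partial_{\zeta_k}\cL_k)(\rho_+) = T\,\rho_+\,\Phi_{k,\rev}, \]
which I would obtain from $\partial_{\zeta_k}\rho_{\cE_k,\zeta_k}=\rho_{\cE_k}(H_{\cE_k}-\langle H_{\cE_k}\rangle)$ and, decisively, the Non-Entanglement commutation $[\rho_+\otimes\rho_{\cE_k},U_k]=0$ from (\ref{eq:effecthamproof}): this lets me pull $\rho_+$ out of the partial trace and recognize $\tr_{\rho_{\cE_k}}(U_kH_{\cE_k}U_k^*-H_{\cE_k})=T\Phi_{k,\rev}$. The chain rule then gives $(\partial_{\zeta_k}\cL_{ra})(\rho_+)=\tfrac{T}{M}\rho_+\Phi_{k,\rev}$ and $(\partial_{\zeta_k}\cL_{cy})(\rho_+)=T\,\cL_M\circ\cdots\circ\cL_{k+1}(\rho_+\Phi_{k,\rev})$. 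Using that $\Phi_{k,\rev}$ commutes with $H_\cS'$, hence with $\rho_+$ (Remark \ref{rem:effecthaminvariance} and its time-reversed analogue via Lemma \ref{lem:gauge}), together with the duality $\tr(\cL^n(X)A)=\tr(X\,\cL^{*n}(A))$, term $\mathrm{(I)}$ reproduces exactly the flux-flux correlation sums of (\ref{gkr}) and (\ref{gkc}).

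For term $\mathrm{(II)}$ I treat the two cases separately. In the random case $\Phi_j$ depends only on $\zeta_j$, so only $k=j$ survives; writing $\Phi_j=\tfrac1T(\cL_j^*(H_\cS')-H_\cS')$ from (\ref{translation2}) reduces $\mathrm{(II)}$ to $\delta_{jk}\tfrac1T\rho_+\big((\partial_{\zeta_j}\cL_j^*)(H_\cS')\big)$, and combining the conservation law (\ref{conservation}), the invariance $\cL_j(\rho_+)=\rho_+$ and $[\cL_j^*(H_\cS'),H_\cS']=0$ yields
\[ \tfrac1T\,\rho_+\big((\partial_{\zeta_j}\cL_j^*)(H_\cS')\big) = -\rho_+(\Phi_j\,H_\cS') = \tfrac12\,\rho_+\big(\cD_j(H_\cS',H_\cS')\big), \]
the diagonal dissipation term. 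In the cyclic case $\Phi_j^{cy}=\cL_1^*\circ\cdots\circ\cL_{j-1}^*(\Phi_j)$ depends on $\zeta_1,\dots,\zeta_j$: when $k=j$ the outer maps are stripped by $\rho_+$-invariance and the same dissipation term appears, while when $k<j$ the derivative lands on the inner factor $\cL_k^*$ and the dual identity
\[ \rho_+\big((\partial_{\zeta_k}\cL_k^*)(B)\big) = T\,\rho_+(\Phi_{k,\rev}\,B) \]
(the Heisenberg counterpart of the previous identity, obtained from it by duality) produces precisely the $\delta_{j>k}$ boundary term of (\ref{gkc}). Assembling $\mathrm{(I)}$ and $\mathrm{(II)}$ gives (\ref{gkr}) and (\ref{gkc}).

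For the Onsager relations I would use the $\rho_+$-inner product $\langle A,B\rangle_{\rho_+}=\tr(\rho_+A^*B)$ and rewrite each correlation term as $\rho_+(\cL_\s^{*n}(\Phi_j)\Phi_{k,\rev})=\langle\Phi_{k,\rev},\cL_\s^{*n}(\Phi_j)\rangle_{\rho_+}$. Moving $\cL_\s^{*n}$ to the other slot via the $\rho_+$-adjoint, invoking time-reversal invariance $\cL_\s^{*\rho_+}=\Theta\circ\cL_\s^*\circ\Theta$ (Corollary \ref{coro:tri}), and using $\Theta(\rho_+)=\rho_+$, $\Theta(\Phi_{k,\rev})=-\Phi_k$ (from (\ref{eq:phireversal})) together with $\langle\Theta A,\Theta B\rangle_{\rho_+}=\langle B,A\rangle_{\rho_+}$ swaps the roles of $j$ and $k$; the diagonal $\cD$-term is manifestly symmetric in $j,k$. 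In the random case $\cL_{ra}^{*\rho_+}=\Theta\cL_{ra}^*\Theta$ gives $L_{jk}^{ra}=L_{kj}^{ra}$ at once. In the cyclic case the identical computation reverses the order of the composition $\cL_1^*\circ\cdots\circ\cL_M^*$ --- because $\Theta$ does not permute the interactions (Remark \ref{pwtr}) --- and thereby yields the reversed-cycle coefficient $L_{kj}^{rcy}$; for $M=2$ I would then add the two sum rules $\sum_kL_{jk}^{cy}=0$ (differentiate the vanishing of fluxes along equal temperatures) and $\sum_jL_{jk}^{cy}=0$ (the first law, Proposition \ref{prop:energyconservation}), which force the $2\times2$ matrix to be symmetric. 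I expect the main difficulty to be exactly the two structural identities above: extracting the reversed flux $\Phi_{k,\rev}$ from $\partial_{\zeta_k}\cL_k$, which rests entirely on the Non-Entanglement commutation, and matching the diagonal derivative to $\tfrac12\cD_j$; together with the careful bookkeeping, in the cyclic case, of where $\partial_{\zeta_k}$ falls in the nested composition defining $\rho_{+,\bfzeta}^{cy}(\Phi_{j,\bfzeta}^{cy})$.
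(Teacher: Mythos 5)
Your proposal is correct and follows essentially the same route as the paper's proof: the Leibniz split of $L^\s_{jk}$ into a state-derivative and an observable-derivative term, the inversion of $\un-\cL_\s$ on traceless operators via the spectral gap and the Neumann series, the key identity extracting the reversed flux from $\partial_{\zeta_k}\cL_{k,\zeta_k}$ through the Non-Entanglement commutation (this is the paper's Lemma \ref{lem:rdmderiv}, which you state in the equivalent Schr\"odinger-picture form $(\partial_{\zeta_k}\cL_k)(\rho_+)=T\rho_+\Phi_{k,\rev}$ --- valid as written, since $\rho_+\otimes\un$ pulls out of the partial trace on the left, and your commutation claim $[\Phi_{k,\rev},\rho_+]=0$, needed only to match orderings, does follow from (\ref{conservation}) applied to $U_j^*$), the identification $-\rho_+(\Phi_j H_\cS')=\tfrac12\rho_+(\cD_j(H_\cS',H_\cS'))$ for the diagonal term via $\cL^*_{j,\rev}=\cL_j^{*\rho_+}$ and $[\cL_j^*(H_\cS'),H_\cS']=0$, and the $\Theta$-conjugation in the $\rho_+$-inner product for the Onsager relations, exactly as in the paper's computation around (\ref{onsproof}). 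The one point where you genuinely depart from the paper is the $M=2$ cyclic case: the paper asserts $L^{cy}_{jk}=L^{cy}_{kj}$ without writing a proof (implicitly, for $M=2$ the reversed cycle is the original cycle shifted by half a period), whereas you derive it from the two sum rules $\sum_k L^{cy}_{jk}=0$ (differentiating the identically vanishing fluxes along the diagonal $\bfzeta=(\zeta,\ldots,\zeta)$, Remark \ref{rem:zerofluxeq}, which rests on (\nameref{Non-Entanglement})) and $\sum_j L^{cy}_{jk}=0$ (differentiating the first law, Proposition \ref{prop:energyconservation}, valid for all $\bfzeta$), which force a $2\times2$ matrix to be symmetric. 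This argument is correct and self-contained, and it buys something the paper's route does not make explicit: it shows the $M=2$ cyclic reciprocity relation holds under (\nameref{ergodicity}) and (\nameref{Non-Entanglement}) alone, without invoking (\nameref{Time-Reversal}).
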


\noindent Let us comment on identites (\ref{gkc}) and (\ref{gkr}). 

i) The sum in (\ref{gkc}) describes the flux-flux correlation at equilibrium between reservoirs $j$ and $k$. Since the $\cR_j$ interacts only once during each cycle, using (\ref{def:cyclicflux})-(\ref{def:cyclicfluxrev}), it is easy to see that this sum can actually be written in terms of  $\cL_{cy}$ leading to the more condensed form
\begin{equation}\label{gkc2}
\sum_{n=0}^{+\infty} \rho_+ \Big( \cL_{k+1}^*\circ\cdots\circ \cL_M^*\circ \Lj_{cy}^{*n} \circ \cL_1^*\circ\cdots\circ\cL_{j-1}^* (\Phi_j) \times \Phi_{k,\rev} \Big) = \sum_{n=0}^{+\infty} \rho_+ (\Lj_{cy}^{*n} (\Phi_j^{cy}) \Phi_{k,\rev}^{cy}).
\end{equation}

ii) The sum (\ref{gkc2}) takes into account only correlations when at least one cycle has been achieved. The second term in the right-hand side of (\ref{gkc}) takes into account those contribution of the flux-flux correlation between reservoirs $k$ and $j$ if less than a cycle has already occured, which can happen only if $j>k$. 

iii) In both (\ref{gkc}) and (\ref{gkr}) the $\delta_{jk}$ term takes into account the self-correlation of reservoir $j$ with itself during its first interaction with $\cS$. It is non-negative and vanishes only if $\cS$ is not effectively coupled to $\cE_j$ in the following sense. If $\cL_j=\sum_i V_i \cdot V_i^*$ then $\rho_+(\cD_j(H_\Sy ',H_\cS'))=0$ iff $\cD_j(H_\Sy ',H_\cS')=0$ (the latter is non-negative and $\rho_+$ is positive definite) which in turn holds iff $[V_i,H_\cS']=0$ for all $i$. But this implies that $\cL_j^*(H_\cS')=\sum_i V_i^*H_\cS'V_i = \sum_i V_i^*V_iH_\cS'=H_\cS'$ because $\cL_j^*$ is unital, and hence $\Phi_j=0$ by (\ref{translation2}): whatever are the various temperatures and the state of $\cS$ there is no flux between $\cS$ and $\cR_j$. We mention that a similar term appears in \cite{JPW} in the context of quantum dynamical semigroups. 

iv) Finally the prefactors $T$ and $\frac{T}{M}$ represent the ``time-step''. In the cyclic case, $\cL_{cy}$ describes the evolution during a cycle while $\cL_{ra}$ is associated only to a time-step $\frac{T}{M}$.

\begin{rema} The Onsager reciprocity relations are satisfied only for the random and the $M=2$ cyclic case. In the $M>2$ cyclic case Assumption  (\nameref{Time-Reversal}) reverses the time only locally, leaving the order of the interaction unchanged (see Remark \ref{pwtr}). This explains why $L_{jk}^{cy}$ has to be compared to the kinetic coefficient $L_{kj}^{rcy}$ of the reversed-order cyclic model.
\end{rema}

\noindent {\bf Notation} In the rest of the paper, any quantity with a subscript/superscript $rcy$ should be understood as the quantity associated to the reversed-order cyclic RIS in which $\cS$ interacts first with $\cR_M$, then $\cR_{M-1}$,... For example $\cL_{rcy}=\cL_1\circ\cdots\circ \cL_M$, $\Phi_j^{rcy}= \cL_M^*\circ\cdots\circ \cL_{j+1}^*(\Phi_j)$. Of course, any result which holds true for the cyclic RIS also holds for the reversed cyclic one.


\section{Entropic fluctuation}\label{fluc}

The purpose of this section is to go beyond Proposition \ref{prop:2ndlaw} and to study the statistical fluctuations of the entropy fluxes $\beta_j\Phi_j$ going out the reservoirs $(\Res_j)_{1\leq j\leq M}$.


\subsection{Full counting statistics}

To study the entropy fluctuation, following e.g. \cite{JOPP,BDBP,benoist2015full,BPP,BPR}, we consider the statistics of the increments of the entropy observable as given by a two time measurement protocol, also called Full Couting Statistics (FCS), and which we briefly recall for the convenience of the reader. Consider a quantum system $\Sy$ with underlying finite dimensional Hilbert space $\Hi$ and let $A$ be the observable of interest. Suppose the system is in the state $\rho$ when we perform a first measurement of $A$. The possible outcomes of the measurement are eigenvalues of $A$ and $a\in \sp(A)$ is observed with probability 
\begin{equation*}
\mb{P} (A = a) = \tr(\Pi_a (A) \rho \Pi_a (A))
\end{equation*}
where $\Pi_a (A)$ is the spectral projector of $A$ associated to the eigenvalue $a$. After this first measurement the state of the system reduces to $\ds \frac{\Pi_a (A) \rho \Pi_a (A))}{\tr(\Pi_a (A) \rho \Pi_a (A))}$. Subsequently, if the evolution of the system during some time interval of length $\tau$ is described by some unitary operator $U$, a second measurement of $A$ at time $\tau$ gives the value $a'\in \sp(A)$ with probability
$$
\frac{\tr \big( \Pi_{a'}(A) U \Pi_a(A) \rho \Pi_a(A) U^*\Pi_{a'}(A)\big)}{\tr(\Pi_a(A) \rho)}.
$$
The joint probability distribution of the two measurements is thus given by 
$$
\P(a,a') = \tr \big( \Pi_{a'}(A) U \Pi_a(A) \rho \Pi_a(A) U^*\Pi_{a'}(A)\big),
$$ 
and the statistics of the increment $\Delta A$ of $A$ as given by this protocol is therefore given by
\begin{equation*}
\mb{P} (\Delta A=\delta) = \sum_{\substack{a, a' \in \sp(A) \times \sp(A) \\ a' - a = \delta}}  \tr \big( \Pi_{a'}(A) U \Pi_a(A) \rho \Pi_a(A) U^*\Pi_{a'}(A)\big).
\end{equation*}
Note that if $A$ commutes with the initial state $\rho$ we simply have 
\begin{equation}\label{eq:probafcs}
\mb{P} (\Delta A=\delta) = \sum_{a,a' \, |\, a' - a = \delta}  \tr \big( \Pi_{a'}(A) U \Pi_a(A) \rho U^* \big)
\end{equation} 
so that the expectation of $\Delta A$ as given by this two-measurement protocol is 
\begin{eqnarray}\label{eq:fcsexpectation}
\mb{E}_{\mb{P}} (\Delta A) & = &\!\!\! \sum_{a, a' \in \sp(A) \times \sp(A)} \!\!\! (a'-a)\, \tr \left( \Pi_{a'}(A) U \Pi_a(A) \rho \, U^*\right) 
 \ = \ \tr \big( \rho (U^*AU-A)\big),
\end{eqnarray}
and coincides with the expectation value of the flux observable $U^*AU-A$ associated to $A$.

Note also that, although it does not appear in the notation, the probability law $\P$ depends on the initial state $\rho$ of the system when the first measurement is performed.

\medskip

In this section, we are interested in the full statistics of the increment of entropy observables of the probes, where the entropy observables are defined in Section \ref{entropy}. Since probes are initially in thermal equilibrium, up to an irrelevant constant the entropy observable of  $\cE_j$ is  $S_{\cE_j} := \beta_j H_{\cE_j}$. The corresponding entropy increment observable is therefore $U_j^*S_{\cE_j}U_j-S_{\cE_j}= \beta_j (U_j^*H_{\cE_j}U_j-H_{\cE_j})$. We shall also consider the energy increment observables $U_j^*H_{\cE_j}U_j-H_{\cE_j}$ which up to a prefactor $\frac{1}{T}$ correspond to the flux observables considered in Section \ref{sec:fluxobs}.

Let now $\boldsymbol{j} :=(j_n)_{n\in\N^*}\in\{1,\ldots, M\}^{\N^*}$ be the sequence of indices describing the sequence of probes with which $\cS$ interacts, and denote by $\En_{j_n}^n$ the $n^{th}$ probe. Recall that $\En_{j_n}^n$ is a copy of $\En_{j_n}$. Then, according to (\ref{eq:probafcs}), the probability distribution of the increment of entropy for the $n$-th interaction is given by
\begin{equation*}
\mb{P} \left(\Delta \mc{S}_{\En_{j_n}^n} = \varsigma\right) = \sum_{s' - s = \varsigma}  \tr\left(\Pi_{s'}\left(S_{\En_{j_n}^n}\right)U_n\Pi_s \left(S_{\En_{j_n}^n}\right) \times \rho \otimes \rho_{\En_{j_n}^n} \times U_n^* \right), 
\end{equation*}
where the sum runs over $s,s' \in \sp \left(S_{\En_{j_n}^n}\right)$ such that $s' - s = \varsigma$, $\rho$ is the state of $\cS$ at the beginning of the interaction and we have used that the observable  $S_{\En_{j_n}^n}$ commutes with the ``initial state'' $\rho \otimes \rho_{\En_{j_n}^n}$.  

To describe the statistics of entropy increments during the $N$ first interactions let us introduce first some notation. For $\bfs := ((s_n,s'_n))_n \in (\mb{R}^2)^{\mb{N}^*}$ and, for all $N\geq 1$, we define
\begin{equation*}
 \mc{U}_N^\bfs (\bj):= U_{(s_N,s'_N)}^N \left(S_{\En_{j_N}}^N\right) \times \cdots \times U_{(s_1,s'_1)}^1 \left(S_{\En_{j_1}}^1\right), \quad \mbox{where } \ U_{(s,s')}^n(S)=\Pi_{s'}(S)U_n\Pi_s(S)
\end{equation*}
Note that clearly $U_{(s,s')}^n(S)$ is non-zero only when $s,s'\in \sp(S)$. 
Then the statistics of entropy increments during the $N$ first interactions is given by
\begin{equation*}
\mb{P}\left(\Delta \mc{S}_{\En_{j_1}^1} = \varsigma_1, \cdots, \Delta \mc{S}_{\En_{j_N}^N} = \varsigma_N\right)=\sum\tr\left(\U_N^s(\bs{j})\times\rho_{tot}^N(\bs{j})\times\U_N^{s}(\bs{j})^*\right),
\end{equation*}
where the sum runs over $\bfs \in (\mb{R}^2)^{N}$ such that $\forall 1 \leq n \leq N, s_n, s'_n \in \sigma\left(S_{\En_{j_n}^n}\right) \times \sigma\left(S_{\En_{j_n}^n}\right)$ and $s'_n - s_n = \varsigma_n$, and where $\rho_{tot}^N(\bs{j})$ denotes the total initial state $\rho\otimes\rho_{\En_{j_1}^1}\otimes\cdots\otimes\rho_{\En_{j_N}^N}$.

\begin{defi}\label{def:randomincrements} The (random) vectors of entropy increments and energy increments after $N$ interactions are 
$\mc{S}_{\Res}^{N}(\bs{j}):=\left(\mc{S}_{\Res_1}^{N}(\bs{j}),\cdots,\mc{S}_{\Res_M}^{N}(\bs{j})\right)$, $\Q_{\Res}^{N}(\bs{j}):=\left(\Q_{\Res_1}^{N}(\bs{j}),\cdots,\Q_{\Res_M}^{N}(\bs{j})\right)$ where
\begin{equation*}
\mc{S}_{\Res_j}^{N}(\bs{j}):=\sum_{\substack{1\leq n\leq N\\j_n=j}}\Delta\mc{S}_{\En_{j_n}^n}, \quad \Q_{\Res_j}^{N}(\bs{j}):= - \beta_j^{-1}\mc{S}_{\Res_j}^{N}(\bs{j}).
\end{equation*}
\end{defi}

\begin{rema} The sign convention in the definition of $\Q_{\Res_j}^{N}(\bs{j})$ is in agreement with the one of the flux observables $\Phi_j$ of Section \ref{sec:fluxobs}.
\end{rema}

We obviously have the following 
\begin{prop} The joint probability distribution of $\mc{S}_{\Res}^{N}(\bs{j})$ is given by
\begin{equation}\label{eq:entropyincrementproba}
\mb{P} \left(\mc{S}_{\Res}^{N}(\bs{j})=\bs\varsigma\right)=\sum\tr\left(\U_N^s(\bs{j}) \times \rho_{tot}^N(\bs{j}) \times \U_N^{s}(\bs{j})^*\right),
\end{equation}
where the sum runs over all $\bfs$ such that for all $1 \leq n \leq N, s_n, s'_n \in \sigma\left(S_{\En_{j_n}^n}\right) \times \sigma\left(S_{\En_{j_n}^n}\right)$ and for any $j\in\{1,\ldots,M\}$ one has
$\ds \sum_{\substack{1 \leq n \leq N\\j_n = j}} s'_n - s_n = \varsigma_j$ with $\bs\varsigma=(\varsigma_1,\ldots,\varsigma_M)$.
\end{prop}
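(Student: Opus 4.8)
The statement follows directly from the joint distribution of the individual increments $\Delta\mc{S}_{\En_{j_n}^n}$ established just above, combined with the defining relation $\mc{S}_{\Res_j}^{N}(\bj)=\sum_{1\leq n\leq N,\,j_n=j}\Delta\mc{S}_{\En_{j_n}^n}$ from Definition \ref{def:randomincrements}. The plan is simply to read off the law of the aggregated quantities by summing (marginalizing) the law of the full vector of individual increments over the fibers cut out by the reservoir-wise constraints $\sum_{n:\,j_n=j}\delta_n=\varsigma_j$. There is no analytic content to add beyond what is already contained in the joint-distribution formula: that the FCS probabilities are well defined and that $S_{\En_{j_n}^n}$ commutes with $\rho\otimes\rho_{\En_{j_n}^n}$ have already been used to obtain that formula.

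Concretely, I would first write the event $\{\mc{S}_{\Res}^{N}(\bj)=\bs\varsigma\}$ as the disjoint union, over all tuples $\delta=(\delta_1,\dots,\delta_N)\in\R^N$ with $\sum_{n:\,j_n=j}\delta_n=\varsigma_j$ for every $j\in\{1,\dots,M\}$, of the elementary events $\{\Delta\mc{S}_{\En_{j_n}^n}=\delta_n,\ \forall\,1\leq n\leq N\}$. Distinct values of the increment vector give disjoint events and only finitely many $\delta$ contribute, each $\delta_n$ being a difference of two eigenvalues in $\sigma(S_{\En_{j_n}^n})$, so additivity of $\P$ yields
\begin{equation*}
\P\left(\mc{S}_{\Res}^{N}(\bj)=\bs\varsigma\right)=\sum_{\substack{\delta\in\R^N\\ \sum_{n:\,j_n=j}\delta_n=\varsigma_j\ \forall j}}\P\left(\Delta\mc{S}_{\En_{j_1}^1}=\delta_1,\dots,\Delta\mc{S}_{\En_{j_N}^N}=\delta_N\right).
\end{equation*}
Substituting the established expression for the joint law of the individual increments, in which the inner summation runs over all $\bfs$ with $s_n,s'_n\in\sigma(S_{\En_{j_n}^n})$ and $s'_n-s_n=\delta_n$, and combining the two layers of summation, the pointwise constraints $s'_n-s_n=\delta_n$ together with the outer constraints $\sum_{n:\,j_n=j}\delta_n=\varsigma_j$ collapse to exactly $\sum_{n:\,j_n=j}(s'_n-s_n)=\varsigma_j$ for each $j$, while the condition $s_n,s'_n\in\sigma(S_{\En_{j_n}^n})$ is retained. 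This is precisely the index set of the sum in (\ref{eq:entropyincrementproba}).

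The only point requiring a little care — bookkeeping rather than a genuine obstacle — is verifying the disjointness of the decomposition and checking that summing over the $\delta_n$ subject to the reservoir-wise constraints is equivalent to summing directly over $\bfs$ subject to the single aggregate constraint per reservoir. This is a straightforward rearrangement of a finite sum, which closes the argument.
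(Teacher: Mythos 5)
Your proposal is correct and coincides with what the paper intends: the paper states this proposition without proof (``We obviously have the following''), regarding it as an immediate marginalization of the joint law of the individual increments $\Delta\mc{S}_{\En_{j_n}^n}$ displayed just above, which is exactly the fiber-wise decomposition and rearrangement of a finite sum that you carry out. Your write-up simply makes explicit the disjointness and bookkeeping that the paper leaves implicit, so there is nothing to add or correct.
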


In the case of cyclic RIS, in agreement with (\ref{ebecy}), we shall actually consider
\begin{equation*}\label{def:cyclicincrements}
\mc{S}_{\Res}^{N}(cy):=\mc{S}_{\Res}^{N\times M}(\bs{j}^{cy}), \mbox{ resp. } \mc{Q}_{\Res}^{N}(cy):=\mc{Q}_{\Res}^{N\times M}(\bs{j}^{cy}),
\end{equation*}
the vector of entropy, resp. energy, increments associated to the cyclic RIS after $N$ cycles. 

In the random case note that randomness is now twofold: randomness due to the random order of interactions and a ``quantum'' randomness due to the double measurement protocol, and that the latter depends on the former. Indeed, as already mentioned the probability law $\P$ of a double measurement protocol depends on the initial state of the (entire) system, hence on $\rho$ but more importantly on the sequence $\bj$ of probes (via the $\rho_{\cE^n_{j_n}}$'s). To be precise, given a sequence $\bj$ we should denote $\P_\bj$ instead of $\P$ and then the joint probability distribution with respect  to the two alea is
\begin{equation}
\hat\P(\bj,\bfs) = \cP(\bs{j})\times \P_\bj(\bfs).
\end{equation}


\subsection{Moment generating function}

We will analyze the large time limit, i.e. $N\to\infty$, statistics of the entropy and energy increments through their moment generating functions (MGF). 
\begin{defi}[Moment generating function] Let $\al \in \C^M$. We denote by $r_{N,\rho}^{\bj} (\al)$ and $r_{N,\rho}^{cy} (\al)$ the respective MGF of the vectors $-\mc{S}_{\Res}^{N}(\bs{j})$ and $-\mc{S}_{\Res}^{N}(cy)$ at $\al$, i.e. 
\begin{equation*}
r_{N,\rho}^{\bj} (\al) := \mb{E}_{\mb{P}_\bj} \left(e^{-\sum_{j=1}^M \al_j \mc{S}_{\Res_j}^{N}(\bj)}\right) \mbox{ and } 
r_{N,\rho}^{cy} (\al) := \mb{E}_{\mb{P}} \left(e^{-\sum_{j=1}^M \al_j \mc{S}_{\Res_j}^N(cy)}\right).
\end{equation*}
\end{defi}

Note that the above MGF are defined with respect to the ``quantum'' alea. We have also stressed their dependence on the initial state $\rho$ of the small system. In the random RIS case we shall also consider the MGF with respect to the two alea. 
\begin{defi} For $\al \in \C^M$, let
$\ds 
r_{N,\rho}^{ra} (\al) := \E_{\cP} \left(r_{N,\rho}^{\bj} (\al)\right) = \mb{E}_{\hat\P} \left(e^{-\sum_{j=1}^M \al_j \mc{S}_{\Res_j}^{N}(\bj)}\right).
$
\end{defi}

To understand the MGF $r_{N,\rho}^{\bj} (\al)$, $r_{N,\rho}^{cy} (\al)$ and $r_{N,\rho}^{ra} (\al)$, following \cite{JPW,BDBP,BPP}, we consider the following deformations of the reduced dynamics maps $(\Lj_j)_{1 \leq j \leq M}$.
\begin{defi}\label{defi:Lalpha} For $\al=(\alpha_1,\ldots,\alpha_M) \in \C^M$ we define
\begin{equation*}\label{def:deformedrdm}
 \Lj_j^{[\al]*}(X) := \tr_{\En_j} \big(\un \otimes \rho_{\En_j}^{1-\al_j} \times U_j^* \times X \otimes \rho_{\En_j}^{\al_j} \times U_j\big),
\end{equation*}
and
\begin{equation}\label{def:deformedrdmcyra}
\Lj_{cy}^{[\al]*} := \Lj_1^{[\al]*} \circ \cdots \circ \Lj_M^{[\al]*},\quad \Lj_{ra}^{[\al]*} := \frac{1}{M} (\Lj_1^{[\al]*} + \cdots + \Lj_M^{[\al]*}).
\end{equation}
\end{defi}

\begin{rema} The notation $ \Lj_j^{[\al]*}$ is chosen in order to be consistent with the reduced dynamics maps $\cL_j^*$. Indeed, $\Lj_j^{[0]*}=\cL_j^*$ and $\Lj_{\s}^{[0]*}=\cL_\s^*$, $\s=cy$ or $ra$. Of course  $\Lj_j^{[\al]}$ will denote its dual map so that $\Lj_j^{[0]}=\cL_j$. Note also that for any $j$ the map $\cL_j^{[\al]*}$ only depends on $\alpha_j$.  
\end{rema}

\noindent The connection between the MGF and the $\cL_j^{[\al]*}$'s is provided by the following proposition. For the convenience of the reader we briefly sketch its proof in Section \ref{sec:proofmgfdeformedrdm}.

\begin{prop}\label{prop:mgfdeformedrdm} For any $\al\in \C^M$ and $N\geq 1$ one has
\begin{equation}\label{specmgf}
r_{N,\rho}^{\bj} (\al) = \rho\left( \Lj_{j_1}^{[\al]*} \circ \cdots \circ \Lj_{j_N}^{[\al]*}(\un) \right).  
\end{equation}
As a consequence
\begin{equation}\label{specmgfcyra}
r_{N,\rho}^{cy} (\al) =  \rho\left( \left(\Lj_{cy}^{[\al]*}\right)^N (\un) \right)   \ \mbox{ and } \  r_{N,\rho}^{ra} (\al) = \rho\left( \left(\Lj_{ra}^{[\al]^*}\right)^N (\un) \right).    
\end{equation}
\end{prop}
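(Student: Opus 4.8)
The plan is to unfold the definition of the moment generating functions, write each of them as a weighted sum over the two-measurement outcomes given by (\ref{eq:entropyincrementproba}), and then recognise this weighted sum as an iterated application of the deformed maps of Definition \ref{defi:Lalpha}. First, using the definition of $r_{N,\rho}^{\bj}(\al)$ together with Definition \ref{def:randomincrements} and the identity $\sum_{j=1}^M \al_j\,\mc{S}_{\Res_j}^{N}(\bj)=\sum_{n=1}^N \al_{j_n}(s'_n-s_n)$ (each probe contributes to exactly one reservoir), one gets from (\ref{eq:entropyincrementproba})
\begin{equation*}
r_{N,\rho}^{\bj}(\al) = \sum_{\bfs}\left(\prod_{n=1}^N \e^{-\al_{j_n}(s'_n-s_n)}\right)\tr\left(\U_N^\bfs(\bj)\,\rho_{tot}^N(\bj)\,\U_N^\bfs(\bj)^*\right).
\end{equation*}
Since the increments $s'_n-s_n$ are insensitive to an additive constant in the entropy observable, I take $S_{\En_j}=-\log\rho_{\En_j}$, so that $\rho_{\En_j}=\e^{-S_{\En_j}}$ and the exponential weights can be absorbed into powers of $\rho_{\En_j}$ through the spectral projectors, namely $\e^{-\al s'}\Pi_{s'}(S_{\En_j})=\rho_{\En_j}^{\al}\Pi_{s'}(S_{\En_j})$ and $\e^{\al s}\Pi_{s}(S_{\En_j})\rho_{\En_j}\Pi_{s}(S_{\En_j})=\rho_{\En_j}^{1-\al}\Pi_{s}(S_{\En_j})$.

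The heart of the argument is a proof of (\ref{specmgf}) by induction on $N$, peeling off the first probe. The crucial structural fact is that each probe interacts only once, so the partial trace over $\cH_{\En_{j_1}}$ commutes with all subsequent operators $U_2,\ldots,U_N$ and may be performed first. Summing over $(s_1,s'_1)$, absorbing the weight $\e^{-\al_{j_1}(s'_1-s_1)}$ as above, and using completeness $\sum_s \Pi_s(S_{\En_{j_1}})=\un$, the probe-$1$ block collapses to $\tr_{\En_{j_1}}\big(\rho_{\En_{j_1}}^{\al_{j_1}}\,U_{j_1}(\rho\otimes\rho_{\En_{j_1}}^{1-\al_{j_1}})U_{j_1}^*\big)$, which is precisely $\Lj_{j_1}^{[\al]}(\rho)$, the Schrödinger-picture deformed map dual to $\Lj_{j_1}^{[\al]*}$. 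This reduces the expression to the analogous weighted sum for the shortened sequence $(j_2,\ldots,j_N)$ with $\rho$ replaced by $\Lj_{j_1}^{[\al]}(\rho)$; since the formula is linear in the initial operator, the induction runs over an arbitrary (not necessarily normalised) input and yields $r_{N,\rho}^{\bj}(\al)=\tr\big(\Lj_{j_N}^{[\al]}\circ\cdots\circ\Lj_{j_1}^{[\al]}(\rho)\big)$, which by duality is exactly (\ref{specmgf}). The one point that genuinely needs care — and which I expect to be the main obstacle — is that the outer projectors $\Pi_{s'_1}$ of the first measurement produce a pinching $\sum_{s'_1}\Pi_{s'_1}(\cdot)\Pi_{s'_1}$ rather than the identity; one must check that this pinching becomes invisible once it is multiplied by $\rho_{\En_{j_1}}^{\al_{j_1}}$, which is diagonal in the same eigenbasis, and $\cH_{\En_{j_1}}$ is traced out, the surviving off-diagonal blocks contributing nothing precisely because the probe never interacts again.

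The two identities in (\ref{specmgfcyra}) then follow as consequences. For the cyclic case, I apply (\ref{specmgf}) to the $M$-periodic sequence $\bj^{cy}$ over $N\times M$ interactions; grouping the resulting composition into $N$ identical blocks and recognising $\Lj_1^{[\al]*}\circ\cdots\circ\Lj_M^{[\al]*}=\Lj_{cy}^{[\al]*}$ from (\ref{def:deformedrdmcyra}) gives $r_{N,\rho}^{cy}(\al)=\rho\big((\Lj_{cy}^{[\al]*})^N(\un)\big)$. For the random case, I take the $\cP$-expectation of (\ref{specmgf}); because the $j_n$ are i.i.d. and the composition $\Lj_{j_1}^{[\al]*}\circ\cdots\circ\Lj_{j_N}^{[\al]*}$ is multilinear in the individual (independent) maps, the expectation of the composition factorises into the composition of expectations, and $\E_p(\Lj_j^{[\al]*})=\tfrac1M\sum_{j=1}^M\Lj_j^{[\al]*}=\Lj_{ra}^{[\al]*}$ by (\ref{def:deformedrdmcyra}) yields $r_{N,\rho}^{ra}(\al)=\rho\big((\Lj_{ra}^{[\al]*})^N(\un)\big)$, completing the proof.
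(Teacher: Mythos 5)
Your proof is correct and follows essentially the same route as the paper's: you absorb the exponential weights into powers of the probe Gibbs states via the spectral projectors (using that the increments are insensitive to the normalizing constant, and that the pinching by the first-measurement projectors dies under the partial trace since each probe interacts only once), which is precisely the mechanism behind the paper's rewriting of $r_{N,\rho}^{\bj}(\al)$ as a trace against $\rho_{\cE_{j_n}}^{\al_{j_n}}$ and $\rho_{\cE_{j_n}}^{1-\al_{j_n}}$ followed by the Markovian reduction analogous to how (\ref{rismarkovianform}) follows from (\ref{rishamiltonianform}). The only organizational difference is that the paper resums all $N$ weighted projector sums at once into exponentials of the entropy observables and leaves the final peeling to that analogy, whereas your induction — with the pinching argument and the linearity-in-the-initial-operator point made explicit — carries out the same reduction one probe at a time.
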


\noindent Of course the same approach can be used to study the energy increments.
\begin{defi} Let $\al \in \C^M$. We denote by $\tilde r_{N,\rho}^{\bj} (\al)$ and $\tilde r_{N,\rho}^{cy} (\al)$ the respective MGF of the vectors $\mc{Q}_{\Res}^{N}(\bs{j})$ and $\mc{Q}_{\Res}^{N}(cy)$ at $\al$, i.e. 
\begin{equation*}
\tilde r_{N,\rho}^{\bj} (\al) := \mb{E}_{\mb{P}} \left(e^{\sum_{j=1}^M \al_j \mc{Q}_{\Res_j}^{N}(\bj)}\right) \ \mbox{ and } \
\tilde r_{N,\rho}^{cy} (\al) := \mb{E}_{\mb{P}} \left(e^{\sum_{j=1}^M \al_j \mc{Q}_{\Res_j}^N(cy)}\right),
\end{equation*}
as well as, in the random case, the MGF with respect to the two alea, i.e.
\[
\tilde r_{N,\rho}^{ra} (\al) := \E_{\cP} \left(\tilde r_{N,\rho}^{\bj} (\al)\right) = \mb{E}_{\mb{P},\cP} \left(e^{\sum_{j=1}^M \al_j \mc{Q}_{\Res_j}^{N}(\bj)}\right).
\]
\end{defi}
\noindent It follows directly from Definition \ref{def:randomincrements} that the entropy and energy MGF satisfy the relations
\begin{equation}\label{eq:mgfconnection}
\tilde{r}_{N,\rho}^{\bj} (\al) = r_{N,\rho}^{\bj} \left( \frac{\al}{\beta} \right), \qquad \tilde{r}^{\s}_{N,\rho} (\al) = r^{\s}_{N,\rho} \left( \frac{\al}{\beta} \right), \ \s=cy,\, ra,
\end{equation}
and where $\ds \frac{\al}{\beta}$ denotes the vector $\ds\left(\frac{\al_1}{\beta_1},\ldots,\frac{\al_M}{\beta_M}\right) $.

Although not mentioned explicitly all the above quantities depend on the vector $\bfzeta=(\zeta_1,\ldots,\zeta_M)$ of thermal forces and, as for $\cL_j$, the map $\cL_{j,\bfzeta}^{[\al]}$ actually only depends on $\zeta_j$.

\begin{prop}\label{deformedsp} If (\nameref{ergodicity}) holds then, for any $\bfzeta,\al\in\R^M$, the map $\Lj_{\s,\bf\zeta}^{[\al]*}$ is a primitive CP map. We shall denote by $r_{\bfzeta}^{\s}(\al)>0$ its spectral radius.
\end{prop}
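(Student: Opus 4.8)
The plan is to reduce the primitivity of the deformed maps to that of the undeformed ones, already established in Proposition \ref{erprop}, by exhibiting a common Kraus structure valid for all real $\al$. First I would repeat the computation of the proof of Proposition \ref{erprop}: inserting the spectral decomposition $H_{\En_j}=\sum_k\lambda_{jk}|\varphi_{jk}\rangle\langle\varphi_{jk}|$ (and writing $p_{jk}>0$ for the eigenvalues of the faithful state $\rho_{\En_j}$) into the definition of $\Lj_j^{[\al]*}$ and performing the partial trace over $\En_j$ gives, for real $\al$,
\begin{equation*}
\Lj_{j,\zeta_j}^{[\al]*}(X)=\sum_{k,k'} p_{jk}^{1-\al_j}\,p_{jk'}^{\al_j}\, V_{jkk'}^*\, X\, V_{jkk'},\qquad V_{jkk'}=\langle\varphi_{jk'},\e^{-i\tau_jH_j}\varphi_{jk}\rangle ,
\end{equation*}
where the $V_{jkk'}\in\cB(\cH_\cS)$ depend neither on $\al$ nor on $\bfzeta$. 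Since the coefficients $p_{jk}^{1-\al_j}p_{jk'}^{\al_j}$ are strictly positive, this is a genuine Kraus representation, so $\Lj_{j,\zeta_j}^{[\al]*}$ is completely positive and, crucially, its Kraus operators coincide — up to strictly positive scalar factors — with those of $\cL_j^*=\Lj_j^{[0]*}$, for every real $\al$ and every $\bfzeta$.

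Next I would propagate this to $\Lj_{\s,\bfzeta}^{[\al]*}$, $\s=cy,ra$. Composing two completely positive maps multiplies their Kraus operators (and their positive prefactors), while forming the convex combination $\frac1M(\Lj_1^{[\al]*}+\cdots+\Lj_M^{[\al]*})$ merely takes the union of the families (with the extra positive factor $\tfrac1M$). Hence, by (\ref{def:deformedrdmcyra}), $\Lj_{\s,\bfzeta}^{[\al]*}$ has a Kraus representation whose operators are products, resp. the union, of the $V_{jkk'}$'s, again affected only by strictly positive scalar factors; these are exactly the Kraus operators of $\cL_{\s,\bfzeta}^*=\Lj_{\s,\bfzeta}^{[0]*}$.

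The key step is then the observation that primitivity in the sense of Definition \ref{def:primitive} depends only on the linear span of the $n$-fold products of the Kraus operators, and that this span is unchanged when each Kraus operator is multiplied by a strictly positive scalar, since every product is thereby merely rescaled by a nonzero scalar. Therefore $\Lj_{\s,\bfzeta}^{[\al]*}$ is primitive if and only if $\cL_{\s,\bfzeta}^*$ is. I would also note that primitivity is invariant under passing to the dual map: because $W\mapsto W^*$ is an antilinear bijection of $\cB(\cH_\cS)$, one has ${\rm Span}\{V_{i_1}\cdots V_{i_n}\}=\cB(\cH_\cS)$ if and only if ${\rm Span}\{V_{i_n}^*\cdots V_{i_1}^*\}=\cB(\cH_\cS)$, so $\cL_{\s,\bfzeta}$ is primitive iff $\cL_{\s,\bfzeta}^*$ is. Since (\nameref{ergodicity}) together with Proposition \ref{erprop} ensure that $\cL_{\s,\bfzeta}$ is primitive for every $\bfzeta$, it follows that $\Lj_{\s,\bfzeta}^{[\al]*}$ is primitive for all real $\al$ and $\bfzeta$.

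Finally, Proposition \ref{spprop} applies to the primitive completely positive map $\Lj_{\s,\bfzeta}^{[\al]*}$: its spectral radius $r_\bfzeta^\s(\al)$ is a simple dominant eigenvalue with a positive definite eigenvector $\psi$. To check $r_\bfzeta^\s(\al)>0$, I would test the eigenvalue relation $\Lj_{\s,\bfzeta}^{[\al]*}(\psi)=r_\bfzeta^\s(\al)\,\psi$ against the trace: writing $\Lj_{\s,\bfzeta}^{[\al]*}(\cdot)=\sum_i W_i^*\,\cdot\,W_i$ one gets $r_\bfzeta^\s(\al)\,\tr(\psi)=\tr\big(\psi\sum_i W_iW_i^*\big)>0$, because $\psi$ is positive definite and $\sum_i W_iW_i^*$ is a nonzero positive operator (the $W_i$ cannot all vanish, as the map is primitive); since $\tr(\psi)>0$ this forces $r_\bfzeta^\s(\al)>0$. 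I expect the only genuinely delicate points to be the bookkeeping showing that the positive scalar factors are irrelevant to primitivity, and the elementary but easily overlooked invariance of primitivity under duality; everything else is a direct application of Propositions \ref{erprop} and \ref{spprop}.
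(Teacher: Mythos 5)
Your proposal is correct and follows essentially the same route as the paper, whose proof consists of the single remark that it is ``the same as the one of Proposition \ref{erprop}'': inserting the spectral decomposition of $H_{\En_j}$ shows the deformed maps share the Kraus operators $V_{jkk'}$ of the undeformed ones up to strictly positive scalar factors, so primitivity in the sense of Definition \ref{def:primitive} transfers. Your explicit treatment of the two points the paper leaves implicit — that primitivity is preserved under passing to the dual map, and that the spectral radius is strictly positive — is accurate and fills in the bookkeeping correctly.
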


\begin{proof} The proof is the same as the one of Proposition \ref{erprop}.
\end{proof}

\noindent One can then relate the large $N$ behaviour of the moment generating function to the spectral radius of $\Lj_{\s}^{[\al]*}$ using Proposition \ref{prop:mgfdeformedrdm}.
\begin{theo}\label{theojpw} Suppose (\nameref{ergodicity}) holds. 

\noindent 1) For any initial state $\rho$ and $\alpha,\bfzeta\in \R^M$ one has
\begin{equation}\label{eq:largetimemgf}
r^{\s}_\bfzeta(\al) = \lim_{n \rightarrow +\infty} r_{n,\rho}^{\s} (\al)^{\frac{1}{n}}.
\end{equation}

\noindent 2) If (\nameref{Time-Reversal}) holds, then we have the following version of the Evans-Searles symmetry:
\begin{equation}\label{ES}
r^{cy}_\bfzeta({\bf 1} - \al) = r^{rcy}_\bfzeta (\al), \quad r^{ra}_\bfzeta({\bf 1} - \al)=r^{ra}_\bfzeta(\al)
\end{equation}
where $r^{rcy}_\bfzeta (\al)$ is the spectral radius of $\Lj_{rcy,_\bfzeta}^{[\al]*}:=\Lj_{M,_\bfzeta}^{[\al]*}\circ\cdots\circ\Lj_{1,_\bfzeta}^{[\al]*}$, the deformed reduced dynamics map corresponding to the cyclic RIS with a reversed order, and ${\bf 1}=(1,\ldots,1)$.

\noindent 3) If (\nameref{Non-Entanglement}) holds we have the following translation symmetry
\begin{equation}\label{ts}
r^\s_\bfzeta(\alpha) = r^\s_\bfzeta(\alpha+\lambda \beta^{-1}), \quad \s=cy,\, ra,
\end{equation}
for any $\alpha,\bfzeta\in\R^M$, $\lambda\in\R$ and where $\beta^{-1}=(\beta_1^{-1},\ldots,\beta_M^{-1})$. 
\end{theo}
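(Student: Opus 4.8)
The three parts all rest on the spectral analysis of the primitive maps $\Lj_\s^{[\al]*}$ provided by Proposition \ref{deformedsp}, together with the representation (\ref{specmgfcyra}) of the MGF. For Part 1, Proposition \ref{spprop} tells us that the spectral radius $r^\s_\bfzeta(\al)$ of the primitive map $\Lj_{\s,\bfzeta}^{[\al]*}$ is a simple dominant eigenvalue with strictly positive right and left eigenvectors $m_\al>0$ (an observable) and $\ell_\al>0$ (a faithful state). This gives the spectral decomposition $(\Lj_{\s,\bfzeta}^{[\al]*})^n = r^\s_\bfzeta(\al)^n\,P + R_n$, where $P(\cdot)=\ell_\al(\cdot)\,m_\al/\ell_\al(m_\al)$ is the rank-one projection onto the top eigenspace and $\|R_n\|=O(\mu^n)$ for some $\mu<r^\s_\bfzeta(\al)$. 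Feeding this into (\ref{specmgfcyra}) yields $r^\s_{n,\rho}(\al)=\rho\big((\Lj_{\s,\bfzeta}^{[\al]*})^n(\un)\big)=r^\s_\bfzeta(\al)^n\big[c_{\al,\rho}+O((\mu/r^\s_\bfzeta(\al))^n)\big]$ with $c_{\al,\rho}=\ell_\al(\un)\,\rho(m_\al)/\ell_\al(m_\al)$. The crucial point is that $c_{\al,\rho}$ is \emph{strictly} positive, because $m_\al,\ell_\al$ are positive definite and $\rho$ is a state; positivity of $\Lj_{\s,\bfzeta}^{[\al]*}$ also gives $r^\s_{n,\rho}(\al)\ge 0$, so taking $n$-th roots makes the bracket tend to $1$ and gives (\ref{eq:largetimemgf}) independently of $\rho$.

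For Part 2, the whole Evans-Searles symmetry reduces to a single per-probe reversal identity. Writing $A^\dagger$ for the adjoint of a map $A$ on $\cB(\cH_\cS)$ with respect to $\langle X,Y\rangle=\tr(X^*Y)$, I would first prove that for real $\al$ and each $j$ one has $\Theta\circ\Lj_{j,\bfzeta}^{[\al]*}\circ\Theta = \big(\Lj_{j,\bfzeta}^{[{\bf 1}-\al]*}\big)^\dagger$. A direct computation gives $\big(\Lj_j^{[{\bf 1}-\al]*}\big)^\dagger(X)=\tr_{\En_j}(\un\otimes\rho_{\En_j}^{1-\al_j}\,U_j\,X\otimes\rho_{\En_j}^{\al_j}\,U_j^*)$, while applying $\Theta$ to $\Lj_j^{[\al]*}\circ\Theta(X)$ and using that $\Theta$ intertwines with the partial trace, that $\Theta\otimes\Theta_{\En_j}(U_j)=U_j^*$, and that $\Theta_{\En_j}(\rho_{\En_j})=\rho_{\En_j}$ (the last two being consequences of Assumption (\nameref{Time-Reversal}) and the KMS property, as in Corollary \ref{coro:tri}) produces exactly the same expression; note this uses only (\nameref{ergodicity}) and (\nameref{Time-Reversal}). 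Since $\Theta^2=\un$, inserting $\Theta\circ\Theta$ between consecutive factors promotes this to the composed maps, the key point being that the $\dagger$-adjoint of a product reverses the order of the factors; this yields $\Theta\circ\Lj_{cy,\bfzeta}^{[\al]*}\circ\Theta=\big(\Lj_{rcy,\bfzeta}^{[{\bf 1}-\al]*}\big)^\dagger$ and $\Theta\circ\Lj_{ra,\bfzeta}^{[\al]*}\circ\Theta=\big(\Lj_{ra,\bfzeta}^{[{\bf 1}-\al]*}\big)^\dagger$ (there is no order to reverse in the average). Finally both the Hilbert-Schmidt adjoint and the antilinear conjugation $A\mapsto\Theta A\Theta$ merely complex-conjugate the spectrum, hence preserve the spectral radius; reading off spectral radii and relabelling $\al\mapsto{\bf 1}-\al$ gives (\ref{ES}).

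For Part 3, I would exhibit a similarity transformation implemented by the effective Hamiltonian $H_\cS'$ of Proposition \ref{prop:effectham}. Writing $\gamma=\lambda\beta^{-1}$, i.e. $\gamma_j=\lambda/\beta_j$ so that $\beta_j\gamma_j=\lambda$ is the same for every probe, one substitutes $\rho_{\En_j}^{\al_j\pm\gamma_j}=\rho_{\En_j}^{\al_j}e^{\mp\lambda H_{\En_j}}Z_j^{\mp\gamma_j}$ in the definition of $\Lj_j^{[\al+\gamma]*}$ and commutes the factors $e^{\mp\lambda H_{\En_j}}$ across $U_j,U_j^*$ using the conservation law (\ref{conservation}), namely $[U_j,e^{-s(H_\cS'+H_{\En_j})}]=0$. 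This converts the environment exponentials into a conjugation of the system argument by $e^{\pm\lambda H_\cS'}$, and a short computation yields, with $C(X):=e^{\lambda H_\cS'}Xe^{\lambda H_\cS'}$, the identity $\Lj_{j,\bfzeta}^{[\al+\gamma]*}=C^{-1}\circ\Lj_{j,\bfzeta}^{[\al-\gamma]*}\circ C$. Since $C$ does not depend on $j$, it telescopes through the definitions (\ref{def:deformedrdmcyra}), giving $\Lj_{\s,\bfzeta}^{[\al+\gamma]*}=C^{-1}\circ\Lj_{\s,\bfzeta}^{[\al-\gamma]*}\circ C$. A similarity preserves the spectral radius, so $r^\s_\bfzeta(\al+\gamma)=r^\s_\bfzeta(\al-\gamma)$; replacing $\al$ by $\al+\gamma$ and using that $\lambda\in\R$ is arbitrary yields the translation symmetry (\ref{ts}).

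The main obstacle in both Parts 2 and 3 is the operator algebra: one must move the modular factors $\rho_{\En_j}^{\al_j}$, the energy exponentials $e^{\pm\lambda H_{\En_j}}$ and the (anti)unitaries $U_j,U_j^*$ past one another while respecting the partial trace, the enabling inputs being the time-reversal identities of Corollary \ref{coro:tri} and the conservation law (\ref{conservation}). The single most delicate step is verifying that time reversal commutes with the partial trace, $\Theta\circ\tr_{\En_j}=\tr_{\En_j}\circ(\Theta\otimes\Theta_{\En_j})$, with the correct placement of complex conjugations (which is harmless for the real $\al$ at which the spectral radii $r^\s_\bfzeta(\al)$ are considered).
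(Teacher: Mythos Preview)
Your proof is correct and follows essentially the same strategy as the paper's: Perron--Frobenius asymptotics for Part~1, and isospectral arguments (antilinear conjugation by $\Theta$ for Part~2, similarity for Part~3) based on the same key inputs, namely Assumption~(\nameref{Time-Reversal}) and the conservation law~(\ref{conservation}). The only cosmetic differences are that in Part~2 your Hilbert--Schmidt adjoint $(\cL_j^{[{\bf 1}-\al]*})^\dagger$ coincides, for real $\al$, with the paper's predual $\cL_j^{[{\bf 1}-\al]}$, and in Part~3 the paper obtains the one-sided similarity $\cL_{j,\bfzeta}^{[\al+\lambda\beta^{-1}]*}=\cM_\lambda^{-1}\circ\cL_{j,\bfzeta}^{[\al]*}\circ\cM_\lambda$ with $\cM_\lambda(X)=Xe^{\lambda H_\cS'}$ directly (via Lemma~\ref{lem:Lalphaderiv}), whereas your two-sided conjugation gives $r^\s_\bfzeta(\al+\gamma)=r^\s_\bfzeta(\al-\gamma)$ followed by a shift in $\al$.
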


In view of (\ref{eq:largetimemgf}) we will call $\al \mapsto r^{\s}_\bfzeta(\al)$ the \textit{large time moment generating function} of the entropy fluxes. Of course (\ref{eq:mgfconnection}) immediately leads to a similar result for energy fluxes, i.e.
\begin{equation*}
\tilde r^{\s}_\bfzeta(\al) := \lim_{n \rightarrow +\infty} \tilde r_{n,\rho}^{\s} (\al)^{\frac{1}{n}}  = r^{\s}_\bfzeta\left(\frac{\al}{\beta}\right).
\end{equation*}
In particular the translation symmetry (\ref{ts}) becomes 
\begin{equation}\label{eq:gaspard}
\tilde r^\s(\alpha) = \tilde r^\s(\alpha+\lambda {\bf 1})
\end{equation}
for any $\alpha\in\R^M$ and $\lambda\in\R$. This translation symmetry (\ref{eq:gaspard}) is related to the energy conservation (\ref{energy conservation}), see Remark \ref{rem:gaspard} below. It is thus not surprising that it requires  (\nameref{Non-Entanglement}).


\subsection{First moments and link with linear response theory}

Our next result concerns the first and second moment of the entropy/energy increments and makes the connection with the results of Section \ref{lrt}. We thus suppose throughout this section that (\nameref{ergodicity}) and (\nameref{Non-Entanglement}) hold. We also introduce the following maps which act on $\cB(\cH_\cS)$.
\begin{defi} For any $j=1,\ldots,M$ let 
\begin{equation}\label{def:fluxmap}
\varphi_{j}(X) = \frac{1}{T} \left( \cL_{j}^*(XH_\cS') - \cL_{j}^*(X)H_\cS'\right), \quad \varphi_{j}^{cy} = \cL_1^*\circ\cdots\circ \cL_{j-1}^*\circ \varphi_j \circ \cL_{j+1}^*\circ\cdots\circ\cL_M^*.
\end{equation}
\end{defi}

\begin{rema}\label{rem:fluxmap} If follows from (\ref{def:cyclicflux}) and (\ref{translation2}) that  $\varphi_{j}(\un)=\Phi_j$ and $\varphi_j^{cy}(\un)=\Phi_j^{cy}$.
\end{rema}

The maps $\varphi_j$ appear naturally as the derivatives with respect to $\alpha$ of the deformed maps $\cL_{j}^{[\alpha]*}$, see Lemma \ref{lem:Lalphaderiv}. Of course, similarly to the flux observable 
$\Phi_j$, the map $\varphi_j$ depends on the thermodynamical force $\zeta_j$. It then follows directly from (\ref{eq:revinteraction}) that for any $\zeta\in\R$ one has
\[
\varphi_{j,\zeta}^{\rho_{+,\zeta}}(X) = \frac{1}{T} \left(\cL_{j,\rev}^*(X)H_\cS' - \cL_{j,\rev}^*(XH_\cS') \right)
\]
so that in particular, using (\ref{translationrev}), we get
\begin{equation}\label{eq:fluxmap}
\varphi_{j,\zeta}^{\rho_{+,\zeta}}(\un)=\Phi_{j,\rev}.
\end{equation}

\begin{theo}\label{momentstheo} Suppose (\nameref{ergodicity}) and (\nameref{Non-Entanglement}) hold. Then we have

\noindent 1) For all $j=1,\ldots,M$,
\begin{equation}\label{first moment}
\frac{\partial r^{cy}(\alpha)}{\partial {\al_j}}\lceil_{\alpha=0} =\beta_jT \rho_{+}^{cy}(\Phi_{j}^{cy}), \quad \frac{\partial r^{ra}(\alpha)}{\partial {\al_j}}\lceil_{\alpha=0} =\beta_j\frac{T}{M} \rho_{+}^{ra}(\Phi_{j}),
\end{equation}

\noindent 2) For all $j,k=1,\ldots,M$,
\begin{eqnarray}\label{second moment}
\frac{\partial^2 r^{cy}(\alpha)}{\partial {\al_k}\partial {\al_j}}\lceil_{\alpha=0} & = & \beta_j \beta_k T^2 \sum_{n=0}^{\infty} \rho_{+}^{cy} \left(\varphi_j^{cy} \circ \Lj_{cy}^{*n} (\Phi_k^{cy} - \rho_{+}^{cy}(\Phi_{k}^{cy}) ) + \varphi_k^{cy} \circ  \Lj_{cy}^{*n} (\Phi_j^{cy} - \rho_{+}^{cy}(\Phi_{j}^{cy}) )\right) \nonumber \\
 & & + \delta_{k>j} \beta_j \beta_k T^2 \rho_{+}^{cy} \big( \cL_1^*\circ\cdots\circ\cL_{j-1}^*\circ\varphi_j\circ\cL_{j+1}^*\circ\cdots\circ\cL_{k-1}^*(\Phi_k) \big) \nonumber \\
 & & + \delta_{j>k} \beta_j \beta_k T^2 \rho_{+}^{cy} \big( \cL_1^*\circ\cdots\circ\cL_{k-1}^*\circ\varphi_k\circ\cL_{k+1}^*\circ\cdots\circ\cL_{j-1}^*(\Phi_j) \big) \nonumber \\
 & & +\delta_{jk} \beta_j^2T \rho_{+}^{cy} \left( \cL_1^*\circ\cdots\circ\cL_{j-1}^*\big(\cD_j(H_\cS',H_\cS')\big)\right).
\end{eqnarray}
and
\begin{eqnarray}\label{second momentrd}
\frac{\partial^2 r^{ra}(\alpha)}{\partial {\al_k}\partial {\al_j}}\lceil_{\alpha=0} & = & \beta_j \beta_k \left(\frac{T}{M}\right)^2 \sum_{n=0}^{\infty} \rho_{+}^{cy} \left(\varphi_j \circ \Lj_{ra}^{*n} \big(\Phi_k - \rho_{+}^{ra}(\Phi_{k}) \big) + \varphi_k \circ  \Lj_{ra}^{*n} \big(\Phi_j - \rho_{+}(\Phi_{j}) \big)\right) \nonumber \\
 & & +\delta_{jk} \beta_j^2\frac{T}{M} \rho_{+}^{ra} \big(\cD_j(H_\cS',H_\cS')\big).
\end{eqnarray}
\end{theo}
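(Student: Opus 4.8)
The plan is to realize $r^\s(\al)$, $\s=cy$ or $ra$, as the simple dominant eigenvalue of the deformed map $\Lj_\s^{[\al]*}$ and to differentiate it at $\al=0$ by regular perturbation theory. By Proposition \ref{deformedsp} the map $\Lj_\s^{[\al]*}$ is primitive, so $r^\s(\al)$ is a simple isolated eigenvalue, analytic in $\al$ near $0$; at $\al=0$ one has $\Lj_\s^{[0]*}=\cL_\s^*$, whose dominant eigenvalue is $1$ with right eigenvector $\un$ and left eigenvector the invariant state $\rho_+^\s$, normalized by $\rho_+^\s(\un)=1$. Throughout I would use the derivative formulas for the deformed maps from Lemma \ref{lem:Lalphaderiv}, namely $\partial_{\alpha_j}\Lj_j^{[\al]*}\lceil_{\al=0}=\beta_j T\varphi_j$ together with $\partial_{\alpha_j}^2\Lj_j^{[\al]*}\lceil_{\al=0}(\un)=\beta_j^2 T\,\cD_j(H_\cS',H_\cS')$, both obtained by differentiating the factors $\rho_{\En_j}^{1-\alpha_j}$ and $\rho_{\En_j}^{\alpha_j}$ and using the conservation law (\ref{conservation}) to trade the probe energy $H_{\En_j}$ for the effective hamiltonian $H_\cS'$.

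For part 1 I would apply the Feynman--Hellmann formula for a simple eigenvalue,
\[
\partial_{\alpha_j} r^\s\lceil_{\al=0} = \rho_+^\s\!\left( \partial_{\alpha_j}\Lj_\s^{[\al]*}\lceil_{0}(\un)\right).
\]
In the random case $\Lj_{ra}^{[\al]*}=\frac1M\sum_k\Lj_k^{[\al]*}$ depends on $\alpha_j$ only through the $k=j$ summand, so its derivative is $\frac1M\beta_j T\varphi_j$; in the cyclic case $\Lj_{cy}^{[\al]*}=\Lj_1^{[\al]*}\circ\cdots\circ\Lj_M^{[\al]*}$ and the chain rule inserts $\beta_j T\varphi_j$ at the $j$-th slot, producing exactly $\beta_j T\varphi_j^{cy}$ by the definition (\ref{def:fluxmap}). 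Evaluating at $\un$ and using $\varphi_j(\un)=\Phi_j$, $\varphi_j^{cy}(\un)=\Phi_j^{cy}$ from Remark \ref{rem:fluxmap} yields (\ref{first moment}).

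For part 2 I would use the second-order perturbation formula for a simple eigenvalue which, after fixing the normalization $\rho_+^\s(\psi(\al))=1$ of the perturbed right eigenvector $\psi(\al)$, reads
\[
\partial_{\alpha_j}\partial_{\alpha_k} r^\s\lceil_{0} = \rho_+^\s\!\big(\partial_{\alpha_j}\partial_{\alpha_k}\Lj_\s^{[\al]*}\lceil_{0}(\un)\big) - \rho_+^\s\!\big(\partial_{\alpha_k}\Lj_\s^{[\al]*}\lceil_{0}\, S_0\, \partial_{\alpha_j}\Lj_\s^{[\al]*}\lceil_{0}(\un)\big) - (j\leftrightarrow k),
\]
where $S_0=Q_0(\cL_\s^*-\mathrm{id})^{-1}Q_0$ is the reduced resolvent, $Q_0=\mathrm{id}-P_0$, and $P_0(X)=\rho_+^\s(X)\,\un$. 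By primitivity the spectrum of $\cL_\s^*$ on $\mathrm{ran}\,Q_0$ lies strictly inside the unit disk, so $S_0=-\sum_{n\geq0}\cL_\s^{*n}Q_0$ with a convergent series. Substituting $\partial_{\alpha_j}\Lj_\s^{[\al]*}\lceil_{0}=\frac1M\beta_j T\varphi_j$ (random) or $\beta_j T\varphi_j^{cy}$ (cyclic), applying $Q_0$ (which replaces $\Phi_j$, resp. $\Phi_j^{cy}$, by its $\rho_+^\s$-centering) and inserting the geometric series produces exactly the flux--flux correlation sums in (\ref{second moment}) and (\ref{second momentrd}). The contact term $\rho_+^\s(\partial_{\alpha_j}\partial_{\alpha_k}\Lj_\s^{[\al]*}\lceil_{0}(\un))$ is then analysed through the algebraic structure of $\Lj_\s^{[\al]*}$: in the random case the sum structure makes it vanish unless $j=k$, leaving $\delta_{jk}\beta_j^2\frac{T}{M}\rho_+^{ra}(\cD_j(H_\cS',H_\cS'))$; in the cyclic case differentiating the ordered product gives, for $j\neq k$, the single ordered term with $\varphi_j$ and $\varphi_k$ placed at their slots (the $\delta_{k>j}$ and $\delta_{j>k}$ contributions), and for $j=k$ the one-factor second derivative $\delta_{jk}\beta_j^2 T\,\rho_+^{cy}(\cL_1^*\circ\cdots\circ\cL_{j-1}^*(\cD_j(H_\cS',H_\cS')))$.

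The main obstacle is twofold. First, establishing the derivative identities of Lemma \ref{lem:Lalphaderiv}, in particular that the diagonal second derivative of a single deformed map reproduces the dissipation function $\cD_j(H_\cS',H_\cS')$: this is where Assumption (\nameref{Non-Entanglement}) is used essentially, via (\ref{conservation}), to convert the dependence on $H_{\En_j}$ into one on $H_\cS'$. Second, the combinatorial bookkeeping of the cyclic composition: differentiating the ordered product $\Lj_1^{[\al]*}\circ\cdots\circ\Lj_M^{[\al]*}$ twice forces one to track which factor is differentiated and in which order, and it is precisely this that yields the asymmetric $\delta_{j>k}$/$\delta_{k>j}$ terms absent in the random case. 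Checking convergence of the reduced-resolvent series (immediate from primitivity) and that the result is independent of the eigenvector normalization (because $\rho_+^\s\circ Q_0=0$, hence $\rho_+^\s\circ S_0=0$) is then routine.
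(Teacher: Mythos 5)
Your proposal is correct and takes essentially the same approach as the paper: both realize $r^\s(\al)$ as the simple dominant eigenvalue of the primitive map $\cL_\s^{[\al]*}$, both hinge on the same derivative formulas for the deformed maps (obtained from Assumption (\nameref{Non-Entanglement}) via the conservation law (\ref{conservation}), with the ordered-product chain rule producing the $\delta_{k>j}/\delta_{j>k}$ terms and the diagonal term $\cD_j(H_\cS',H_\cS')$), and both resum $\left(1-\cL_\s^*\right)^{-1}$ on the range of $Id-P_\s^{[0]}$ as the geometric series $\sum_{n\geq 0}\cL_\s^{*n}$ using the spectral gap. The only difference is the packaging of the eigenvalue perturbation theory --- the paper works with the contour-integral quantities $E_0^\s(\al)$, $E_1^\s(\al)$ and the identity $r^\s(\al)=1+E_1^\s(\al)/E_0^\s(\al)$, while you use the equivalent Rayleigh--Schr\"odinger first- and second-order formulas with the reduced resolvent $S_0$ and normalized eigenvector --- and the two yield identical term-by-term computations.
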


\begin{rema}\label{rem:fixedzeta} In the above theorem all the quantities depend on the vector $\bfzeta$ of thermal forces and all the results hold for any value of $\bfzeta$. We have not mentioned the dependence on $\bfzeta$ not to burden the notation.
\end{rema}

\begin{rema} The presence of the prefactors $T$ and $T^2$, resp. $\frac{T}{M}$ and $\frac{T^2}{M^2}$, in (\ref{first moment})-(\ref{second moment}), resp. (\ref{first moment}) and (\ref{second momentrd}), is due to the fact that the definitions of $r_n^\s(\alpha)$ correspond to the variation of entropy fluxes per interaction or cycle and not per unit time.
\end{rema}

\begin{rema} Of course (\ref{first moment}) is in agreement with (\ref{eq:fcsexpectation}).
\end{rema}

\begin{rema}\label{rem:gaspard} As mentioned at the end of the previous section (\ref{energy conservation}) is a direct consequence of the translation symmetry (\ref{eq:gaspard}) combined with (\ref{first moment}).
\end{rema}

Finally, we recall that the Green-Kubo formula (\ref{gkc})-(\ref{gkr}) can also be obtained via the moment generating function $\tilde r(\alpha)$, see e.g. \cite{JPR,JPW}. Indeed, we infer from (\ref{first moment}) and (\ref{eq:mgfconnection}) that 
\[
\frac{\partial^2 \tilde r_\bfzeta^{ra}(\alpha)}{\partial {\zeta_k}\partial {\al_j}}\lceil_{\alpha=\bfzeta=0} = \frac{T}{M}L_{jk}^{ra}, \quad \frac{\partial^2 \tilde r_\bfzeta^{cy}(\alpha)}{\partial {\zeta_k}\partial {\al_j}}\lceil_{\alpha=\bfzeta=0} = TL_{jk}^{cy} , \quad \frac{\partial^2 \tilde r_\bfzeta^{rcy}(\alpha)}{\partial {\zeta_k}\partial {\al_j}}\lceil_{\alpha=\bfzeta=0} = TL_{jk}^{rcy}. 
\]
If moreover (\nameref{Time-Reversal}) holds, then Evans-Searles symmetry gives for any $\alpha,\bfzeta$
\[
\tilde r_\bfzeta^{rcy}(\alpha) = \tilde r_\bfzeta^{cy} (\beta_{ref}{\bf 1}-\bfzeta-\alpha) , \quad \tilde r_\bfzeta^{ra} (\alpha) = \tilde r_\bfzeta^{ra} (\beta_{ref}{\bf 1}-\bfzeta-\alpha),
\]
and using translation symmetry with $\lambda=\beta_{ref}$ we get
\[
\tilde r_\bfzeta^{rcy}(\alpha) = \tilde r_\bfzeta^{cy} (-\bfzeta-\alpha), \quad \tilde r_\bfzeta^{ra} (\alpha) = \tilde r_\bfzeta^{ra} (-\bfzeta-\alpha).
\]
Using the chain rule we therefore have
\[
\frac{\partial^2 \tilde r_\bfzeta^{cy}(\alpha)}{\partial {\zeta_k}\partial {\al_j}}\lceil_{\alpha=\bfzeta=0} + \frac{\partial^2 \tilde r_\bfzeta^{rcy}(\alpha)}{\partial {\zeta_k}\partial {\al_j}}\lceil_{\alpha=\bfzeta=0}
= \frac{\partial^2 \tilde r_\bfzeta^{cy}(\alpha)}{\partial {\al_k}\partial {\al_j}}\lceil_{\alpha=\bfzeta=0}, \quad
\frac{\partial^2 \tilde r_\bfzeta^{ra}(\alpha)}{\partial {\zeta_k}\partial {\al_j}}\lceil_{\alpha=\bfzeta=0} = \frac{1}{2} \frac{\partial^2 \tilde r_\bfzeta^{ra}(\alpha)}{\partial {\al_k}\partial {\al_j}}\lceil_{\alpha=\bfzeta=0}. 
\]
As a summary
\begin{prop} If Assumptions (\nameref{ergodicity}), (\nameref{Non-Entanglement}) and (\nameref{Time-Reversal}) are satisfied then
\begin{equation*}\label{fd2}
\frac{\partial^2 \tilde r_\bfzeta^{cy}(\alpha)}{\partial {\al_k}\partial {\al_j}}\lceil_{\alpha=\bfzeta=0} = T(L_{jk}^{cy} + L_{jk}^{rcy}) = T(L_{jk}^{cy} + L_{kj}^{cy}) , \quad 
\frac{\partial^2 \tilde r_\bfzeta^{ra}(\alpha)}{\partial {\al_k}\partial {\al_j}}\lceil_{\alpha=\bfzeta=0} = \frac{2T}{M}L_{jk}^{ra},
\end{equation*}
where in the cyclic case the second equality follows from (\ref{ons}). 
\end{prop}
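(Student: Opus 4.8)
The plan is to assemble the statement from the three ingredients displayed immediately above it: the first-moment formula (\ref{first moment}), the Evans--Searles symmetry (\ref{ES}) combined with the translation symmetry (\ref{ts}), and the Onsager relation (\ref{ons}). The only genuine computation is a careful application of the chain rule, so I would keep the algebra light and concentrate on getting the bookkeeping right.

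First I would record the ``flux--force'' mixed-derivative identities
\[
\frac{\partial^2 \tilde r_\bfzeta^{ra}(\al)}{\partial {\zeta_k}\partial {\al_j}}\lceil_{\al=\bfzeta=0} = \frac{T}{M}L_{jk}^{ra}, \qquad \frac{\partial^2 \tilde r_\bfzeta^{cy}(\al)}{\partial {\zeta_k}\partial {\al_j}}\lceil_{\al=\bfzeta=0} = TL_{jk}^{cy},
\]
together with the analogous one for $rcy$. To obtain them I would first pass from the entropy MGF to the energy MGF via (\ref{eq:mgfconnection}): dividing the argument by $\beta$ cancels the prefactor $\beta_j$ in (\ref{first moment}) and gives $\partial_{\al_j}\tilde r_\bfzeta^{ra}(\al)\lceil_{\al=0}=\frac{T}{M}\rho^{ra}_{+,\bfzeta}(\Phi_{j,\bfzeta})$ (resp. $T\rho^{cy}_{+,\bfzeta}(\Phi^{cy}_{j,\bfzeta})$ in the cyclic case), valid for \emph{every} $\bfzeta$ by Remark \ref{rem:fixedzeta}. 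Differentiating once more in $\zeta_k$ and setting $\bfzeta=0$, the smoothness from Theorem \ref{gko} justifies the interchange of derivatives and lets me recognize the definition $L_{jk}^{\s}=\partial_{\zeta_k}\rho^{\s}_{+,\bfzeta}(\Phi^{\s}_{j,\bfzeta})\lceil_{\bfzeta=0}$.

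Next I would convert the symmetries into functional equations for the energy MGF. Rewriting the entropy Evans--Searles relation (\ref{ES}) through (\ref{eq:mgfconnection}), and using $\beta_j=\beta_\refer-\zeta_j$ so that the vector $\beta$ equals $\beta_\refer{\bf 1}-\bfzeta$, produces $\tilde r_\bfzeta^{rcy}(\al)=\tilde r_\bfzeta^{cy}(\beta_\refer{\bf 1}-\bfzeta-\al)$ and $\tilde r_\bfzeta^{ra}(\al)=\tilde r_\bfzeta^{ra}(\beta_\refer{\bf 1}-\bfzeta-\al)$. Applying the energy translation symmetry (\ref{eq:gaspard}) with $\lambda=\beta_\refer$ — legitimate because (\ref{ts}) holds for all real $\lambda$ under (\nameref{Non-Entanglement}) — removes the $\beta_\refer{\bf 1}$ shift and leaves
\[
\tilde r_\bfzeta^{rcy}(\al)=\tilde r_\bfzeta^{cy}(-\bfzeta-\al), \qquad \tilde r_\bfzeta^{ra}(\al)=\tilde r_\bfzeta^{ra}(-\bfzeta-\al).
\]
I would then differentiate each identity once in $\al_j$ and once in $\zeta_k$ and evaluate at the origin. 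Viewing the right-hand side as the composition with $\al\mapsto -\bfzeta-\al$, the chain rule yields at $\al=\bfzeta=0$ the cross term $-\partial_{\zeta_k}\partial_{\al_j}$ plus the pure term $+\partial_{\al_k}\partial_{\al_j}$ of the function appearing on the right, giving exactly
\[
\frac{\partial^2 \tilde r_\bfzeta^{cy}}{\partial {\zeta_k}\partial {\al_j}}\lceil_{\al=\bfzeta=0} + \frac{\partial^2 \tilde r_\bfzeta^{rcy}}{\partial {\zeta_k}\partial {\al_j}}\lceil_{\al=\bfzeta=0} = \frac{\partial^2 \tilde r_\bfzeta^{cy}}{\partial {\al_k}\partial {\al_j}}\lceil_{\al=\bfzeta=0}, \qquad \frac{\partial^2 \tilde r_\bfzeta^{ra}}{\partial {\zeta_k}\partial {\al_j}}\lceil_{\al=\bfzeta=0} = \frac{1}{2}\frac{\partial^2 \tilde r_\bfzeta^{ra}}{\partial {\al_k}\partial {\al_j}}\lceil_{\al=\bfzeta=0}.
\]

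Finally I would substitute the flux--force identities of the first step to get $\partial_{\al_k}\partial_{\al_j}\tilde r^{cy}\lceil_{\al=\bfzeta=0}=T(L_{jk}^{cy}+L_{jk}^{rcy})$ and $\partial_{\al_k}\partial_{\al_j}\tilde r^{ra}\lceil_{\al=\bfzeta=0}=\frac{2T}{M}L_{jk}^{ra}$, and then invoke the Onsager relation $L_{jk}^{cy}=L_{kj}^{rcy}$ from (\ref{ons}), read after swapping indices as $L_{jk}^{rcy}=L_{kj}^{cy}$, to rewrite the cyclic sum as $T(L_{jk}^{cy}+L_{kj}^{cy})$. I expect the only delicate point to be the chain-rule bookkeeping, and in particular the origin of the factor $\frac{1}{2}$ in the random case: it arises because the \emph{same} function $\tilde r^{ra}$ sits on both sides of its functional equation, so the pure $\al$-derivative and the cross $\zeta$-$\al$ derivative of one and the same object are forced to coincide up to a sign, which collapses to the stated relation. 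The second mildly subtle point is simply to confirm that $\lambda=\beta_\refer$ is the right choice so that the $\beta_\refer{\bf 1}$ shift cancels exactly.
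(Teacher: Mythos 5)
Your proposal is correct and follows essentially the same route as the paper: the mixed-derivative identities $\partial_{\zeta_k}\partial_{\al_j}\tilde r_\bfzeta^{\s}\lceil_{\al=\bfzeta=0}$ obtained from (\ref{first moment}) via (\ref{eq:mgfconnection}), the Evans--Searles symmetry (\ref{ES}) rewritten for the energy MGF, the translation symmetry (\ref{ts}) with $\lambda=\beta_\refer$, and the chain rule, with (\ref{ons}) supplying the second equality in the cyclic case. Your chain-rule bookkeeping (including the sign that produces the factor $\tfrac{1}{2}$ in the random case) matches the paper's computation exactly.
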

Note that similarly to what happens in quantum dynamical semigroups \cite{JPW}, using (\ref{second momentrd}) the above proposition gives another path to derive the Green-Kubo formula provided Assumption (\nameref{Time-Reversal}) holds. However due to the lack of global time-reversal invariance this does not allow us to retrieve (\ref{gkc}) even if (\nameref{Time-Reversal}) holds. 
Nevertheless we have the following relations which hold without (\nameref{Time-Reversal}).

\begin{coro}\label{corMoments} If Assumptions (\nameref{ergodicity}) and (\nameref{Non-Entanglement}) are satisfied, then
\begin{equation*}\label{fd1}
\frac{\partial^2 \tilde r_\bfzeta^{cy}(\alpha)}{\partial {\al_k}\partial {\al_j}}\lceil_{\alpha=\bfzeta=0} = T(L_{jk}^{cy} + L_{kj}^{cy}) , \quad 
\frac{\partial^2 \tilde r_\bfzeta^{ra}(\alpha)}{\partial {\al_k}\partial {\al_j}}\lceil_{\alpha=\bfzeta=0} = \frac{T}{M}(L_{jk}^{ra}+L_{kj}^{ra}).
\end{equation*}
\end{coro}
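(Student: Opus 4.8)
The plan is to establish both identities directly from Theorem~\ref{momentstheo} and the Green-Kubo formulas of Theorem~\ref{gko}, without passing through the Evans-Searles symmetry and hence without Assumption~(\nameref{Time-Reversal}). Since $\tilde r^\s_\bfzeta(\al)=r^\s_\bfzeta(\al/\beta)$ by (\ref{eq:mgfconnection}), the chain rule gives $\partial_{\al_k}\partial_{\al_j}\tilde r^\s\lceil_{\al=\bfzeta=0}=(\beta_j\beta_k)^{-1}\,\partial_{\al_k}\partial_{\al_j}r^\s\lceil_{\al=\bfzeta=0}$, and the right-hand side is provided explicitly by (\ref{second moment}) and (\ref{second momentrd}). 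It therefore suffices to match these expressions, term by term and at equilibrium $\bfzeta=0$, against the symmetrized coefficients $\frac{T}{M}(L^{ra}_{jk}+L^{ra}_{kj})$ and $T(L^{cy}_{jk}+L^{cy}_{kj})$, the latter being read off from (\ref{gkr}), (\ref{gkc}) and the condensed form (\ref{gkc2}).

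Two equilibrium simplifications will be used throughout. First, at $\bfzeta=0$ all steady fluxes vanish (Remark~\ref{rem:zerofluxeq}), so the subtractions $\rho_+(\Phi_k^\s)$ inside the sums of (\ref{second moment})--(\ref{second momentrd}) drop out. Second, by Assumption~(\nameref{Non-Entanglement}) the state $\rho_+$ is a common invariant state of all the $\cL_j$, so $\rho_+\circ\cL_i^*=\rho_+$ as a functional, which collapses the prefactor compositions $\cL_1^*\circ\cdots\circ\cL_{j-1}^*$ that appear in the cyclic formula (in particular it turns the diagonal term $\rho_+(\cL_1^*\circ\cdots\circ\cL_{j-1}^*(\cD_j(H_\cS',H_\cS')))$ into $\rho_+(\cD_j(H_\cS',H_\cS'))$). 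The remaining task is to turn each $\varphi$-term into a flux-flux correlation. Here I would use (\ref{eq:fluxmap}), which in $\rho_+$-adjoint form reads $\rho_+(\varphi_j(Z))=\rho_+(\Phi_{j,\rev}Z)$, together with the fact, from (\ref{eq:revinteraction}), that the $\rho_+$-adjoint of $\cL_i^*$ is $\cL_{i,\rev}^*$. Applied to $\varphi_j^{cy}=\cL_1^*\circ\cdots\circ\cL_{j-1}^*\circ\varphi_j\circ\cL_{j+1}^*\circ\cdots\circ\cL_M^*$ (after using invariance to discard the left factors), this produces precisely the reversed cyclic flux $\Phi_{j,\rev}^{cy}$ of (\ref{def:cyclicfluxrev}). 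In this way the two main-sum terms of (\ref{second moment}) become $\sum_n\rho_+(\Phi_{j,\rev}^{cy}\,\cL_{cy}^{*n}(\Phi_k^{cy}))$ and its $j\leftrightarrow k$ partner, matching the main sums of $L^{cy}_{kj}$ and $L^{cy}_{jk}$ in (\ref{gkc2}); the order-dependent $\delta_{k>j}$ and $\delta_{j>k}$ terms reduce in the same manner to the second lines of (\ref{gkc}) for $L^{cy}_{kj}$ and $L^{cy}_{jk}$. The random case is strictly simpler: there is no compositional dressing, $\rho_+(\varphi_j(Z))=\rho_+(\Phi_{j,\rev}Z)$ is applied directly, and only the main sum and the $\delta_{jk}$ term of (\ref{second momentrd}) occur.

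The one step that is not purely formal is the cyclicity swap $\rho_+(\Phi_{j,\rev}^{cy}\,\cL_{cy}^{*n}(\Phi_k^{cy}))=\rho_+(\cL_{cy}^{*n}(\Phi_k^{cy})\,\Phi_{j,\rev}^{cy})$ (and its analogues in the off-diagonal terms and in the random case) that is needed to reach the exact operator ordering of the Green-Kubo sums. I expect this to be the main obstacle, and it is exactly what allows one to dispense with Assumption~(\nameref{Time-Reversal}). The swap holds because both operators commute with the effective Hamiltonian $H_\cS'$: by Lemma~\ref{lem:gauge} and Remark~\ref{rem:effecthaminvariance} each $\cL_i^*$, hence each $\cL_{i,\rev}^*$ (being its $\rho_+$-adjoint) and each iterate of $\cL_{cy}^*$ or $\cL_{ra}^*$, commutes with the gauge action $X\mapsto\e^{itH_\cS'}X\e^{-itH_\cS'}$, while $\Phi_k$ and $\Phi_{k,\rev}$ commute with $H_\cS'$ by (\ref{translation2})--(\ref{translationrev}); consequently $\cL_{cy}^{*n}(\Phi_k^{cy})$ and $\Phi_{j,\rev}^{cy}$ commute with $H_\cS'$ as well. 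Since $\rho_+$ is a function of $H_\cS'$ (Proposition~\ref{prop:effectham}), both operators commute with $\rho_+$, and then $\rho_+(AB)=\rho_+(BA)$ by trace cyclicity. Once this commutation observation is in place, the bookkeeping of the $\delta_{jk}$ and $\delta_{j\gtrless k}$ contributions is routine, and carrying out the matching separately for $\s=ra$ and $\s=cy$ completes the proof.
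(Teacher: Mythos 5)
Your proof is correct and follows essentially the same route as the paper's: the paper's (very condensed) argument likewise combines Remark \ref{rem:zerofluxeq}, Eq. (\ref{eq:revinteraction}) and (\ref{eq:fluxmap}) with the joint invariance of $\rho_+$ under the $\cL_j$'s to match the right-hand sides of (\ref{second moment})--(\ref{second momentrd}) term by term with (\ref{gkc})--(\ref{gkr}) at equilibrium. The only difference is that you make explicit the operator-ordering swap $\rho_+(\Phi_{j,\rev}\,X)=\rho_+(X\,\Phi_{j,\rev})$, correctly justified by the commutation of the (reversed) fluxes with $H_\cS'$ and hence with $\rho_+$, a step the paper subsumes under ``one easily gets''.
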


\begin{proof} Combining Remark \ref{rem:zerofluxeq}, Eq. (\ref{eq:revinteraction}) and (\ref{eq:fluxmap}), and the fact that $\rho_+$ is an invariant state for all the $\cL_j$'s, one easily gets that at equilibrium the various terms on the right-hand sides of (\ref{second moment})-(\ref{second momentrd}) coincide with those in (\ref{gkc})-(\ref{gkr}).
\end{proof}


\subsection{Fluctuation Theorem and Fluctuation Relation}\label{sec:fluctuationthm}

Let $e_{n,\rho}^{\s}(\al)$ and $e^{\s}(\al)$ denote the cumulant generating function of transient and large time entropy fluxes, i.e.  
\begin{equation*}
 e_{n,\rho}^{\s}(\al):=\frac{1}{\tau_{\s}}\log r_{n,\rho}^{\s}(\al) \ = \ \frac{1}{\tau_{\s}}\log  \mb{E}_{\P^\s} \left(e^{-\sum_{j=1}^M \al_j \mc{S}_{\Res_j}^{N}(\s)}\right), \ e^{\s}(\al):=\frac{1}{\tau_{\s}}\log r^{\s}(\al),
\end{equation*}
where $\tau_{cy}=T$ and $\tau_{ra}=\frac{T}{M}$, $\P^{cy}=\P$, $\P^{ra}=\hat\P$ and, with a slight abuse of notation, $\mc{S}_{\Res}^{N}(\s)$ stands for $\mc{S}_{\Res}^{N}(\bj)$ when $\s=ra$. Let also $\ds I_{\s}(\varsigma):= \sup_{\al \in \mb{R}^M}\left (\al .\varsigma - e^{\s}(-\al)\right)$ be the Fenchel-Legendre transform of $e^{\s}(-\al)$.

\begin{theo}\label{flucresults} Suppose Assumption (\nameref{ergodicity}) holds. Then 

\noindent 1) The entropy fluxes satisfy a large deviation principle with rate function $I_\s$. Namely, for any Borel set $G \subset \mb{R}^M$,
\begin{equation}\label{ld}
-\inf_{\varsigma\in\mathring{G}}I_{\s}(\varsigma)\leq\liminf_{N\rightarrow+\infty}\frac{1}{N\tau_\s}\log\mb{P}^\s \left(\frac{\mc{S}_{\Res}^{N}(\s)}{N\tau_\s}\in\mathring{G}\right) 
\leq\limsup_{N\rightarrow+\infty}\frac{1}{N\tau_\s}\log\mb{P}^\s\left(\frac{\mc{S}_{\Res}^{N}(\s)}{N\tau_\s} \in \mean{G}\right)\leq-\inf_{\varsigma\in\mean{G}}I_{\s}(\varsigma).
\end{equation}

\noindent 2) $\varsigma\mapsto I_{\s}(\varsigma)\in[0,+\infty]$ is closed convex, with compact level sets and $\ds \inf_{\varsigma\in\mb{R}^M}I_{\s}(\varsigma)=0$.

\noindent 3) The sequence of random vectors $\left(\frac{\mc{S}_{\Res}^{N}(\s)}{N\tau_{\s}}\right)_{N\in\N^*}$ converges in probability and exponentially fast to $\mc{S}_+(\s):=(-\beta_1 \rho_+^{\s}(\Phi_1^{\s}),\cdots,-\beta_M\rho_+^{\s}(\Phi_M^{\s}))$, that is
\begin{equation}\label{convergence}
\forall\epsilon>0,\exists A(\epsilon)>0,\forall N\in\mb{N}^*,\mb{P}^\s\left(\left\Vert\frac{\mc{S}_{\Res}^N(\s)}{N\tau_\s}-\mc{S}_+(\s) \right\Vert>\epsilon\right)\leq e^{-NA(\epsilon)}.
\end{equation}
Moreover $\frac{1}{\sqrt{N\tau_\s}}\left(\mc{S}_{\Res}^{N}(\s)-\mb{E}\left(\mc{S}_{\Res}^{N}(\s)\right)\right)$ converges in distribution towards a centered Gaussian $\mu$ whose covariance matrix is $(\dr{\al_j}\dr{\al_k}e^{\s}(0))_{j,k}$.

\noindent 4) If Assumption (\nameref{Time-Reversal}) holds, we have the Fluctuation Relation
\begin{equation}\label{flucrel}
\forall\varsigma\in\R^M,\ I_{cy}(-\varsigma)-I_{rcy}(\varsigma)=I_{ra}(-\varsigma)-I_{ra}(\varsigma)=\sum_{j=1}^M\varsigma_j,
\end{equation}
where $I_{rcy}$ is the analog of $I_{cy}$ for the reversed-order cyclic RIS.

\noindent 5) If Assumption (\nameref{Non-Entanglement}) holds then
\begin{equation}\label{energyconservation}
\forall\varsigma\in\R^M,\ \sum_{j = 1}^M\beta_j^{-1}\varsigma_j\neq 0\Rightarrow I_{\s}(\varsigma)=+\infty.
\end{equation}
In particular the gaussian measure $\mu$ in 3) is supported on the hyperplane $\ds \sum_{j = 1}^M\beta_j^{-1}\varsigma_j=0$.
\end{theo}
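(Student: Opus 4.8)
The plan is to reduce all five statements to the spectral analysis of the deformed maps $\Lj_\s^{[\al]*}$, with the Gärtner--Ellis theorem as the main engine; the inputs are Theorem \ref{theojpw} and the primitivity from Proposition \ref{deformedsp}. By Proposition \ref{prop:mgfdeformedrdm} and the definitions, $r_{N,\rho}^\s(\al)=\E_{\P^\s}(\e^{-\al\cdot\mc{S}_\Res^N(\s)})$ is, up to the sign change $\al\mapsto-t$, the Laplace transform of $\mc{S}_\Res^N(\s)$, so the scaled cumulant generating function governing $\mc{S}_\Res^N(\s)/(N\tau_\s)$ is $\Lambda_\s(t):=e^\s(-t)$. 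Since $\Lj_{\s,\bfzeta}^{[\al]*}$ is primitive for every real $\al$ (Proposition \ref{deformedsp}), its spectral radius $r^\s(\al)$ is a simple isolated dominant eigenvalue, and analytic perturbation theory (as in the proof of Proposition \ref{erprop}, see \cite{Kato}) shows that $\al\mapsto r^\s(\al)>0$, hence $e^\s$ and $\Lambda_\s$, are real-analytic and finite on all of $\R^M$.

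First I would establish 1) and 2). Identity (\ref{eq:largetimemgf}) gives $\lim_N\frac{1}{N\tau_\s}\log\E_{\P^\s}(\e^{t\cdot\mc{S}_\Res^N(\s)})=\Lambda_\s(t)$ for all $t\in\R^M$. As $\Lambda_\s$ is finite and differentiable everywhere it is essentially smooth and lower semicontinuous, so the Gärtner--Ellis theorem yields the full large deviation principle (\ref{ld}) with good rate function $\sup_t(t\cdot\varsigma-\Lambda_\s(t))=\sup_t(t\cdot\varsigma-e^\s(-t))=I_\s(\varsigma)$, which is 1). Statement 2) is then convex duality: $I_\s$, being a Legendre transform, is closed and convex, finiteness of $\Lambda_\s$ on all of $\R^M$ forces its level sets to be compact, and $\Lambda_\s(0)=\frac{1}{\tau_\s}\log r^\s(0)=0$ (because $\Lj_\s^{[0]*}=\cL_\s^*$ is unital, with spectral radius $1$), whence $\inf_\varsigma I_\s=-\Lambda_\s(0)=0$.

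The statement 3) carries the only genuinely analytic difficulty. The exponential convergence (\ref{convergence}) is soft: differentiating at $0$ and using (\ref{first moment}) gives $\nabla\Lambda_\s(0)=\mc{S}_+(\s)$, so $\mc{S}_+(\s)$ is the unique zero of the good rate function $I_\s$ (the minimisers of a Legendre transform of a differentiable function form the single-point set $\partial\Lambda_\s(0)$); applying the upper bound in (\ref{ld}) to the closed set $\{\|\varsigma-\mc{S}_+(\s)\|\ge\epsilon\}$, on which $\inf I_\s$ is attained and strictly positive by compactness of level sets, gives (\ref{convergence}) for large $N$, and a trivial bound absorbs the finitely many remaining $N$. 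For the central limit theorem I would use sharp spectral asymptotics instead of the logarithmic limit: analytic perturbation theory continues the simple eigenvalue $r^\s(\al)$ and its spectral projection $P_\al$ to a complex neighbourhood of $\al=0$, giving $r_{N,\rho}^\s(\al)=\rho(P_\al\un)\,r^\s(\al)^N(1+o(1))$ uniformly there, with $\rho(P_0\un)=1$. Evaluating at $\al=-iu/\sqrt{N\tau_\s}$, Taylor expanding $\log r^\s$ to second order and cancelling the linear term against the recentering, the characteristic function of $(\mc{S}_\Res^N(\s)-\E(\mc{S}_\Res^N(\s)))/\sqrt{N\tau_\s}$ converges to $\exp(-\tfrac12 u^{\mathsf T}\,\mathrm{Hess}\,e^\s(0)\,u)$, i.e. to the Gaussian $\mu$ with covariance $(\dr{\al_j}\dr{\al_k}e^\s(0))_{j,k}$. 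The main obstacle is exactly this analytic continuation to imaginary deformation parameters together with the uniformity of the remainder; it is what upgrades the mere existence of (\ref{eq:largetimemgf}) to a central limit theorem, and it uses primitivity of $\Lj_\s^{[\al]*}$ (spectral gap and simplicity of $r^\s(\al)$) in full strength.

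Finally, 4) and 5) are algebraic consequences of the symmetries in Theorem \ref{theojpw} via the Legendre transform. For 4), dividing (\ref{ES}) by $\tau_{cy}=\tau_{rcy}=T$ gives $e^{cy}(\one-\al)=e^{rcy}(\al)$; starting from $I_{cy}(-\varsigma)=\sup_\al(\al\cdot\varsigma-e^{cy}(\al))$ (obtained by $\al\mapsto-\al$), then using $e^{cy}(\al)=e^{rcy}(\one-\al)$ and the change of variables $\al\mapsto\one-\al$ followed by a final sign flip yields $I_{cy}(-\varsigma)=\one\cdot\varsigma+I_{rcy}(\varsigma)$, which is the first equality of (\ref{flucrel}); the random case follows identically from $e^{ra}(\one-\al)=e^{ra}(\al)$. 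For 5), the translation symmetry (\ref{ts}) gives $e^\s(-\al-\lambda\beta^{-1})=e^\s(-\al)$ for all $\lambda\in\R$; replacing $\al$ by $\al+\lambda\beta^{-1}$ in the supremum defining $I_\s$ gives $I_\s(\varsigma)\ge\al\cdot\varsigma+\lambda\sum_{j}\beta_j^{-1}\varsigma_j-e^\s(-\al)$, so $I_\s(\varsigma)=+\infty$ whenever $\sum_j\beta_j^{-1}\varsigma_j\neq0$ upon letting $\lambda\to\pm\infty$. The same symmetry shows $\beta^{-1}\in\ker(\mathrm{Hess}\,e^\s(0))$, so $\mathrm{Var}(\beta^{-1}\cdot Z)=0$ for $Z\sim\mu$ and $\mu$ is concentrated on $\{\sum_j\beta_j^{-1}\varsigma_j=0\}$.
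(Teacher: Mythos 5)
Your proposal is correct and matches the paper's proof: the paper likewise obtains 1)--3) as a direct application of the G\"artner--Ellis and Bryc theorems, the required analytic input (finiteness and real analyticity of $e^{\s}$ on $\R^M$ together with the validity of the limit (\ref{eq:largetimemgf}) on a complex neighbourhood of $\al=0$, both coming from regular perturbation theory of the simple dominant eigenvalue of the primitive map $\cL_{\s}^{[\al]*}$) being exactly your argument, isolated there as Lemma \ref{ldlemma}, while 4) and 5) are derived, as you do, as Legendre-transform consequences of the symmetries (\ref{ES}) and (\ref{ts}). The only cosmetic deviations are that you inline a Nagaev--Guivarc'h-type proof of the central limit theorem where the paper simply cites Bryc's theorem (your uniform spectral expansion on a complex neighbourhood, with remainder controlled by the spectral gap, is precisely what the proof of Lemma \ref{ldlemma} establishes), and that in identifying the limit $\mc{S}_+(\s)$ you invoke (\ref{first moment}), which the paper states under (\nameref{Non-Entanglement}); since part 3) assumes only (\nameref{ergodicity}), one should instead observe that $\dr{\al_j}\cL_{\s}^{[\al]*}\lceil_{\al=0}(\un)=\beta_j\tau_{\s}\Phi_j^{\s}$ follows by direct differentiation of Definition \ref{defi:Lalpha} using only (\nameref{KMS}), without (\nameref{Non-Entanglement}), so the identification of the unique zero of $I_{\s}$ stands as claimed.
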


\begin{rema} Losely speaking the large deviation principle (\ref{ld}) can be written
\begin{equation*}
 \mb{P}^\s\left(\mc{S}_{\Res}^{N}(\s)=\varsigma\right)\approx e^{-N\tau_{\s} I_{\s}(\varsigma)}, 
\end{equation*}
as $N$ goes to $+\infty$. So equation (\ref{flucrel}) can also be translated as
\begin{equation*}
\frac{\mb{P}\left(\mc{S}_{\Res}^{N}(cy)=-\varsigma\right)}{\mb{P}\left(\mc{S}_{\Res}^{N}(pe)=\varsigma\right)} \approx e^{-NT(\varsigma_1 + \cdots + \varsigma_M)} \
\mbox{ and } \
\frac{\hat\P\left(\mc{S}_{\Res}^{N}(\bj)=-\varsigma\right)}{\hat\P\left(\mc{S}_{\Res}^{N}(\bj)=\varsigma\right)} \approx e^{-N\frac{T}{M}(\varsigma_1 + \cdots + \varsigma_M)}.
\end{equation*}
The latter is the original form of the Fluctuation Relation, see \cite{ES,GC}. 
\end{rema}

\begin{rema} Recall from Definition \ref{def:randomincrements} that $\mc{S}_{\Res}^{N}(\s)=\left( \mc{S}_{\Res_1}^{N}(\s),\ldots,\mc{S}_{\Res_M}^{N}(\s)\right)$ so that (\ref{convergence}) is of course related to Proposition \ref{prop:2ndlaw}. 
\end{rema}

\begin{rema} We would like to stress that except in point 5) the above theorem does not make use of Assumption (\nameref{Non-Entanglement}). Those results concern what happens \emph{far from equilibrium} hence there is indeed no reason that (\nameref{Non-Entanglement}), which is related to the notion of equilibrium, plays any role. 

Concerning 5), it is related to energy conservation (\ref{energy conservation}) which requires (\nameref{Non-Entanglement}) even when all temperatures are equal. Indeed, energy conservation is then equivalent to vanishing of entropy production which requires  (\nameref{Non-Entanglement}). 
\end{rema}
\medskip

Obviously Theorem \ref{flucresults} has its analog for the energy flux variables $\mc{Q}_{\Res}^{N}$. At equilibrium, 3. in the above theorem allows us to complete the results of Section \ref{lrt} about linear response with a Central Limit theorem on the large time behaviour of energy fluxes at equilibrium. Namely we have the following
\begin{theo}[Fluctuation-Dissipation]
\label{clt}
If Assumptions (\nameref{ergodicity}), (\nameref{Non-Entanglement}) and (\nameref{Time-Reversal}) hold, then the sequence of random vectors $\frac{1}{\sqrt{N\tau_{\s}}}\left(\mc{Q}_{\Res}^{N}(\s) - \mb{E} \left( \mc{Q}_{\Res}^{N}(\s) \right) \right)_{N\in\N^*}$ converges in distribution to a centered Gaussian whose covariance matrix $\left(D_{jk}^\s\right)_{1\leq j,k,\leq M}$ is given by $D_{jk}^{cy}= L_{jk}^{cy} + L_{jk}^{rcy}$ and $D_{jk}^{ra} = 2L_{jk}^{ra}$.
\end{theo}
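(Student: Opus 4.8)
The plan is to read off the statement from the central limit theorem already established in part 3) of Theorem~\ref{flucresults}, transported to the energy fluxes, and then to identify the resulting covariance with the kinetic coefficients by means of the second--order computation of the moment generating function performed just before Corollary~\ref{corMoments}. Throughout, all quantities are evaluated at equilibrium $\bfzeta=0$, where all inverse temperatures equal $\beta_\refer$.

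First I would invoke the energy--flux analogue of Theorem~\ref{flucresults}.3), valid under (\nameref{ergodicity}) alone (equivalently, apply the linear change of variables $\mc{Q}_{\Res_j}^N=-\beta_j^{-1}\mc{S}_{\Res_j}^N$ to the entropy CLT, a linear image of a Gaussian being Gaussian): the vectors $\frac{1}{\sqrt{N\tau_\s}}\left(\mc{Q}_{\Res}^{N}(\s)-\E(\mc{Q}_{\Res}^{N}(\s))\right)$ converge in distribution to a centered Gaussian whose covariance matrix is the Hessian at the origin of the large--time energy cumulant generating function $\tilde e^\s:=\frac{1}{\tau_\s}\log\tilde r^\s$. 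It therefore only remains to identify $\left(\partial_{\al_j}\partial_{\al_k}\tilde e^\s(0)\right)_{j,k}$ with $D^\s$.

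Since $\tilde r^\s(0)=1$ (the spectral radius of the unperturbed primitive map $\cL_\s^*$ equals $1$), the chain rule yields
\begin{equation*}
\partial_{\al_j}\partial_{\al_k}\tilde e^\s(0)=\frac{1}{\tau_\s}\Big(\partial_{\al_j}\partial_{\al_k}\tilde r^\s(0)-\partial_{\al_j}\tilde r^\s(0)\,\partial_{\al_k}\tilde r^\s(0)\Big).
\end{equation*}
Combining (\ref{first moment}) with the relation (\ref{eq:mgfconnection}) gives $\partial_{\al_j}\tilde r^\s(0)=\tau_\s\,\rho_+^\s(\Phi_j^\s)$, and at equilibrium every steady flux $\rho_+^\s(\Phi_j^\s)$ vanishes (Remark~\ref{rem:zerofluxeq}, which uses (\nameref{Non-Entanglement})). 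Hence the first--order terms drop out and $\partial_{\al_j}\partial_{\al_k}\tilde e^\s(0)=\frac{1}{\tau_\s}\partial_{\al_j}\partial_{\al_k}\tilde r^\s(0)$.

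Finally I would insert the second derivatives of $\tilde r^\s$ furnished by Corollary~\ref{corMoments}, namely $\partial_{\al_k}\partial_{\al_j}\tilde r^{cy}(0)=T(L_{jk}^{cy}+L_{kj}^{cy})$ and $\partial_{\al_k}\partial_{\al_j}\tilde r^{ra}(0)=\frac{T}{M}(L_{jk}^{ra}+L_{kj}^{ra})$. Dividing by $\tau_{cy}=T$, resp. $\tau_{ra}=\frac{T}{M}$, produces $D_{jk}^{cy}=L_{jk}^{cy}+L_{kj}^{cy}$ and $D_{jk}^{ra}=L_{jk}^{ra}+L_{kj}^{ra}$; invoking (\nameref{Time-Reversal}) through the Onsager relations (\ref{ons}), i.e. $L_{kj}^{cy}=L_{jk}^{rcy}$ and $L_{kj}^{ra}=L_{jk}^{ra}$, turns these into the announced $D_{jk}^{cy}=L_{jk}^{cy}+L_{jk}^{rcy}$ and $D_{jk}^{ra}=2L_{jk}^{ra}$. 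The substantive analytic work --- that the limiting covariance really is the Hessian of $\tilde e^\s$ at $0$ --- lives entirely in Theorem~\ref{flucresults} (analyticity of the spectral radius $r^\s_\bfzeta$ together with the standard argument deriving a central limit theorem from the convergence of the cumulant generating functions); here the only genuinely delicate points are the equilibrium cancellation of the first moments and the appeal to (\nameref{Time-Reversal}) to symmetrize the coefficients.
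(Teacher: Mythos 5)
Your proof is correct and takes essentially the same route as the paper: part 3) of Theorem \ref{flucresults} transported to the energy variables gives the Gaussian limit with covariance $\partial^2_{\al_j\al_k}\tilde e^{\s}(0)$, which is then identified through Corollary \ref{corMoments} together with the equilibrium vanishing $\partial_{\al_j}\tilde r^{\s}(0)=\tau_\s\,\rho_+^{\s}(\Phi_j^{\s})=0$ and the Onsager relations (\ref{ons}). The only difference is presentational: you spell out the chain-rule cancellation of the first-order terms and the appeal to (\ref{ons}) to pass from the symmetric form $L_{jk}^{\s}+L_{kj}^{\s}$ to the stated $L_{jk}^{cy}+L_{jk}^{rcy}$ and $2L_{jk}^{ra}$, steps the paper leaves implicit.
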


\begin{proof} Theorem \ref{flucresults} gives the convergence with $D_{jk}^\s = \partial^2_{\al_j\al_k}\tilde e^{\s}(0)$ where $\tilde e^\s(\alpha) = \frac{1}{\tau_\s}\log \tilde r^\s(\alpha)$. The result follows using Corollary \ref{corMoments} and that at equilibrium $\dr{\al_j}\tilde r(0)= \tau_\s\rho_+^\s(\Phi_j^\s)=0$.
\end{proof}


\section{Proof of the main results}\label{proofs}


\subsection{Proof of Theorem \ref{gko}}\label{prooflrt}

This proof is inspired by the one given in \cite{LS} (see also \cite{JPW}) for open quantum systems interacting with thermal reservoirs in the Van Hove weak coupling limit. In order to alleviate the notation, all the quantities without any $\zeta$ or $\bfzeta$ parameter should be understood at equilibrium $\bfzeta=0$.

\noindent 1)-2) The differentiability of $\rho_{+,\bfzeta}^\s$ follows from Proposition \ref{erprop} and the one of $\Phi^\s_{j,\bfzeta}$ is clear from its definition. Thus we have
\begin{equation}\label{pf:kinetic}
L_{jk}^\s = \partial_{\zeta_k} \rho^\s_{+,\bfzeta}\big\lceil_{\bfzeta = 0} (\Phi^\s_j) + \rho_{+} \left(\partial_{\zeta_k}\Phi^\s_{j,\bfzeta}\big\lceil_{\bfzeta = 0}\right). 
\end{equation}
Since $\cL_{\s,\bfzeta}\left(\rho^\s_{+,\bfzeta}\right)=\rho^\s_{+,\bfzeta}$ for any $\bfzeta$ we get
$\ds \left(\un - \Lj_{\s,\bfzeta}\right)\left(\dr{\zeta_k} \rho^\s_{+,\bfzeta}\right) = \left(\dr{\zeta_k}\Lj_{\s,\bfzeta}\right)(\rho^\s_{+,\bfzeta})$,
hence
\begin{equation}\label{pregk}
\left(\un - \Lj_{\s,\bfzeta}^N\right)\left(\dr{\zeta_k} \rho^{\s}_{+,\bfzeta} \right) = \sum_{n=0}^{N-1} \Lj_{\s,\bfzeta}^n \circ \dr{\zeta_k}\Lj_{\s,\bfzeta} (\rho^\s_{+,\bfzeta}),\qquad \forall N\geq 1.
\end{equation}
Using Proposition \ref{erprop} we have
$\ds
\lim_{N\to\infty} \Lj_{\s,\bfzeta}^N\left(\dr{\zeta_k} \rho^{\s}_{+,\bfzeta} \right) = \tr \left(\dr{\zeta_k} \rho^{\s}_{+,\bfzeta} \right) \rho_{+,\bfzeta}^\s = 0
$
where we have used that $\tr(\rho_{+,\bfzeta}^\s)=1$ for all $\bfzeta$ so that $\tr\left(\dr{\zeta_k} \rho^\s_{+,\bfzeta}\right)=0$. Letting $N\to\infty$ in (\ref{pregk}) we get
$\ds \dr{\zeta_k} \rho^{\s}_{+,\bfzeta} \lceil_{\bfzeta = 0} = \sum_{n=0}^{+\infty} \Lj_{\s}^n \circ  \dr{\zeta_k} \Lj_{\s,\bfzeta}\lceil_{\bfzeta = 0} (\rho_{+}),$ and (\ref{pf:kinetic}) becomes
\begin{equation}\label{pf:kinetic2}
L_{jk}^\s =  \sum_{n=0}^{+\infty} \rho_{+} \left(\dr{\zeta_k} \Lj_{\s,\bfzeta}^*\lceil_{\bfzeta = 0} \, \circ \, \Lj_{\s}^{*n} \, (\Phi^\s_{j})\right) + \rho_{+} \left(\partial_{\zeta_k}\Phi^\s_{j,\bfzeta}\big\lceil_{\bfzeta = 0}\right). 
\end{equation}

\begin{lemm}\label{lem:rdmderiv} For all $k$ and $X\in\cB(\cH_\cS)$ one has $\ds \rho_{+} \left(\dr{\zeta_k} \Lj_{k,\zeta_k}^*\lceil_{\zeta_k = 0} (X)\right)= T \rho_{+}\left( X \Phi_{k,\rev}\right).$ 
\end{lemm}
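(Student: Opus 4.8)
The plan is to differentiate $\cL_{k,\zeta_k}^*$ explicitly and then use the Non-Entanglement condition at equilibrium to identify the result with $T\rho_+(X\Phi_{k,\rev})$. First I would observe that the \emph{only} $\zeta_k$-dependence of $\cL_{k,\zeta_k}^*$ is carried by the Gibbs state $\rho_{\En_k,\zeta_k}$, since the $U_k$ are independent of $\bfzeta$. Writing $\beta_k=\beta_\refer-\zeta_k$ and differentiating $\rho_{\En_k,\zeta_k}=\e^{-\beta_k H_{\En_k}}/\tr(\e^{-\beta_k H_{\En_k}})$ one obtains, at $\zeta_k=0$ (all quantities below evaluated at $\bfzeta=0$),
\[
\dr{\zeta_k}\rho_{\En_k,\zeta_k}\lceil_{\zeta_k=0}=\big(H_{\En_k}-\langle H_{\En_k}\rangle\big)\rho_{\En_k},\qquad \langle H_{\En_k}\rangle:=\tr(\rho_{\En_k}H_{\En_k}),
\]
whence
\[
\dr{\zeta_k}\cL_{k,\zeta_k}^*\lceil_{\zeta_k=0}(X)=\tr_{\cH_{\En_k}}\!\Big(\un\otimes\big(H_{\En_k}-\langle H_{\En_k}\rangle\big)\rho_{\En_k}\times U_k^*\,X\otimes\un\,U_k\Big).
\]

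Next I would rewrite both sides as traces over $\cS+\cE_k$. Pairing the left-hand side with $\rho_+$ and using the partial-trace identity gives
\[
\rho_+\!\Big(\dr{\zeta_k}\cL_{k,\zeta_k}^*\lceil_{\zeta_k=0}(X)\Big)=\tr\Big(\rho_+\otimes\big(H_{\En_k}-\langle H_{\En_k}\rangle\big)\rho_{\En_k}\times U_k^*\,X\otimes\un\,U_k\Big),
\]
whereas, from $\Phi_{k,\rev}=\tfrac1T\tr_{\rho_{\En_k}}(U_kH_{\En_k}U_k^*-H_{\En_k})$,
\[
T\rho_+(X\Phi_{k,\rev})=\tr\Big(\rho_+X\otimes\rho_{\En_k}\times\big(U_kH_{\En_k}U_k^*-H_{\En_k}\big)\Big).
\]
Each expression splits into a transport term and a term proportional to $\langle H_{\En_k}\rangle$.

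The key tool is the commutation $[U_k,\rho_+\otimes\rho_{\En_k}]=0$, which is exactly the Non-Entanglement condition of Assumption (\nameref{Non-Entanglement}) at $\zeta=0$ (recall $\rho_+=\rho_{+,0}$ from Remark \ref{rem:NEequilibriumstate}), together with $[\rho_{\En_k},H_{\En_k}]=0$. For the $\langle H_{\En_k}\rangle$ terms this commutation and $\tr\rho_{\En_k}=1$ reduce both sides to $-\langle H_{\En_k}\rangle\,\rho_+(X)$. For the transport terms I would set $P:=\rho_+\otimes\rho_{\En_k}$, $G:=\un\otimes H_{\En_k}$, $Y:=X\otimes\un$, so that $\rho_+\otimes H_{\En_k}\rho_{\En_k}=PG$ and $\rho_+X\otimes\rho_{\En_k}=PY$; the two leftover contributions then read $\tr(PG\,U_k^*YU_k)$ and $\tr(PY\,U_kGU_k^*)$. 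Since $P$ commutes with both $U_k$ and $G$, it commutes with $U_kGU_k^*$, and cyclicity of the trace yields
\[
\tr(PG\,U_k^*YU_k)=\tr(P\,U_kGU_k^*\,Y)=\tr(PY\,U_kGU_k^*).
\]
Adding the matching pieces gives the claim. I expect this last identification to be the only delicate point: it relies on the \emph{full} commutation $[U_k,\rho_+\otimes\rho_{\En_k}]=0$ furnished by (\nameref{Non-Entanglement}), and not merely on the weaker invariance $\cL_k(\rho_+)=\rho_+$. Everything else is a direct computation.
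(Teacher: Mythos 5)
Your proof is correct and follows essentially the same route as the paper's: differentiate the Gibbs state $\rho_{\cE_k,\zeta_k}$ at $\zeta_k=0$, rewrite both sides as traces over $\cS+\cE_k$, and use cyclicity of the trace together with the commutation $[U_k,\rho_+\otimes\rho_{\cE_k}]=0$ supplied by Assumption (\nameref{Non-Entanglement}) at equilibrium to identify the result with $T\rho_+\left(X\Phi_{k,\rev}\right)$. Your split into a transport piece and a $\langle H_{\cE_k}\rangle$ piece is merely a more itemized presentation of the paper's computation, and your closing remark correctly pinpoints, as the paper does, that the full Non-Entanglement commutation (not just invariance of $\rho_+$ under $\cL_k$) is what the key step uses.
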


\begin{proof} Using $\dr{\zeta_k} \rho_{\cE_k,\zeta_k} = H_{\cE_k}\rho_{\cE_k,\zeta_k} - \tr(H_{\cE_k}\rho_{\cE_k,\zeta_k})\rho_{\cE_k,\zeta_k}$ and the definition of $\Lj_{k,\zeta_k}$ we have
\begin{eqnarray*}
\lefteqn{ \rho_{+} \left(\dr{\zeta_k} \Lj_{k,\zeta_k}^*\lceil_{\zeta_k = 0} (X)\right)} \\
 & = & \tr\left( \un \otimes H_{\cE_k} \times \rho_{+} \otimes \rho_{\cE_k} \, U_k^* \, X\otimes \un \, U_k \right) - \tr\left(\rho_{+} \otimes \rho_{\cE_k} \, U_k^* \, X\otimes \un \, U_k \right) \times \tr(H_{\cE_k}\rho_{\cE_k})\\
 & = & \tr\left( U_k\,  \un \otimes H_{\cE_k} \, U_k^* \times \rho_{+}X \otimes \rho_{\cE_k} \right) - \tr\left(\rho_{+} X \right) \times \tr(H_{\cE_k}\rho_{\cE_k})\\
 & = & \tr\left( \big(U_k\,  \un \otimes H_{\cE_k} \, U_k^* - \un \otimes H_{\cE_k}\big)  \times \rho_{+}X \otimes \rho_{\cE_k} \right) \\
 & = & T \tr \left( \Phi_{k,\rev} \rho_{+}X\right),
\end{eqnarray*}
where we have used the cyclicity of the trace and Assumption (\nameref{Non-Entanglement}) in line 2.
\end{proof}

\begin{coro} For all $k=1,\ldots,M$ and $X\in\cB(\cH_\cS)$ one has 
\begin{align}
\rho_{+} \left(\dr{\zeta_k} \Lj_{cy,\bfzeta}^*\lceil_{\bfzeta = 0} (X)\right) = & \ T \rho_{+}\left( \cL_{k+1}^*\circ \cdots \circ \cL_{M}^*(X) \Phi_{k,\rev}\right) \ = \ T \rho_{+}\left( X \Phi_{k,\rev}^{cy}\right), \label{gkcsum}\\
\rho_{+} \left(\dr{\zeta_k} \Lj_{ra,\bfzeta}^*\lceil_{\bfzeta = 0} (X)\right) = & \ \frac{T}{M} \rho_{+}\left( X \Phi_{k,\rev}\right). \label{gkrsum}
\end{align}
\end{coro}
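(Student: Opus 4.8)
The plan is to prove the two identities separately, reducing each to Lemma \ref{lem:rdmderiv}. The two facts I will lean on throughout are that $\cL_{l,\bfzeta}$ depends only on $\zeta_l$ and that, under (\nameref{ergodicity}) and (\nameref{Non-Entanglement}), $\rho_+$ is a common invariant state of all the $\cL_l$'s (Proposition \ref{prop:effectham}).

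The random identity (\ref{gkrsum}) is immediate. Since $\cL_{ra,\bfzeta}^*=\frac1M\sum_{l=1}^M \cL_{l,\zeta_l}^*$ and only the $l=k$ term depends on $\zeta_k$, one has $\dr{\zeta_k}\cL_{ra,\bfzeta}^*\lceil_{\bfzeta=0}=\frac1M\,\dr{\zeta_k}\cL_{k,\zeta_k}^*\lceil_{\zeta_k=0}$, so (\ref{gkrsum}) follows by applying Lemma \ref{lem:rdmderiv} directly. For the first equality in (\ref{gkcsum}) I would start from $\cL_{cy,\bfzeta}^*=\cL_{1,\zeta_1}^*\circ\cdots\circ\cL_{M,\zeta_M}^*$. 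Again only the $k$-th factor depends on $\zeta_k$, so by linearity of the remaining (constant in $\zeta_k$) factors the chain rule gives
\begin{equation*}
\dr{\zeta_k}\cL_{cy,\bfzeta}^*\lceil_{\bfzeta=0}=\cL_1^*\circ\cdots\circ\cL_{k-1}^*\circ\big(\dr{\zeta_k}\cL_{k,\zeta_k}^*\lceil_{\zeta_k=0}\big)\circ\cL_{k+1}^*\circ\cdots\circ\cL_M^*.
\end{equation*}
Applying $\rho_+$ and using $\cL_l(\rho_+)=\rho_+$, i.e. $\rho_+(\cL_l^*(\,\cdot\,))=\rho_+(\,\cdot\,)$, peels off the factors $\cL_1^*,\ldots,\cL_{k-1}^*$. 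Lemma \ref{lem:rdmderiv} applied with $Y:=\cL_{k+1}^*\circ\cdots\circ\cL_M^*(X)$ then yields $\rho_+\big(\dr{\zeta_k}\cL_{cy,\bfzeta}^*\lceil_{\bfzeta=0}(X)\big)=T\rho_+(Y\,\Phi_{k,\rev})$, which is exactly the first equality.

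The main work is the second equality $\rho_+(\cL_{k+1}^*\circ\cdots\circ\cL_M^*(X)\,\Phi_{k,\rev})=\rho_+(X\,\Phi_{k,\rev}^{cy})$. The key ingredient is an ``integration by parts'' rule coming from (\ref{eq:revinteraction}): at equilibrium $\cL_{l,\rev}^*=\cL_l^{*\rho_+}$ is the $\rho_+$-adjoint of $\cL_l^*$ for $\langle A,B\rangle_{\rho_+}=\tr(\rho_+ A^*B)$, and since $\cL_l^*$ preserves adjoints this translates into
\begin{equation*}
\rho_+\big(\cL_l^*(Y)\,W\big)=\rho_+\big(Y\,\cL_{l,\rev}^*(W)\big),\qquad \forall\, Y,W\in\cB(\cH_\cS).
\end{equation*}
Applying this identity successively for $l=k+1,k+2,\ldots,M$ transfers the block $\cL_{k+1}^*\circ\cdots\circ\cL_M^*$ off $X$ and onto $\Phi_{k,\rev}$, producing $\cL_{M,\rev}^*\circ\cdots\circ\cL_{k+1,\rev}^*(\Phi_{k,\rev})$, which is $\Phi_{k,\rev}^{cy}$ by (\ref{def:cyclicfluxrev}). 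The only delicate point—hence the step I expect to be the main obstacle—is the bookkeeping of the composition order: each application of the rule removes the currently leftmost $\cL_l^*$ acting on $X$ and installs $\cL_{l,\rev}^*$ as the new outermost map on the right-hand factor, so the reversed maps accumulate in exactly the order $\cL_{M,\rev}^*\circ\cdots\circ\cL_{k+1,\rev}^*$ prescribed by the definition of $\Phi_{k,\rev}^{cy}$. Verifying that this order matches (\ref{def:cyclicfluxrev}) completes the proof.
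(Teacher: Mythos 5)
Your proposal is correct and follows essentially the paper's own route: the same chain-rule decomposition isolating the $k$-th factor, the peeling off of $\cL_1^*,\ldots,\cL_{k-1}^*$ by invariance of $\rho_+$, Lemma \ref{lem:rdmderiv} applied to $\cL_{k+1}^*\circ\cdots\circ\cL_M^*(X)$, and the second equality obtained from the $\rho_+$-adjoint relation (\ref{eq:revinteraction}) together with Definition \ref{def:fluxrev}. Your explicit bilinear transfer rule $\rho_+\left(\cL_l^*(Y)\,W\right)=\rho_+\left(Y\,\cL_{l,\rev}^*(W)\right)$ — which indeed follows from the sesquilinear adjoint identity using that completely positive maps preserve adjoints and $\rho_+=\rho_+^*$ — is precisely what the paper's terse appeal to (\ref{eq:revinteraction}) implicitly invokes, and your bookkeeping of the composition order correctly reproduces $\Phi_{k,\rev}^{cy}=\cL_{M,\rev}^*\circ\cdots\circ\cL_{k+1,\rev}^*(\Phi_{k,\rev})$ as in (\ref{def:cyclicfluxrev}).
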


\begin{proof} From (\ref{def:cycle}) we infer that $\ds  \dr{\zeta_k}\Lj_{cy,\bfzeta}^*\lceil_{\bfzeta=0}= \cL_{1}^*\circ\cdots \circ \cL_{k-1}^*\circ \dr{\zeta_k} \Lj_{k,\zeta_k}^*\lceil_{\zeta_k = 0} \circ \cL_{k+1}^*\circ \cdots \circ \cL_{M}^*$. The first equality in (\ref{gkcsum}) then follows from Lemma \ref{lem:rdmderiv} and the fact that $\rho_{+}$ is a joint invariant state for $\cL_{j}$, $j=1,\ldots,M$, and the second equality from (\ref{eq:revinteraction})  and Definition \ref{def:fluxrev}.

The second identity is immediate by definition of $\Lj_{ra,\bfzeta}$.
\end{proof}

Inserting (\ref{gkcsum})-(\ref{gkrsum}) into (\ref{pf:kinetic2}) leads to the infinite sums in (\ref{gkc})-(\ref{gkr})-(\ref{gkc2}). It thus remains to compute $\ds \rho_{+} \left(\partial_{\zeta_k}\Phi^\s_{j,\bfzeta}\big\lceil_{\bfzeta = 0}\right)$. For $\s=ra$ it follows from (\ref{translation2}) and Lemma \ref{lem:rdmderiv} that 
\[
\rho_{+} \left(\partial_{\zeta_k}\Phi_{j,\bfzeta}\big\lceil_{\bfzeta = 0}\right) = \delta_{jk}\frac{1}{T}\, \rho_+ \left( \partial_{\zeta_j}\cL_{j,\zeta_j}^*\lceil_{\zeta_j=0}(H_\cS')\right) = \delta_{jk} \rho_+\left( H_\cS' \Phi_{j,\rev}\right), 
\]
while for $\s=cy$, using moreover (\ref{def:cyclicflux}) and the fact that $\rho_{+}$ is a joint invariant state for the $\cL_{j}$'s, we have for $k<j$ 
\begin{eqnarray*}
\rho_{+} \left(\partial_{\zeta_k}\Phi_{j,\bfzeta}^{cy}\big\lceil_{\bfzeta = 0}\right) & = &  \rho_+\left(\partial_{\zeta_k}\cL_{k,\zeta_k}^*\lceil_{\zeta_k=0}\, \circ \, \cL_{k+1}^*\circ \cdots\circ \cL_{j-1}^* (\Phi_j) \right) \\
 & = & T\, \rho_+\left(\cL_{k+1}^*\circ \cdots\circ \cL_{j-1}^* (\Phi_j) \times \Phi_{k,\rev} \right),
\end{eqnarray*}
while $\ds \rho_{+} \left(\partial_{\zeta_k}\Phi_{j,\bfzeta}^{cy}\big\lceil_{\bfzeta = 0}\right)=0$ for $k>j$ and
\[
\rho_{+} \left(\partial_{\zeta_j}\Phi_{j,\bfzeta}^{cy}\big\lceil_{\bfzeta = 0}\right) = \frac{1}{T}\, \rho_+ \left( \partial_{\zeta_j}\cL_{j,\zeta_j}^*\lceil_{\zeta_j=0}(H_\cS')\right) = \rho_+\left( H_\cS' \Phi_{j,\rev}\right).
\]
Equations (\ref{gkc})-(\ref{gkr}) finally follow because $\ds \rho_+\left( H_\cS' \Phi_{j,\rev}\right) = \frac{1}{2} \rho_+(\cD_j(H_\Sy ',H_\cS'))$. Indeed, 
\begin{eqnarray*}
\rho_+\left( H_\cS' \Phi_{j,\rev} \right) & = & \frac{1}{T} \,\rho_+\Big( H_\cS'^2 - H_\cS' \Lj_{j,\rev}^* (H_{\Sy}')\Big) \\
 & = & \frac{1}{T}\, \rho_+\Big( H_\cS'^2 - \Lj_{j}^* (H_{\Sy}')H_\cS'  \Big) \\
 & = & \frac{1}{2T}\, \rho_+\Big( H_\cS'^2 + \Lj_j^*(H_\cS'^2)- \Lj_{j}^* (H_{\Sy}')H_\cS' - H_\cS' \Lj_{j}^* (H_{\Sy}')  \Big) \\
 & = & \frac{1}{2}\, \rho_+\big(\cD_j(H_\Sy ',H_\cS')\big),
\end{eqnarray*}
where we have used (\ref{translationrev}) in line 1, (\ref{eq:revinteraction}) in line 2, and that at equilibrium $\rho_+$ is $\cL_j$ invariant and $[\cL_j^*(H_\cS'),H_\cS']=0$, see Remark \ref{rem:effecthaminvariance},
in line 3.

\medskip

3) We now prove the Onsager relations (\ref{ons}). We thus now assume that Assumption (\nameref{Time-Reversal}) also holds. It follows from (\ref{eq:revLj}) that
$\ds
\Theta\,\circ\,\cL_{ra}^* \,\circ \,\Theta = \cL_{ra}^{*\rho_+}$, and $\rho_+(\Theta(X))=\rho_+(X^*)$ for any $X\in\cB(\cH_\cS)$ because $\rho_+$ is $\Theta$ invariant. Hence using (\ref{eq:phireversal}) we get for all $n$
\begin{eqnarray}\label{onsproof}
 \rho_+ (\Lj_{ra}^{*n} (\Phi_j) \Phi_{k,\rev}) & = & \rho_+ \Big(\Theta (\Phi_{k,\rev}) \times (\Lj_{ra}^{*\rho_+})^n(\Theta(\Phi_j))\Big) \nonumber\\
 & = & \rho_+ \Big(\Phi_k \times (\Lj_{ra}^{*\rho_+})^n(\Phi_{j,\rev})\Big) \nonumber\\
 & = & \rho_+ (\Lj_{ra}^{*n} (\Phi_k) \Phi_{j,\rev}),
\end{eqnarray}
from which $L_{jk}^{ra}=L_{kj}^{ra}$ follows.

Similarly, let
\begin{equation*}
 \Lj_{rcy}:=\Lj_1\circ\Lj_2\circ\cdots\circ\Lj_M,\ \  \Phi_j^{rcy}=\frac{1}{T}\Lj_M^*\circ\cdots\circ\Lj_{j+1}^*(\Phi_j) \ \mbox{ and } \ \Phi_{j,\rev}^{rcy}=\frac{1}{T}\Lj_{1,\rev}^{*}\circ\cdots\circ\Lj_{j-1,\rev}^{*}(\Phi_{j,\rev}),
\end{equation*}
denote the reversed-order analogs of $\Lj_{cy}$, $\Phi_j^{cy}$ and $\Phi_{j,\rev}^{cy}$. The associated kinetic coefficients $L_{jk}^{rcy}$ are thus given by
\begin{equation*}\label{gkp}
\begin{aligned}
L_{jk}^{rcy} \ = \ & T \sum_{n=0}^{+\infty} \rho_+ (\Lj_{rcy}^{*n} (\Phi_j^{rcy}) \Phi_{k,\rev}^{rcy}) \\
& + \delta_{k>j}\, T\rho_+\Big( \Lj_{k-1}^*\circ\cdots\circ\Lj_{j+1}^*(\Phi_j) \times \Phi_{k,\rev}\Big) +\frac{1}{2}\delta_{jk} \rho_+(\cD_j(H_\Sy ',H_\cS')),
\end{aligned}
\end{equation*}
It follows from Lemma \ref{eq:revLj} and (\ref{eq:rdmtri})-(\ref{eq:phireversal}) that 
\[
\Theta\circ \cL_{cy}^*\circ \Theta = \cL_{rcy}^{*\rho_+}, \ \ \Theta\left( \Phi_{j}^{cy} \right) =  -\Phi_{j,\rev}^{rcy} \ \mbox{ and } \ \Theta\left( \Phi_{k,\rev}^{cy} \right) =  -\Phi_{k}^{rcy},
\]
and the same reasoning as in (\ref{onsproof}) shows that $L_{jk}^{cy}=L_{kj}^{rcy}$.


\subsection{Proof of Proposition \ref{prop:mgfdeformedrdm}}\label{sec:proofmgfdeformedrdm}

Since $[H_{\cE_j},H_{\cE_k}]=[H_{\cE_j},H_k]=0$ for $j\neq k$, we have
\[
\mc{U}_N^\bfs (\bj) = \Pi_{s_N'}\left(S_{\En_{j_N}}^N\right)\cdots \Pi_{s_1'}\left(S_{\En_{j_1}}^1\right) \times U_N\cdots U_1\times  \Pi_{s_1}\left(S_{\En_{j_1}}^1\right)\cdots \Pi_{s_N}\left(S_{\En_{j_N}}^N\right),
\]
so that (\ref{eq:entropyincrementproba}) becomes
\begin{eqnarray*}
\lefteqn{ \mb{P} (\mc{S}_{\Res}^{N}(\bs{j})=\bs\varsigma) } \\
 & = \!\! & \sum \tr \left(  \Pi_{s_N'}(S_{\En_{j_N}}^N)\cdots \Pi_{s_1'}(S_{\En_{j_1}}^1) \times U_N\cdots U_1\times  \Pi_{s_1}(S_{\En_{j_1}}^1)\cdots \Pi_{s_N}(S_{\En_{j_N}}) \times \rho_{tot}^N(\bs{j}) \times U_1^*\cdots U_N^*  \right),
\end{eqnarray*}
where the sum is as in (\ref{eq:entropyincrementproba}). Hence
\begin{eqnarray*}\label{specmgf:proof}
\lefteqn{r_{N,\rho}^{\bj} (\al) }\nonumber\\
 & = & \sum_{\varsigma_1,\ldots,\varsigma_M} \e^{-(\alpha_1\varsigma_1+\cdots+\alpha_M\varsigma_M)} \mb{P} (\mc{S}_{\Res}^{N}(\bs{j})=\bs\varsigma)\nonumber\\
 & = & \sum_{\bfs}  \prod_{n=1}^N \e^{-\alpha_{j_n}(s'_n-s_n)}  \tr\left(  \Pi_{s_N'}(S_{\En_{j_N}}^N)\cdots \Pi_{s_1'}(S_{\En_{j_1}}^1) \times U_N\cdots U_1\times  \Pi_{s_1}(S_{\En_{j_1}}^1)\cdots \Pi_{s_N}(S_{\En_{j_N}}) \right.\nonumber\\
 & & \qquad\qquad\qquad\qquad\qquad\qquad\qquad\qquad\qquad\qquad\qquad\qquad\qquad\qquad \times \rho_{tot}^N(\bs{j}) \times U_1^*\cdots U_N^*   \Big)\nonumber\\
 & = & \tr\left( \e^{-\alpha_{j_N}S_{\En_{j_N}}^N}\cdots\e^{-\alpha_{j_1}S_{\En_{j_1}}^1} \times U_N\cdots U_1\times \e^{\alpha_{j_1}S_{\En_{j_1}}^1} \cdots \e^{\alpha_{j_N}S_{\En_{j_N}}^N} \times \rho_{tot}^N(\bs{j}) \times U_1^*\cdots U_N^* \right). 
\end{eqnarray*}
Now, recall that $\rho_{tot}^N(\bs{j})= \rho\otimes\rho_{\En_{j_1}}\otimes\cdots\otimes\rho_{\En_{j_N}}$ where $\ds \rho_{\En_{j_n}} = \frac{\e^{-S_{\En_{j_n}}}}{\tr\left( \e^{-S_{\En_{j_n}}}\right)}$ for any $n$, and that the $\e^{-S_{\En_{j_n}}^n}$'s act on different probes. Hence we get
\[
r_{N,\rho}^{\bj} (\al)  =  \tr\left( \un_\cS\otimes \rho_{\En_{j_1}}^{\alpha_{j_1}}\otimes\cdots\otimes\rho_{\En_{j_N}}^{\alpha_{j_N}} \times U_N\cdots U_1\times \rho\otimes\rho_{\En_{j_1}}^{1-\alpha_{j_1}}\otimes\cdots\otimes\rho_{\En_{j_N}}^{1-\alpha_{j_N}}  \times U_1^*\cdots U_N^*   \right), 
\]
and (\ref{specmgf}) follows exactly in the same way as (\ref{rismarkovianform}) follows from (\ref{rishamiltonianform}).


\subsection{Proof of Theorem \ref{theojpw}}

1) Since $\Lj_{\s,\bfzeta}^{[\al]*}$ is completely positive, and primitive  by Proposition \ref{deformedsp}, it follows from Perron-Frobenius theory for completely positive maps \cite{EHK} that $r_\bfzeta^\s(\alpha)$ is a simple dominant eigenvalue with positive definite left and right eigenvectors. Then (\ref{eq:largetimemgf}) follows from (\ref{specmgfcyra}) by a standard argument: there exists $\gamma>0$ (spectral gap) such that for all $n$ one has $\left(\Lj_{\s}^{[\al]^*}\right)^n = (r_\bfzeta^\s(\alpha))^n| A \rangle\langle \nu| +O\left((r_\bfzeta^\s(\alpha)-\gamma)^n\right)$ where $\nu$ and $A$ denote the positive definite left and right eigenvectors of $\Lj_{\s,\bfzeta}^{[\al]*}$ normalized such that $\nu(A)=1$. Thus
\[
 r_{n,\rho}^{\s} (\al) = \rho\left( \left(\Lj_{\s}^{[\al]^*}\right)^n (\un) \right) = (r_\bfzeta^\s(\alpha))^n \left[ \rho(A)\times \nu(\un) + O\left(\left(1-\frac{\gamma}{r_\bfzeta^\s(\alpha)}\right)^n\right)\right],
\]
and the result follows since both $\rho(A)>0$ and $\nu(\un)>0$.

\smallskip

\noindent 2) The symmetry relies on the following Lemma which is a direct consequence of (\nameref{Time-Reversal}).
\begin{lemm} For any $j=1,\ldots,M$ and $\bfzeta,\alpha\in\R^M$ one has $\Theta\circ\cL_{j,\bfzeta}^{[\al]*}\circ\Theta = \cL_{j,\bfzeta}^{[{\bf 1}-\alpha]}$. As a consequence
\[
\Theta \circ \Lj_{cy,\bfzeta}^{[\al]*} \circ \Theta \ = \ \cL_{rcy,\bfzeta}^{[{\bf 1}-\alpha]} \quad \mbox{ and } \quad \Theta \circ \Lj_{ra,\bfzeta}^{[\al]*} \circ \Theta = \Lj_{ra,\bfzeta}^{[{\bf 1}-\al]}.
\]
\end{lemm}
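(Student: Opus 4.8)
The plan is to prove the single-probe identity by a direct conjugation of the defining formula for $\cL_{j,\bfzeta}^{[\al]*}$, and then to deduce the cyclic and random statements by inserting $\Theta\circ\Theta=\mathrm{id}$ between the factors. Throughout, write $\Theta_{\En_j}(\cdot):=\theta_{\En_j}\,\cdot\,\theta_{\En_j}$ for the time reversal on the $j$-th probe and let $\Theta\otimes\Theta_{\En_j}$ be the corresponding antilinear $*$-automorphism of $\cB(\cH_\cS\otimes\cH_{\En_j})$. First I would record the elementary facts on which everything rests: since $\theta\otimes\theta_{\En_j}$ is an antiunitary involution commuting with $H_j$, one has $(\Theta\otimes\Theta_{\En_j})(U_j)=U_j^*$ exactly as in the proof of Corollary \ref{coro:tri}; since the $\rho_{\En_j}$ are Gibbs states and $\theta_{\En_j}H_{\En_j}=H_{\En_j}\theta_{\En_j}$, one gets $\Theta_{\En_j}(\rho_{\En_j}^{s})=\rho_{\En_j}^{s}$ for every real exponent $s$ (here $\al\in\R^M$, so all exponents are real); and $\Theta$ is an involution. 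The one slightly less immediate fact is the compatibility of time reversal with the partial trace, namely $\Theta\circ\tr_{\En_j}=\tr_{\En_j}\circ(\Theta\otimes\Theta_{\En_j})$, which I would check on product operators $X\otimes Y$ using $\tr(\theta_{\En_j}Y\theta_{\En_j})=\overline{\tr(Y)}$ and then extend by antilinearity.

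With these in hand, the core computation is mechanical. Applying $\Theta$ to $\cL_{j,\bfzeta}^{[\al]*}(\Theta(X))$ and pushing $\Theta$ through the partial trace reduces the problem to computing $(\Theta\otimes\Theta_{\En_j})$ of the product $\un\otimes\rho_{\En_j}^{1-\al_j}\,U_j^*\,\Theta(X)\otimes\rho_{\En_j}^{\al_j}\,U_j$. Since $\Theta\otimes\Theta_{\En_j}$ is multiplicative, it distributes over the four factors; the two $\rho_{\En_j}$-powers and the identity are fixed, $\Theta(X)$ is sent back to $X$ by involutivity, and the two unitaries are swapped, $U_j^*\mapsto U_j$ and $U_j\mapsto U_j^*$. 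This yields
\[
\Theta\circ\cL_{j,\bfzeta}^{[\al]*}\circ\Theta(X)=\tr_{\En_j}\big(\un\otimes\rho_{\En_j}^{1-\al_j}\,U_j\,X\otimes\rho_{\En_j}^{\al_j}\,U_j^*\big).
\]
To identify the right-hand side I would separately compute the pre-dual $\cL_{j,\bfzeta}^{[\al]}$ of $\cL_{j,\bfzeta}^{[\al]*}$: a short computation using the definition of the partial trace and cyclicity of the trace gives $\cL_{j,\bfzeta}^{[\al]}(\rho)=\tr_{\En_j}(\un\otimes\rho_{\En_j}^{\al_j}\,U_j\,\rho\otimes\rho_{\En_j}^{1-\al_j}\,U_j^*)$. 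Replacing $\al_j$ by $1-\al_j$ here reproduces exactly the displayed operator, which proves $\Theta\circ\cL_{j,\bfzeta}^{[\al]*}\circ\Theta=\cL_{j,\bfzeta}^{[{\bf 1}-\al]}$.

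Finally I would derive the two consequences from the single-probe identity. For the cyclic map, inserting $\Theta\circ\Theta=\mathrm{id}$ between consecutive factors of $\Lj_{cy,\bfzeta}^{[\al]*}=\Lj_1^{[\al]*}\circ\cdots\circ\Lj_M^{[\al]*}$ gives $\Theta\circ\Lj_{cy,\bfzeta}^{[\al]*}\circ\Theta=\Lj_1^{[{\bf 1}-\al]}\circ\cdots\circ\Lj_M^{[{\bf 1}-\al]}$; since dualizing a composition reverses the order of the factors and $\Lj_{rcy,\bfzeta}^{[{\bf 1}-\al]*}=\Lj_M^{[{\bf 1}-\al]*}\circ\cdots\circ\Lj_1^{[{\bf 1}-\al]*}$, the right-hand side is precisely $\cL_{rcy,\bfzeta}^{[{\bf 1}-\al]}$. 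For the random map the same insertion, together with the additivity of $\Theta$ and the fact that the weights $1/M$ are real, gives $\Theta\circ\Lj_{ra,\bfzeta}^{[\al]*}\circ\Theta=\frac{1}{M}\sum_{j=1}^M\Lj_j^{[{\bf 1}-\al]}=\Lj_{ra,\bfzeta}^{[{\bf 1}-\al]}$. The only points requiring genuine care — and hence what I would flag as the main obstacle — are the bookkeeping of the antilinearity of $\Theta$ (all scalars that appear, including $1/M$ and the real exponents, are real, so the complex conjugations are harmless) and the reversal of the order of composition upon dualization, which is exactly what produces $rcy$ rather than $cy$ on the right-hand side.
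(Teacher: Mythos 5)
Your proof is correct, and at its core it is the same mechanism as the paper's: conjugate the defining expression for $\cL_{j,\bfzeta}^{[\al]*}$ by $\Theta\otimes\Theta_{\En_j}$, use (\nameref{Time-Reversal}) to trade $U_j\leftrightarrow U_j^*$, use the reality of the exponents to fix the Gibbs powers $\rho_{\En_j}^{s}$, and recognize the outcome as the pre-dual at ${\bf 1}-\al$. The difference is one of execution rather than idea. The paper proves the identity weakly: it pairs $\Theta\circ\cL_{j}^{[\al]*}\circ\Theta(X)$ against an arbitrary test observable $Y$ under the full trace, so that a single complex conjugation of the global trace absorbs all the antilinearity bookkeeping at once, and the dual map $\cL_j^{[{\bf 1}-\al]}$ is identified directly from the pairing without ever writing it out. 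You instead work strongly, at the level of the maps themselves, which obliges you to supply two auxiliary facts that the paper's route bypasses: the intertwining relation $\Theta\circ\tr_{\En_j}=\tr_{\En_j}\circ(\Theta\otimes\Theta_{\En_j})$ (your verification on product operators, using $\tr(\theta_{\En_j}Y\theta_{\En_j})=\overline{\tr(Y)}$ and antilinear extension, is sound in this finite-dimensional setting) and the explicit pre-dual formula $\cL_j^{[\al]}(\rho)=\tr_{\En_j}\big(\un\otimes\rho_{\En_j}^{\al_j}\,U_j\,\rho\otimes\rho_{\En_j}^{1-\al_j}\,U_j^*\big)$, which indeed follows from cyclicity of the trace exactly as you indicate. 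What your version buys is a self-contained strong-form identity and, as a bonus, a fully spelled-out deduction of the two consequences --- the insertion of $\Theta\circ\Theta=\mathrm{id}$ between factors and, crucially, the reversal of composition order under dualization, which is precisely what turns $cy$ into $rcy$ --- steps the paper asserts without proof; what the paper's weak-form argument buys is brevity, since duality and antilinearity are handled in one stroke. No gaps.
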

\noindent The lemma indeed implies that the maps $\ds \Lj_{ra,\bfzeta}^{[\al]*}$ and $\ds \Lj_{ra,\bfzeta}^{[{\bf 1}-\al]}$, resp. $\Lj_{cy,\bfzeta}^{[\al]*}$ and $\cL_{rcy,\bfzeta}^{[{\bf 1}-\alpha]}$, have the same spectral radius which proves (\ref{ES}).

\begin{proof} Denote by $\Theta_j$ the time-reversal of $\cE_j$ as in Section \ref{ssec:tri}. Then for any $X,Y$ we have
\begin{eqnarray*}
\lefteqn{\tr\left( Y\times \Theta\circ\cL_j^{[\al]*}\circ\Theta (X) \right)} \\
 & = & \overline{\tr\left( \Theta(Y) \times \cL_j^{[\al]*}\left(\Theta(X)\right) \right)} \\
 & = & \overline{\tr\left( \Theta(Y)\otimes \rho_{\cE_j}^{1-\alpha_j} \times U_j^* \times \Theta(X)\otimes \rho_{\cE_j}^{\alpha_j} \times U_j\right)} \\
 & = & \tr\left( Y\otimes \Theta_j(\rho_{\cE_j}^{1-\alpha_j}) \times \Theta\otimes\Theta_j(U_j^*) \times X\otimes \Theta_j(\rho_{\cE_j}^{\alpha_j}) \times \Theta\otimes\Theta_j(U_j)\right) \\
 & = & \tr\left( Y\otimes \rho_{\cE_j}^{1-\alpha_j} \times U_j \times X\otimes \rho_{\cE_j}^{\alpha_j} \times U_j^*\right) \\
 & = & \tr\left( Y \times \cL_j^{[{\bf 1}-\al]}(X) \right),
\end{eqnarray*}
where we have used Assumption (\nameref{Time-Reversal}) in the $4$-th line.
\end{proof}

\noindent 3) The argument is again of isospectral type and relies on the following lemma. 

\begin{lemm}\label{lem:Lalphaderiv} If (\nameref{Non-Entanglement}) holds then for any $j=1,\ldots,M$, $\alpha=(\alpha_1,\ldots,\alpha_M)\in\R^M$ and $\bfzeta=(\zeta_1,\ldots,\zeta_M)\in\R^M$ and $X\in\cB(\cH_\cS)$ one has
\begin{equation}\label{eq:Lalphaequiv}
\cL_j^{[\alpha]*}(X) = \cL_j^*\left( X \, \e^{\beta_j\alpha_j H_\cS'} \right)\e^{-\beta_j\alpha_j H_\cS'} = \e^{-\beta_j\alpha_j H_\cS'} \cL_j^*\left( \e^{\beta_j\alpha_j H_\cS'} \, X \right),
\end{equation} 
where $\beta_j$ is the inverse temperature of $\cE_j$, i.e. such that $\zeta_j=\beta_{ref}-\beta_j$. 
\end{lemm}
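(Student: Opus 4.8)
The plan is to reduce the identity to the conservation law (\ref{conservation}) of Proposition \ref{prop:effectham}, which (under (\nameref{Non-Entanglement}), together with (\nameref{ergodicity}) to define $H_\cS'$) guarantees that $U_j$ commutes with $H_\cS'\otimes\un + \un\otimes H_{\En_j}$, and hence with the positive operator
\[
G := \e^{\beta_j\alpha_j(H_\cS'\otimes\un+\un\otimes H_{\En_j})} = \e^{\beta_j\alpha_j H_\cS'}\otimes\e^{\beta_j\alpha_j H_{\En_j}}
\]
for every value of $\alpha_j$ (the two summands commute and act on different factors, so the exponential factorizes). This $G$ is the device that converts probe exponentials $\e^{\pm\beta_j\alpha_j H_{\En_j}}$ into system exponentials $\e^{\mp\beta_j\alpha_j H_\cS'}$.

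First I would rewrite the deformed map of Definition \ref{defi:Lalpha} purely in terms of exponentials. Using (\nameref{KMS}), $\rho_{\En_j}^{s} = Z_j^{-s}\e^{-s\beta_j H_{\En_j}}$, so that $\un\otimes\rho_{\En_j}^{1-\alpha_j}$ and $X\otimes\rho_{\En_j}^{\alpha_j}$ can be expanded and the normalizations collected into a single $Z_j^{-1}$, i.e. into $\un\otimes\rho_{\En_j}$. Writing $\un\otimes\e^{\pm\beta_j\alpha_j H_{\En_j}} = (\e^{\mp\beta_j\alpha_j H_\cS'}\otimes\un)\,G^{\pm1}$, I substitute these into the definition of $\cL_j^{[\alpha]*}$ and push the factors $G^{\pm1}$ through $U_j^*$ and $U_j$ using the commutation above. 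The probe exponentials then cancel inside the conjugation by $U_j$, since $G\,(X\otimes\un)\,G^{-1} = (\e^{\beta_j\alpha_j H_\cS'}X\e^{-\beta_j\alpha_j H_\cS'})\otimes\un$ leaves the probe factor untouched; what remains are only system exponentials. After collapsing, the surviving factors $\e^{\pm\beta_j\alpha_j H_\cS'}\otimes\un$ act on $\cS$ alone and can be extracted from the partial trace $\tr_{\En_j}$ (which is a $\cB(\cH_\cS)$-bimodule map), yielding exactly the second equality
\[
\cL_j^{[\alpha]*}(X) = \e^{-\beta_j\alpha_j H_\cS'}\,\cL_j^*\!\left(\e^{\beta_j\alpha_j H_\cS'}X\right).
\]

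For the first equality I would not repeat the computation but instead invoke the gauge invariance (\ref{eq:gauge}) of Remark \ref{rem:effecthaminvariance}. That identity, $\cL_j^*(\e^{itH_\cS'}X\e^{-itH_\cS'}) = \e^{itH_\cS'}\cL_j^*(X)\e^{-itH_\cS'}$, holds for all real $t$; since both sides are entire in $t$ it extends by analytic continuation (choosing $t=i\beta_j\alpha_j$) to $\cL_j^*(\e^{-\beta_j\alpha_j H_\cS'}Y\e^{\beta_j\alpha_j H_\cS'}) = \e^{-\beta_j\alpha_j H_\cS'}\cL_j^*(Y)\e^{\beta_j\alpha_j H_\cS'}$. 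Applying this with $Y = \e^{\beta_j\alpha_j H_\cS'}X$ shows that $\e^{-\beta_j\alpha_j H_\cS'}\cL_j^*(\e^{\beta_j\alpha_j H_\cS'}X)$ equals $\cL_j^*(X\e^{\beta_j\alpha_j H_\cS'})\e^{-\beta_j\alpha_j H_\cS'}$, establishing the first equality from the second.

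The whole argument is essentially bookkeeping, so the only genuinely delicate point is organizing the exponential factors so that each $G^{\pm1}$ meets $U_j$ (or $U_j^*$) with nothing but functions of $H_\cS'\otimes\un$ in between, these commuting with $G$, so that $[U_j,G]=0$ can actually be applied. I expect this ordering step to be where care is needed, everything else (the factorization of $G$, the extraction from the partial trace, and the analytic continuation) being routine.
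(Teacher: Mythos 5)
Your proposal is correct and follows essentially the same route as the paper: the KMS form of $\rho_{\cE_j}$ turns the fractional powers into exponentials, the conservation law (\ref{conservation}) lets the joint exponential $G$ commute through $U_j$, the surviving system factors are pulled out of the partial trace, and the remaining equality is obtained from the gauge identity (\ref{eq:gauge}) by analytic continuation. The only (immaterial) difference is a mirror image: the paper computes $\cL_j^*\left(X\,\e^{\beta_j\alpha_j H_\cS'}\right)\e^{-\beta_j\alpha_j H_\cS'}$ directly and deduces the other form from (\ref{eq:gauge}), whereas you do the reverse.
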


\begin{rema} Note that the mere existence of $H_\cS'$ requires (\nameref{Non-Entanglement}).
\end{rema}

\begin{proof} For any $X\in\cB(\cH_\cS)$ we have
\begin{eqnarray*}
\cL_j^{[\alpha]*}(X) & = & \tr_{\En_j} \left(\un \otimes \e^{\beta_j \alpha_j H_{\cE_j}} \rho_{\En_j} \times U_j^* \times X \otimes \e^{-\beta_j \alpha_j H_{\cE_j}}  \times U_j\right) \\
& = & \tr_{\En_j} \left(\un\otimes \rho_{\En_j} \times U_j^* \times X\e^{\beta_j \alpha_j H_{\cS}'} \otimes \un \times \e^{-\beta_j \alpha_j (H_\cS'+H_{\cE_j})}  \times U_j \times \un \otimes \e^{\beta_j \alpha_j H_{\cE_j}}\right) \\
 &= & \tr_{\En_j} \left(\un\otimes \rho_{\En_j} \times U_j^* \times X\e^{\beta_j \alpha_j H_{\cS}'} \otimes \un \times U_j\times  \e^{-\beta_j \alpha_j H_\cS'} \otimes \un  \right)\\
 & = & \cL_j^*\left( X \, \e^{\beta_j\alpha_j H_\cS'} \right)\e^{-\beta_j\alpha_j H_\cS'},
\end{eqnarray*} 
where we have used (\ref{conservation}) in line 3. The second equality is a direct consequence of (\ref{eq:gauge}).
\end{proof}

We finally deduce (\ref{ts}). For $\lambda\in \R$ let $\cM_\lambda$ denote the right-multiplication by $\e^{\lambda H_\cS'}$ on $\cB(\cH_\cS)$. It follows from Lemma \ref{lem:Lalphaderiv} that, for any $j,\alpha,\bfzeta$, we have
$\ds 
\cL_{j,\bfzeta}^{[\alpha+\lambda\beta^{-1}]*} = \cM_\lambda^{-1} \circ \cL_{j,\bfzeta}^{[\alpha]*} \circ \cM_\lambda,
$
so that 
\[
\cL_{\s,\bfzeta}^{[\alpha+\lambda\beta^{-1}]*} = \cM_\lambda^{-1} \circ \cL_{\s,\bfzeta}^{[\alpha]*} \circ \cM_\lambda, \quad \s=cy,\, ra.
\]
The maps $\ds \cL_{\s,\bfzeta}^{[\alpha+\lambda\beta^{-1}]*}$ and $\ds \cL_{\s,\bfzeta}^{[\alpha]*}$ thus have the same spectral radius, i.e. $\ds r^\s_\bfzeta(\alpha) = r^\s_\bfzeta(\alpha+\lambda \beta^{-1})$.


\subsection{Proof of Theorem \ref{momentstheo}}\label{proofmoments}

The proof is an adaptation of the one in \cite{JPW} for quantum dynamical semigroups. In all this section the thermodynamical parameter $\bfzeta$ is arbitrary but fixed, see Remark \ref{rem:fixedzeta}, and we shall omit it.

For $\alpha=0$ the operator $\cL_\s^{[0]*}=\cL_\s^*$ has a simple dominant eigenvalue $1$. By perturbation theory there exists a small circle $\Gamma$ centered at $1$ such that  for $\alpha\in\R^M$ sufficiently close to $0$ the only point in $\sp\left(\cL_\s^{[\al]*}\right)$ inside or on $\Gamma$ is its dominant eigenvalue $r_\s(\alpha)$. We shall further denote by $P_\s^{[\al]}$ its associated eigenprojection. Note that $P_\s^{[0]} (X) = \rho_+^\s(X)\un$.

For $n=0,1$ denote by $E_n^{\s} (\al)$ the quantity
\begin{equation}\label{def:En}
E_n^{\s} (\al) := \oint_{\Gamma} (z-1)^n  \rho_{+}^\s \left(\left(z-\Lj_{\s}^{[\al]*}\right)^{-1}(\un)\right) \frac{{\rm d} z}{2\pi i}.
\end{equation}
Writing $E_n^\s(\al)$ as
\[
E_n^{\s} (\al) := \oint_{\Gamma} (z-1)^n  \rho_{+}^\s \left(\left(z-\Lj_{\s}^{[\al]*}\right)^{-1}\circ (Id-P_\s^{[\al]})(\un)\right) \frac{{\rm d} z}{2\pi i} + \oint_{\Gamma} \frac{(z-1)^n}{z-r_\s(\alpha)}  \rho_{+}^\s \left(P_\s^{[\al]}(\un)\right) \frac{{\rm d} z}{2\pi i},
\]
the first term on the right-hand side is analytic inside $\Gamma$ and it follows from Cauchy's integral formula that
\begin{equation}\label{eq:Enformulas}
E_0^{\s}(\al) = \rho_{+}^\s \left(P_\s^{[\al]}(\un)\right) \ \mbox{ and } \ E_1^{\s}(\al) =(r^{\s}(\al)-1) \rho_{+}^\s \left(P_\s^{[\al]}(\un)\right),
\end{equation}
so that
\begin{equation}\label{eq:linkEnr}
r^{\s}(\al) = 1+ \frac{E_1^{\s}(\al)}{E_0^{\s}(\al)}.
\end{equation}
Since $P_\s^{[0]} (\un) = \un$ and $r^\s(0)=1$ it follows from (\ref{eq:Enformulas}) that $E_0^{\s}(0)=1$, $E_1^{\s}(0)=0$ hence $\ds \dr{\al_j} r^\s(0)=\ds \dr{\al_j} E_1^\s(0)$.
Using (\ref{def:En}) we get 
\begin{equation}\label{eq:Enderiv1}
\ds \dr{\al_j} E_n^\s(\alpha)  =  \oint_{\Gamma} (z-1)^n  \rho_{+}^\s \left(\left(z-\Lj_{\s}^{[\al]*}\right)^{-1} \circ \dr{\al_j} \Lj_{\s}^{[\al]*}\, \circ \left(z-\Lj_{\s}^{[\al]*}\right)^{-1}(\un)\right) \frac{{\rm d} z}{2\pi i},
\end{equation}
so that, because $\cL_\s^*(\un)=\un$ and $\cL_\s(\rho_+^\s)=\rho_+^\s$,
\[
\ds \dr{\al_j} E_n^\s(0) = \oint_{\Gamma} (z-1)^{n-2}  \rho_{+}^\s \left( \dr{\al_j} \Lj_{\s}^{[\al]*}\lceil_{\alpha=0}(\un)\right) \frac{{\rm d} z}{2\pi i},
\]
hence
\begin{equation}\label{eq:E1deriv}
\dr{\al_j} E_0^\s(0)=0 \quad \mbox{and} \quad \dr{\al_j} E_1^\s(0) = \rho_{+}^\s \left( \dr{\al_j} \Lj_{\s}^{[\al]*}\lceil_{\alpha=0}(\un)\right).
\end{equation}
Eq. (\ref{eq:linkEnr}) and (\ref{eq:E1deriv}) then give $\ds \dr{\al_k}\dr{\al_j} r^\s(0)=\ds \dr{\al_k}\dr{\al_j} E_1^\s(0)$, while from (\ref{eq:Enderiv1}) we infer that
\begin{eqnarray}\label{eq:Enderivative2}
\dr{\al_k}\dr{\al_j}E_1^\s (0)  & = & \oint_{\Gamma} \, (z-1)^{-1} \rho_{+}^\s \left(\dr{\al_k} \Lj_{\s}^{[\al]*}\lceil_{\al=0}\, \circ \left(z-\Lj_{\s}^*\right)^{-1} \circ \dr{\al_j} \Lj_{\s}^{[\al]*}\lceil_{\al=0}(\un)\right)  \frac{{\rm d} z}{2\pi i} \nonumber \\
 & & + \oint_{\Gamma} \, (z-1)^{-1} \rho_{+}^\s \left(\dr{\al_j} \Lj_{\s}^{[\al]*}\lceil_{\al=0}\, \circ \left(z-\Lj_{\s}^*\right)^{-1} \circ \dr{\al_k} \Lj_{\s}^{[\al]*}\lceil_{\al=0}(\un)\right)  \frac{{\rm d} z}{2\pi i} \nonumber \\
& &+  \oint_{\Gamma} \, (z-1)^{-1} \rho_{+}^\s \left(\dr{\al_j}\dr{\al_k} \Lj_{\s}^{[\al]*}\lceil_{\al=0}(\un)\right) \frac{{\rm d} z}{2\pi i}\\
& =: & I+ II + III \nonumber
\end{eqnarray}

\noindent The next lemma is the main technical ingredient leading from (\ref{eq:E1deriv})-(\ref{eq:Enderivative2}) to (\ref{first moment})-(\ref{second momentrd}).

\begin{lemm} For any $j,k=1,\ldots,M$ one has
\begin{equation}\label{eq:Lalpharacyderiv1}
\dr{\al_j}\cL_{ra}^{[\al]*}\lceil_{\al=0}(X) = \beta_j \frac{T}{M} \varphi_j(X), \quad \dr{\al_j}\cL_{cy}^{[\al]*}\lceil_{\al=0}(X) = \beta_j T \varphi_j^{cy}(X),
\end{equation}
\begin{equation}\label{eq:Lalpharaderiv2}
\dr{\al_k}\dr{\al_j}\cL_{ra}^{[\al]*}\lceil_{\al=0}(\un) = \delta_{jk}\beta_j^2\frac{T}{M} \cD_j(H_\cS',H_\cS'),
\end{equation}
and
\begin{equation}\label{eq:Lalphacyderiv2}
\dr{\al_k}\dr{\al_j}\cL_{cy}^{[\al]*}\lceil_{\al=0}(\un) = \left\{  \begin{array}{ll} 
\beta_j \beta_k T^2 \,\cL_1^*\circ\cdots\circ\cL_{j-1}^*\circ\varphi_j\circ\cL_{j+1}^*\circ\cdots\circ\cL_{k-1}^*(\Phi_k),  & \mbox{if } k>j,  \\
\beta_j \beta_k T^2  \,\cL_1^*\circ\cdots\circ\cL_{k-1}^*\circ\varphi_k\circ\cL_{k+1}^*\circ\cdots\circ\cL_{j-1}^*(\Phi_j), & \mbox{if } k<j, \\
\beta_j^2T  \, \cL_1^*\circ\cdots\circ\cL_{j-1}^*\big(\cD_j(H_\cS',H_\cS')\big), & \mbox{if } j=k. \end{array} \right.
\end{equation}
\end{lemm}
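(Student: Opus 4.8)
The plan is to reduce all four identities to derivatives of the single-probe deformed maps, using two structural facts. First, each map $\cL_k^{[\al]*}$ depends only on $\alpha_k$, so in the compositions (\ref{def:deformedrdmcyra}) a partial derivative $\partial_{\alpha_j}$ only acts on the $j$-th factor and mixed second derivatives of a single factor vanish unless the two indices coincide. Second, Lemma \ref{lem:Lalphaderiv} rewrites each factor as $\cL_j^{[\al]*}(X)=\cL_j^*(X\e^{\beta_j\alpha_j H_\cS'})\e^{-\beta_j\alpha_j H_\cS'}$, an expression that is linear in $\cL_j^*$ and elementary in $\alpha_j$, so differentiating it is immediate and no direct handling of the $\rho_{\En_j}^{\alpha_j}$ in the original definition is needed.

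First I would compute the first derivative of a single factor. Differentiating (\ref{eq:Lalphaequiv}) once and setting $\alpha_j=0$ gives $\partial_{\alpha_j}\cL_j^{[\al]*}(X)\lceil_{\alpha=0}=\beta_j\big(\cL_j^*(XH_\cS')-\cL_j^*(X)H_\cS'\big)=\beta_jT\varphi_j(X)$, by the definition (\ref{def:fluxmap}) of $\varphi_j$. Inserting this into $\cL_{ra}^{[\al]*}=\tfrac1M\sum_l\cL_l^{[\al]*}$ yields the first identity of (\ref{eq:Lalpharacyderiv1}), while for $\cL_{cy}^{[\al]*}=\cL_1^{[\al]*}\circ\cdots\circ\cL_M^{[\al]*}$ the chain rule places the factor $\beta_jT\varphi_j$ in the $j$-th slot with every other factor frozen at $\cL_l^*$, which is exactly $\beta_jT\varphi_j^{cy}$ by (\ref{def:fluxmap}).

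Next I would treat the second derivatives at $X=\un$. Differentiating (\ref{eq:Lalphaequiv}) twice gives $\partial_{\alpha_j}^2\cL_j^{[\al]*}(\un)\lceil_{\alpha=0}=\beta_j^2\big(\cL_j^*(H_\cS'^2)-2\cL_j^*(H_\cS')H_\cS'+H_\cS'^2\big)$. To recognise the dissipation function I would invoke Remark \ref{rem:effecthaminvariance}, namely $[\cL_j^*(H_\cS'),H_\cS']=0$, which lets me symmetrise $2\cL_j^*(H_\cS')H_\cS'=\cL_j^*(H_\cS')H_\cS'+H_\cS'\cL_j^*(H_\cS')$ and hence identify the bracket with $T\cD_j(H_\cS',H_\cS')$ via (\ref{def:dissipation}); thus $\partial_{\alpha_j}^2\cL_j^{[\al]*}(\un)\lceil_{\alpha=0}=\beta_j^2T\cD_j(H_\cS',H_\cS')$. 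For the random case only the diagonal survives, and division by $M$ gives (\ref{eq:Lalpharaderiv2}). For the cyclic case I would split on $j=k$ versus $j\neq k$: when $j=k$ both derivatives hit the same factor, producing $\cL_1^*\circ\cdots\circ\cL_{j-1}^*\circ(\partial_{\alpha_j}^2\cL_j^{[\al]*}\lceil_{\alpha=0})\circ\cL_{j+1}^*\circ\cdots\circ\cL_M^*(\un)$, which collapses via unitality $\cL_l^*(\un)=\un$ to the third line of (\ref{eq:Lalphacyderiv2}); when $j\neq k$ the product rule places $\beta_jT\varphi_j$ and $\beta_kT\varphi_k$ in the $j$-th and $k$-th slots, and unitality together with $\varphi_k(\un)=\Phi_k$ (Remark \ref{rem:fluxmap}) and the correct left-to-right ordering of the remaining $\cL_l^*$ factors yields the first two lines.

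Given the reliance on Lemma \ref{lem:Lalphaderiv}, none of the computations is genuinely hard; the one point demanding care is the diagonal second-order term, where naive differentiation produces $-2\cL_j^*(H_\cS')H_\cS'$ rather than the manifestly symmetric combination entering $\cD_j$, so the commutation relation $[\cL_j^*(H_\cS'),H_\cS']=0$ is essential and must be flagged explicitly. The only other point requiring attention is the bookkeeping of which factors are differentiated and which are frozen in the cyclic composition, together with the systematic use of $\cL_l^*(\un)=\un$ to eliminate every factor standing to the right of the last differentiated slot.
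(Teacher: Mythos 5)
Your proposal is correct and follows essentially the same route as the paper: differentiate the single-probe maps via the identity of Lemma \ref{lem:Lalphaderiv}, use $[\cL_j^*(H_\cS'),H_\cS']=0$ (Remark \ref{rem:effecthaminvariance}) to identify the diagonal second derivative with $\beta_j^2 T\,\cD_j(H_\cS',H_\cS')$, and then propagate through the sum/composition structure of $\cL_{ra}^{[\al]*}$ and $\cL_{cy}^{[\al]*}$ using unitality and $\varphi_k(\un)=\Phi_k$. You merely spell out the slot bookkeeping that the paper compresses into ``by straightforward calculation,'' and you correctly flag the one genuinely delicate point, the symmetrisation of $-2\cL_j^*(H_\cS')H_\cS'$.
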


\begin{proof} Using Lemma \ref{lem:Lalphaderiv}, by straightforward calculation we get 
\[
\dr{\al_j} \cL_j^{[\alpha]*}(X) = \beta_j \left( \cL_j^{[\alpha]*}(XH_\cS') -\cL_j^{[\alpha]*}(X)H_\cS' \right)
\] 
and 
\[
\dr{\al_k} \dr{\al_j} \cL_j^{[\alpha]*}(X) = \delta_{jk} \beta_j^2 \left( \cL_j^{[\alpha]*}(XH_\cS'^2) -2\cL_j^{[\alpha]*}(XH_\cS')H_\cS' + \cL_j^{[\alpha]*}(X)H_\cS'^2\right).
\]
Using the definitions (\ref{def:fluxmap}) of $\varphi_j(X)$ and (\ref{def:dissipation}) of $\cD_j(X,Y)$ we thus obtain
\begin{equation}\label{eq:Lalphaderiv}
\dr{\al_j} \cL_j^{[\alpha]*}\lceil_{\al=0}(X) = \beta_jT\varphi_j(X) \quad \mbox{and} \quad \dr{\al_k} \dr{\al_j} \cL_j^{[\alpha]*}\lceil_{\al=0}(\un) = \delta_{jk}\beta_j^2T \cD_j(H_\cS',H_\cS'), 
\end{equation}
where we have used that $[\cL_j^*(H_\cS'),H_\cS']=0$ in the second identity, see Remark \ref{rem:effecthaminvariance}. 

Finally (\ref{eq:Lalpharacyderiv1})-(\ref{eq:Lalphacyderiv2}) follow directly from (\ref{eq:Lalphaderiv}) and (\ref{def:deformedrdmcyra}).
\end{proof}

\medskip

\noindent {\bf End of the proof of} (\ref{first moment})-(\ref{second momentrd}).

\noindent Since $\ds \dr{\al_j} r^\s(0)=\ds \dr{\al_j} E_1^\s(0)$ combining (\ref{eq:E1deriv}) and (\ref{eq:Lalpharacyderiv1}), recall also Remark \ref{rem:fluxmap}, we get (\ref{first moment}). 

We now turn to (\ref{second moment})-(\ref{second momentrd}). Using (\ref{eq:Lalpharacyderiv1}), the first term on the right-hand side of (\ref{eq:Enderivative2}) becomes
\[
I  = \beta_j\beta_k\tau_\s^2\, \oint_{\Gamma} \, (z-1)^{-1} \rho_{+}^\s \left(\varphi_k^\s\, \circ \left(z-\Lj_{\s}^*\right)^{-1} \left(\Phi_j^\s\right)\right)  \frac{{\rm d} z}{2\pi i},
\]
where $\tau_{cy}=T$ and $\tau_{ra}=\frac{T}{M}$. Recall $P_\s^{[0]}$ denotes the eigenprojection of $\cL_\s^*$ associated to the eigenvalue $1$. One has 
\[
\oint_{\Gamma} \, (z-1)^{-1} \rho_{+}^\s \left(\varphi_k^\s\, \circ \left(z-\Lj_{\s}^*\right)^{-1} \circ P_\s^{[0]} \left(\Phi_j^\s\right)\right)  \frac{{\rm d} z}{2\pi i} = \oint_{\Gamma} \, (z-1)^{-2} \rho_{+}^\s \left(\varphi_k^\s\, \circ P_\s^{[0]} \left(\Phi_j^\s\right) \right) \frac{{\rm d} z}{2\pi i} =0.
\]
Hence
\begin{eqnarray*}
I & = & \beta_j\beta_k\tau_\s^2\, \oint_{\Gamma} \, (z-1)^{-1} \rho_{+}^\s \left(\varphi_k^\s\, \circ \left(z-\Lj_{\s}^*\right)^{-1} \circ (Id-P_\s^{[0]}) \left(\Phi_j^\s\right)\right) \frac{{\rm d} z}{2\pi i} \\
 & = &  \beta_j\beta_k\tau_\s^2\, \rho_{+}^\s \left(\varphi_k^\s\, \circ \left(1-\Lj_{\s}^*\right)^{-1} \circ (Id-P_\s^{[0]}) \left(\Phi_j^\s\right)\right),
\end{eqnarray*}
where we have used that $\left(z-\Lj_{\s}^*\right)^{-1} \circ (Id-P_\s^{[0]})$ is regular at $z=1$. Moreover, since the spectral radius of $\cL_\s^*$ restricted to ${\rm Ran}(Id-P_\s^{[0]})$ is strictly less than one due to Assumption (\nameref{ergodicity}) we can write
\[
\left(1-\Lj_{\s}^*\right)^{-1} \circ (Id-P_\s^{[0]}) \left(\Phi_j^\s\right) = \sum_{n=0}^{\infty} \Lj_{\s}^{*n}\left( (Id-P_\s^{[0]})(\Phi_j^{\s})\right) =  \sum_{n=0}^{\infty} \Lj_{\s}^{*n}\left( \Phi_j^\s - \rho_+^\s(\Phi_j^{\s})\right), 
\]
so that
\[
I = \beta_j\beta_k\tau_\s^2\, \sum_{n=0}^{\infty}  \rho_{+}^\s \left(\varphi_k^\s\, \circ \Lj_{\s}^{*n}\left( \Phi_j^\s - \rho_+^\s(\Phi_j^{\s})\right) \right).
\]
Proceeding in the same way with the second term $II$ in (\ref{eq:Enderivative2}) we obtain that $I+II$ indeed corresponds to the infinite sums in (\ref{second moment})-(\ref{second momentrd}).

Finally,  the third term on the right-hand side of (\ref{eq:Enderivative2}) is $\ds III =   \rho_{+}^\s \left(\dr{\al_j}\dr{\al_k} \Lj_{\s}^{[\al]*}\lceil_{\al=0}(\un)\right)$, and using (\ref{eq:Lalpharaderiv2})-(\ref{eq:Lalphacyderiv2}) it is easy to see that it leads to the remaining terms in (\ref{second moment})-(\ref{second momentrd}).


\subsection{Proof of Theorem \ref{flucresults}}\label{prooffr}

The proof of 1)-3) is as direct application of the Gartner-Ellis and Bryc Theorems, see e.g. \cite{DZ,Ellbook,Bryc}, and the following lemma, while (\ref{flucrel}) and (\ref{energyconservation}) are direct consequences of (\ref{ES}) and (\ref{ts}) respectively.

\begin{lemm} \label{ldlemma} Suppose Assumption (\nameref{ergodicity}) holds. Then $e^{\s}(\al)$ is a well defined real analytic function on $\mb{R}^M$ and for any $\al\in\R^M$ and initial state $\rho$ one has
\begin{equation}\label{eq:lcgf_lim}
\lim_{n\to\infty} \frac{1}{n} e_{n,\rho}^{\s}(\al) = e^\s(\al).
\end{equation}
Moreover there exists a neighborhood $B$ of $0$ in $\C^M$ on which (\ref{eq:lcgf_lim}) holds. 
\end{lemm}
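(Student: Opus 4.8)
The plan is to deduce both assertions from the spectral structure of the deformed maps $\Lj_{\s}^{[\al]*}$ recorded in Proposition~\ref{deformedsp} and in the proof of Theorem~\ref{theojpw}, together with Kato's analytic perturbation theory; throughout $\bfzeta$ is fixed and suppressed. The first observation is that $\al\mapsto \Lj_{\s}^{[\al]*}$ extends to an entire $\cB(\cB(\cH_\cS))$-valued function on $\C^M$: each $\rho_{\En_j}$ is a faithful Gibbs state, so $\al_j\mapsto\rho_{\En_j}^{\al_j}=\e^{\al_j\log\rho_{\En_j}}$ is entire, and hence so are $\Lj_j^{[\al]*}$ and, via (\ref{def:deformedrdmcyra}), $\Lj_{\s}^{[\al]*}$.

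For the well-definedness and real-analyticity of $e^\s$ on $\R^M$, note that for real $\al$ the map $\Lj_{\s}^{[\al]*}$ is primitive (Proposition~\ref{deformedsp}), so the Perron--Frobenius argument used in the proof of Theorem~\ref{theojpw} shows that its spectral radius $r^\s(\al)>0$ is a simple eigenvalue isolated from the rest of the spectrum. As a simple isolated eigenvalue of a holomorphic family, $r^\s(\al)$ --- together with its spectral projection --- extends holomorphically to a complex neighbourhood of each point of $\R^M$ by regular perturbation theory \cite{Kato}; thus $\al\mapsto r^\s(\al)$ is real-analytic on $\R^M$ and, since $r^\s>0$, so is $e^\s(\al)=\frac{1}{\tau_\s}\log r^\s(\al)$.

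Since $\frac1n e_{n,\rho}^\s(\al)=\frac{1}{\tau_\s n}\log r_{n,\rho}^\s(\al)$ and $e^\s(\al)=\frac{1}{\tau_\s}\log r^\s(\al)$, the limit (\ref{eq:lcgf_lim}) for real $\al$ amounts to $\frac1n\log r_{n,\rho}^\s(\al)\to\log r^\s(\al)$, which is just the logarithm of (\ref{eq:largetimemgf}); the logarithm is legitimate because primitivity makes $(\Lj_{\s}^{[\al]*})^n(\un)$ positive definite, hence $r_{n,\rho}^\s(\al)=\rho\big((\Lj_{\s}^{[\al]*})^n(\un)\big)>0$, for $n$ large. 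For the complex neighbourhood, recall that at $\al=0$ the map $\Lj_{\s}^*$ has simple dominant eigenvalue $1$ with projection $P_\s^{[0]}(X)=\rho_+^\s(X)\un$. By continuity of the spectrum and holomorphy, there is a neighbourhood $B$ of $0$ in $\C^M$ on which $\Lj_{\s}^{[\al]*}$ keeps a simple eigenvalue $r^\s(\al)$ near $1$, separated from the remaining spectrum by a uniform gap, with holomorphic projection $P_\s^{[\al]}$; writing $R_\s^{[\al]}:=\Lj_{\s}^{[\al]*}(\mathrm{Id}-P_\s^{[\al]})$, whose spectral radius is at most $|r^\s(\al)|-\delta$ uniformly on $B$, gives
\[
r_{n,\rho}^\s(\al)=r^\s(\al)^n\Big(\rho\big(P_\s^{[\al]}(\un)\big)+O(\theta^n)\Big),\qquad \theta\in(0,1),
\]
uniformly on $B$. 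As $\rho(P_\s^{[0]}(\un))=\rho(\un)=1$, after shrinking $B$ both $\rho(P_\s^{[\al]}(\un))$ and $r^\s(\al)$ stay close to $1$, so the principal logarithm applies and $\frac1n\log r_{n,\rho}^\s(\al)\to\log r^\s(\al)$ on $B$, with $e^\s=\frac{1}{\tau_\s}\log r^\s$ holomorphic there.

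The main obstacle is this last step: for complex $\al$ the map $\Lj_{\s}^{[\al]*}$ is no longer positive, so Perron--Frobenius is unavailable and the simple dominant eigenvalue, the uniform spectral gap, and the eigenprojection must all be propagated from $\al=0$ by perturbation theory. One must, in particular, verify the uniformity of the remainder $O(\theta^n)$ and the non-vanishing of $\rho(P_\s^{[\al]}(\un))$ on $B$ so that logarithms can be taken and the limit passed through.
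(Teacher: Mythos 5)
Your proposal is correct and follows essentially the same route as the paper's proof: real analyticity of $e^{\s}$ via Proposition \ref{deformedsp} and regular perturbation theory \cite{Kato}, the limit for real $\al$ by taking logarithms in part 1) of Theorem \ref{theojpw}, and the complex neighbourhood via the uniform spectral gap at $\al=0$ together with the expansion $r_{n,\rho}^{\s}(\al)=r^{\s}(\al)^n\bigl(\rho(P_\s^{[\al]}(\un))+O(\theta^n)\bigr)$. The ``obstacle'' you flag at the end is in fact already resolved by your own argument (and the paper's), since shrinking $B$ keeps $P_\s^{[\al]}(\un)=\un+O(\varepsilon)$ and hence $\rho(P_\s^{[\al]}(\un))$ bounded away from $0$.
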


\begin{proof} The maps $\al\mapsto \cL_\s^{[\al]*}$ are clearly analytic on $\C^M$. Then Assumption (\nameref{ergodicity}) and Proposition \ref{deformedsp} guarantee that, for any $\al\in\R^M$, $r^{\s}(\al)$ is a positive isolated simple eigenvalue of $\cL_\s^{[\al]*}$. Hence regular perturbation theory ensures that $r^{\s}(\al)$ defines a real analytic map on $\R^M$ with positive values. This proves that $e^\s(\al)=\frac{1}{\tau_\s}\log r^{\s}(\al)$ is well defined and real analytic. Eq. (\ref{eq:lcgf_lim}) then follows from 1) in Theorem \ref{theojpw}.

The extension to a complex neighborhood of $0$ follows also by a standard perturbation theory argument, see also \cite{JPW}. Indeed, $\cL_\s^{[0]*}$ has a simple dominant eigenvalue $r^\s(0)=1$ so there exists $\delta,\varepsilon>0$ such that $\sp\left(\cL_\s^{[\al]*}\right)\setminus \{r^\s(\al)\} \subset \big\{ z\in \C\, \big| \, |z|<|r^\s(\al)|-\delta \big\}$ for any $\al\in\C^M$, $|\al|<\varepsilon$. Hence for such $\alpha$'s we have 
\[
r_{n,\rho}^\s(\al) = \rho\left( \left(\Lj_{\s}^{[\al]*}\right)^n (\un) \right) =  r^\s(\al)^n \left[ \rho\left(P_\s^{[\al]}(\un)\right) + O\left( \left(1-\frac{\delta}{|r^\s(\al)|}\right)^n \right) \right],
\]
where $P_\s^{[\al]}$ denotes the eigenprojection of $\cL_\s^{[\al]*}$ associated to $r^\s(\al)$. In particular $P_\s^{[\al]}(\un) = P_\s^{[0]}(\un)+O(\varepsilon) = \un+O(\varepsilon)$,  see Section \ref{proofmoments}, so that for all $n$
\[
r_{n,\rho}^\s(\al) =   r^\s(\al)^n \left[ 1+O(\varepsilon) + O\left( \left(1-\frac{\delta}{|r^\s(\al)|}\right)^n \right) \right]
\]
which proves the result.
\end{proof}


\end{document}